\documentclass[11pt]{article}

\input{glyphtounicode}
\usepackage[T1]{fontenc}
\usepackage[utf8]{inputenc}
\usepackage{lmodern}
\usepackage[margin=1in]{geometry}
\usepackage{setspace}
\usepackage{amsmath,amssymb,amsthm,mathtools}
\usepackage{aliascnt}
\usepackage{bm}
\usepackage{microtype}
\usepackage{enumitem}
\usepackage{booktabs}
\usepackage{graphicx}
\usepackage{float}
\usepackage{array}
\usepackage{xcolor}
\usepackage{tikz}
\usetikzlibrary{arrows.meta,positioning}
\definecolor{ectaLinkRed}{RGB}{170,0,0}
\usepackage[authoryear,round,longnamesfirst]{natbib}
\usepackage{hyperref}
\usepackage{bookmark}
\usepackage[nameinlink,noabbrev]{cleveref}

\hypersetup{
  colorlinks=true,
  linkcolor=black,
  citecolor=black,
  urlcolor=black,
  filecolor=black,
  hypertexnames=false,
  pdftitle={When Does Social Discounting Favor the Young? Welfare Comparisons in Heterogeneous Economies},
  pdfauthor={I. Sebastian Buhai},
  pdfsubject={Social discounting and welfare comparisons in heterogeneous economies},
  pdfkeywords={social discounting, welfare weights, heterogeneous agents, age comparisons, overlapping generations, common support, graph rationalizability}
}
\newtheorem{theorem}{Theorem}[section]

\newaliascnt{proposition}{theorem}
\newtheorem{proposition}[proposition]{Proposition}
\aliascntresetthe{proposition}

\newaliascnt{corollary}{theorem}
\newtheorem{corollary}[corollary]{Corollary}
\aliascntresetthe{corollary}

\newaliascnt{lemma}{theorem}
\newtheorem{lemma}[lemma]{Lemma}
\aliascntresetthe{lemma}

\theoremstyle{definition}
\newaliascnt{definition}{theorem}
\newtheorem{definition}[definition]{Definition}
\aliascntresetthe{definition}

\newaliascnt{assumption}{theorem}
\newtheorem{assumption}[assumption]{Assumption}
\aliascntresetthe{assumption}

\newaliascnt{example}{theorem}

\aliascntresetthe{example}

\theoremstyle{remark}
\newaliascnt{remark}{theorem}
\newtheorem{remark}[remark]{Remark}
\aliascntresetthe{remark}

\crefname{section}{section}{sections}
\Crefname{section}{Section}{Sections}
\crefname{subsection}{subsection}{subsections}
\Crefname{subsection}{Subsection}{Subsections}
\crefname{theorem}{theorem}{theorems}
\Crefname{theorem}{Theorem}{Theorems}
\crefname{proposition}{proposition}{propositions}
\Crefname{proposition}{Proposition}{Propositions}
\crefname{corollary}{corollary}{corollaries}
\Crefname{corollary}{Corollary}{Corollaries}
\crefname{lemma}{lemma}{lemmas}
\Crefname{lemma}{Lemma}{Lemmas}
\crefname{definition}{definition}{definitions}
\Crefname{definition}{Definition}{Definitions}
\crefname{assumption}{assumption}{assumptions}
\Crefname{assumption}{Assumption}{Assumptions}
\crefname{example}{example}{examples}
\Crefname{example}{Example}{Examples}
\crefname{remark}{remark}{remarks}
\Crefname{remark}{Remark}{Remarks}

\newcommand{\RR}{\mathbb{R}}
\newcommand{\EE}{\mathbb{E}}

\newcommand{\SWF}{\mathcal{W}}
\newcommand{\Age}{\mathcal{A}}
\newcommand{\State}{\mathcal{Z}}
\newcommand{\Pop}{\mathcal{X}}

\newcommand{\CompSpace}{\mathcal{U}}
\newcommand{\Mterm}{\mathcal{M}}
\newcommand{\Rterm}{\mathcal{R}}
\newcommand{\Qterm}{\mathcal{Q}}
\newcommand{\PVEU}{\mathrm{PVEU}}

\newcommand{\dist}{\mu}
\newcommand{\ndw}{\eta}
\newcommand{\smw}{\omega}
\newcommand{\contv}{\upsilon}
\newcommand{\Comp}{\chi}
\newcommand{\Spread}{\mathfrak{G}}
\newcommand{\DemDist}{\mathfrak{D}}

\newcommand{\WedgeTerm}{\Lambda}
\newcommand{\BridgeCorr}{\mathfrak{B}}
\newcommand{\CWedgeTerm}{\widetilde{\Lambda}}
\newcommand{\dd}{\mathrm{d}}

\newcommand{\esssup}{\operatorname*{ess\,sup}}
\newcommand{\essinf}{\operatorname*{ess\,inf}}

\newcommand{\BaselineMatchMean}{4.40\times 10^{-4}}

\newcommand{\BaselineOverlapOverall}{1.000000}

\newcommand{\BaselineStrictInteriorShare}{0.8440}
\newcommand{\BaselineBoundaryLoadingShare}{0.1560}
\newcommand{\BaselineImplementedWedgeMean}{1.0119}
\newcommand{\BaselineImplementedWedgePNinetyFive}{1.0839}

\newcommand{\LiquidIlliquidExactOverlap}{0.8151}
\newcommand{\LiquidIlliquidTercileOverlap}{0.8946}

\newcommand{\LiquidIlliquidTercileMeanKappaPlp}{6.71}
\newcommand{\LiquidIlliquidTercileMaxKappaPlp}{28.47}
\newcommand{\LiquidIlliquidTercilePointDelta}{-0.2706}
\newcommand{\LiquidIlliquidEarningsOnlyPointDelta}{1.7445}
\newcommand{\LiquidIlliquidAgeOnlyPointDelta}{0.9560}
\newcommand{\LiquidIlliquidTercileGraphLower}{-3.2369}
\newcommand{\LiquidIlliquidTercileGraphUpper}{6.7702}
\newcommand{\LiquidIlliquidEarningsOnlyGraphLower}{-4.1160}
\newcommand{\LiquidIlliquidEarningsOnlyGraphUpper}{7.9730}
\newcommand{\LiquidIlliquidAgeOnlyGraphLower}{-1.5240}
\newcommand{\LiquidIlliquidAgeOnlyGraphUpper}{6.9176}

\newcommand{\MarginAuditMaxDeltaLogQ}{0.00000}
\newcommand{\MarginAuditRowMismatchShare}{0.0000}
\newcommand{\MarginAuditMeanDeltaLogLambda}{0.00092}
\newcommand{\MarginAuditPninetyFiveDeltaLogLambda}{0.00649}

\newcommand{\draftnote}[1]{}

\newcommand{\PublicTitleNote}{%
All replication files are available in the project's GitHub repository at \url{https://github.com/sbuhai/current-cell-matching-project}. The repository contains the code, generated table sources, and input files needed to reproduce the quantitative application. No restricted data are used. I have used \href{https://refine.ink}{Refine.ink}, developed by Ben Golub and Yann Calv\'o L\'opez, to assist with notational consistency and proof clarity. An earlier version of this paper was entitled "Who Counts as Young? Matching Rules for Social Discounting".%
}

\title{When Does Social Discounting Favor the Young? \\ Welfare Comparisons in Heterogeneous Economies\thanks{\PublicTitleNote}}
\author{%
I. Sebastian Buhai\thanks{Email: \href{mailto:sebastian.buhai@sofi.su.se}{sebastian.buhai@sofi.su.se}. Website: \href{https://www.sebastianbuhai.com}{sebastianbuhai.com}.}\\[0.4em]
SOFI, Stockholm University\\
Instituto de Econom\'ia, UC Chile\\
NIPE, University of Minho}
\date{First version: June 2, 2026. This version: July 11, 2026. \href{https://www.sebastianbuhai.com/papers/publications/current_cell_matching.pdf}{Latest version}.}

\makeatletter
\let\WP@maketitle\maketitle
\renewcommand{\maketitle}{%
  \begingroup
    \renewcommand{\@fnsymbol}[1]{\@arabic{##1}}%
    \WP@maketitle
  \endgroup
  \setcounter{footnote}{2}
}
\makeatother

\begin{document}

\maketitle

\begin{abstract}
Social discounting is often interpreted as ranking young and old agents. In a heterogeneous economy, however, a social discount schedule does not determine a unique age ranking. The sign can reverse with the states being compared and may remain unidentified when common support is disconnected. I characterize when local young-old comparisons isolate age-related social priority rather than differences in private marginal values, constraints, or units. A coherent system of normalized welfare weights exists if and only if every supported path between the same states implies the same ratio. In a calibrated life-cycle economy, every retained local comparison favors the young, yet the aggregate ranking favors the old under one component normalization, favors the young on the largest shared component, and is unidentified when component scales are unrestricted.

\medskip
\noindent\textbf{Keywords.} Social discounting; welfare weights; heterogeneous agents; age comparisons; overlapping generations; common support; graph rationalizability

\smallskip
\noindent\textbf{JEL codes.} D61, D63, D15, D52, E21, H43, C63
\end{abstract}

\clearpage

\section{Introduction}\label{sec:introduction}

Does a lower social discount rate imply that society places greater weight on the young? In a representative-agent model, or in an overlapping-generations model in which age is the only state, the question has a well-defined answer. In a heterogeneous economy it does not. A marginal social welfare weight is attached to an economic state, not to an age label. Young and old agents differ in assets, earnings, constraints, health, and private marginal values. The inferred age ranking therefore depends on which states are compared and on the unit in which the comparison is made.

This paper establishes three results. First, a social discount schedule alone does not identify a welfare ranking of young and old agents. Coarse state matching can reverse the sign of the normalized young-old comparison even when the discount schedule and the within-person intertemporal tradeoff are held fixed. Second, supported local comparisons admit one coherent system of normalized welfare weights only when every route between the same states implies the same ratio. Third, disconnected support identifies those weights only within connected components. An aggregate age ranking then requires a normalization across components, and its sign can change with that normalization.

The key decomposition separates social valuation from private marginal value. Let $\smw_t(x)$ denote the planner's marginal value of one current resource unit at state $x$, and let $\contv_t(x)$ denote the individual's private marginal value of that unit. Their ratio
\[
\ndw_t(x)=\frac{\smw_t(x)}{\contv_t(x)}
\]
is a normalized welfare weight. Comparing $\ndw_t$ across ages isolates social priority only when the states also agree on the non-age information needed to price the policy margin. For one-period saving, that information determines the continuation payoff, market compensation, and active constraint wedge. The matching rule is therefore derived from the policy margin rather than imposed by the age labels.

Relative to \citet{Eden2023}, the contribution is an exact characterization of when the cross-sectional age implication of social discounting survives heterogeneity. Eden's benchmark is the right starting point once the relevant young-old comparison is defined. In a heterogeneous economy it becomes a family of possible comparisons. Age matching isolates age-related social priority only for states that share the policy-relevant current information and the same private marginal value. Otherwise the comparison remains economically meaningful, but it combines age with state or unit differences.

Each supported young-old comparison implies a ratio of normalized welfare weights. The ratio combines the social discount schedule, the market compensation factor for the policy margin, and any constraint or continuation-value correction. These local ratios can be aggregated only if they are path independent. All supported routes between the same two states must agree, or equivalently the implied ratios must multiply to one around every closed chain. When this condition holds, normalized welfare weights are identified up to one positive constant in each connected component. The network representation makes the identification failure from disconnected support explicit.

The calibrated application shows that these distinctions change the economic conclusion. The focal comparison contrasts ages $25$ to $34$ with ages $45$ to $54$ at a common current-earnings state. Age-only comparisons do not identify a sign. Matching states on the one-period saving margin and fixing the normalization across components yields an interval from $-0.00819$ to $-0.00298$, so the older target receives the larger normalized weight. On the largest component shared by both targets, the interval is instead positive. If component normalizations are unrestricted, the aggregate comparison is unidentified. In the application, every retained local comparison favors younger agents, yet the aggregate ranking is negative under one cross-component normalization and positive within the shared component.

The sign change is an identification result rather than a numerical artifact. The finite program is feasible, and its duplicate-link, cycle, band, and relaxation residuals are zero at the reported precision. Support and normalization, not numerical failure, account for the different rankings. Alternative private units, constraint treatments, and richer state spaces show which conclusions are tied to the maintained welfare object.

The issue extends beyond age. Welfare comparisons between demographic groups compare heterogeneous states. Whenever a group label suppresses state variables that price the policy margin, raw welfare weights combine group priority with private marginal value, constraints, and support. The results therefore provide a general test for whether local group comparisons share a coherent welfare interpretation.

The paper is closest to two literatures. Work on social discounting studies how intergenerational weights shape cross-sectional and dynamic welfare comparisons \citep{CaplinLeahy2004,FarhiWerning2007,FengKe2018,Eden2023,MillnerHeal2023}. I characterize the conditions under which the age comparison used in that literature remains identified with heterogeneous states. Welfare-accounting methods evaluate specified perturbations under a chosen money metric or welfare-weight system \citep{BaqaeeBursteinKoikeMoriQJE2024,DavilaSchaab2025,BarconsDavilaSchaab2026}. Here the prior question is which states, unit, and support make an age comparison a valid input into that exercise. The common private scale is an interpersonal-comparability convention in the sense of \citet{Harsanyi1955}, while the support requirement parallels common-support restrictions in matching and program evaluation \citep{HeckmanIchimuraTodd1997,ImbensWooldridge2009}.

\Cref{sec:benchmark} states the main economic results before the general environment. \Cref{sec:environment,sec:welfare} define local welfare weights and comparable states. \Cref{sec:stochastic-mr,sec:nonpaternalism,sec:age-blindness} derive the local decomposition, market valuation, and coherence restrictions. \Cref{sec:illustrations} presents the calibrated application. Technical extensions and proofs are collected in the appendices.


\section{Why a Social Discount Rate Does Not Rank Ages}\label{sec:benchmark}

The identification problem is visible before the general model is introduced. A social discount schedule compares a current old-age resource unit with the future old-age payoff reached by a currently young person. To turn that intertemporal comparison into a current young-old ranking, one must also specify the current states being compared and the unit used across them.

\subsection{Three welfare objects}

Let $\smw_t(x)$ be the social value of one current resource unit, $\contv_t(x)$ the private marginal value of that unit, and $\ndw_t(x)=\smw_t(x)/\contv_t(x)$ the normalized welfare weight. These objects answer different questions. \Cref{tab:object-taxonomy} separates them.

\begin{table}[htbp]
\centering
\caption{Three local welfare comparisons}
\label{tab:object-taxonomy}
\footnotesize
\renewcommand{\arraystretch}{1.08}
\resizebox{0.98\textwidth}{!}{%
\begin{tabular}{>{\raggedright\arraybackslash}p{0.23\textwidth}>{\raggedright\arraybackslash}p{0.30\textwidth}>{\raggedright\arraybackslash}p{0.20\textwidth}>{\raggedright\arraybackslash}p{0.27\textwidth}}
\toprule
Object & Comparison & What is held fixed? & Interpretation \\
\midrule
Resource welfare weight $\smw$ & Social value of one unit of current goods & Nothing beyond the state itself & Combines social priority with private marginal value. \\
Normalized welfare weight $\ndw=\smw/\contv$ & Social value per unit of private marginal value & A common private scale & Removes private marginal-value levels but remains indexed by the chosen scale. \\
Matched resource comparison & Resource weights across ages & Policy-relevant current information and $\contv$ & Isolates age-related social priority for the maintained policy margin and support. \\
\bottomrule
\end{tabular}%
}
\end{table}

The third object is the focus of the paper. A comparison is exact when the young and old states share the non-age information needed to value the policy margin and have the same private marginal value. I call the corresponding set a matched set, and use the formal term comparison fiber after the general environment is introduced. The economic requirement is that the comparison hold fixed everything that would otherwise change the value of the marginal experiment apart from age.

\subsection{State matching and sign reversal}

A simple example shows why age alone is insufficient. Let $o$ be an old state and let $y_L,y_H$ be two young states of the same age. Suppose
\[
(\smw_t,\contv_t,\ndw_t)(o)=(6,2,3),\qquad
(\smw_t,\contv_t,\ndw_t)(y_L)=(8,4,2),\qquad
(\smw_t,\contv_t,\ndw_t)(y_H)=(4,1,4).
\]
In resource units, matching $o$ with $y_L$ favors the young, while matching $o$ with $y_H$ favors the old. In normalized units, the first comparison favors the old and the second favors the young. The ranking changes before any ethical judgment about age is altered.

For a local policy margin, let $\mathcal D^S$ denote the social discount ratio and let $\Rterm^S$ denote the within-person intertemporal comparison along the younger agent's continuation. When the current states are exactly matched,
\begin{equation}\label{eq:preview-exact-decomposition}
\log\frac{\ndw_t(x_t^{\mathrm y})}{\ndw_t(x_t^{\mathrm o})}
=
\log\Rterm^S-\log\mathcal D^S.
\end{equation}
A coarser comparison adds the private-value gap,
\begin{equation}\label{eq:preview-coarse-decomposition}
\log\frac{\ndw_t(x_t^{\mathrm y})}{\ndw_t(x_t^{\mathrm o})}
=
\log\Rterm^S-\log\mathcal D^S
+
\log\frac{\contv_t(x_t^{\mathrm o})}{\contv_t(x_t^{\mathrm y})}.
\end{equation}
The additional term is not approximation error. It is a different welfare object.

Under the baseline within-person valuation restriction derived in \Cref{sec:nonpaternalism}, $\Rterm^S=\Qterm\WedgeTerm$, where $\Qterm$ is market compensation for the policy margin and $\WedgeTerm\ge1$ is the constraint wedge. Exact matching therefore gives
\[
\log\frac{\ndw_t(x_t^{\mathrm y})}{\ndw_t(x_t^{\mathrm o})}
=
\log\Qterm-\log\mathcal D^S+\log\WedgeTerm.
\]

\begin{corollary}[When lower social discounting favors the young]\label{cor:trilemma}
Fix a supported system of exactly matched comparisons, with the induced measure defined in Appendix~\ref{app:technical-foundations}. The following statements cannot all hold: the social discount ratio is no greater than market compensation on every comparison and is strictly smaller on a positive-measure set; the planner respects the individual's ranking of the within-person payoff; and normalized welfare weights are equal across the matched young and old states. If the latter two statements hold, then $\mathcal D^S=\Qterm\WedgeTerm\ge\Qterm$ almost everywhere.
\end{corollary}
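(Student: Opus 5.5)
The plan is to argue by contradiction from the exact-matching decomposition \eqref{eq:preview-exact-decomposition}, combined with the baseline within-person valuation restriction $\Rterm^S=\Qterm\WedgeTerm$ with $\WedgeTerm\ge1$ from \Cref{sec:nonpaternalism}.

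First, I would fix notation. Each comparison $s$ in the supported system of exactly matched comparisons carries its own triple $(\mathcal D^S,\Qterm,\WedgeTerm)(s)$. The induced measure from Appendix~\ref{app:technical-foundations} is the measure on this index set of comparisons.

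Next, I would read the second statement as follows: the planner respects the individual's ranking of the within-person payoff. I would take this to mean exactly the hypothesis under which \Cref{sec:nonpaternalism} delivers $\Rterm^S=\Qterm\WedgeTerm$ on every comparison. I would then check that the delivered wedge satisfies $\WedgeTerm\ge1$; this is the step I expect to be delicate. The argument is that $\WedgeTerm$ is the ratio of the private marginal value of the current unit to the market-compensated continuation value. It is at least one because a borrowing-type constraint can only make the current unit weakly more valuable than its compensated continuation. With equality in \eqref{eq:preview-exact-decomposition}, this gives
\[
\log\frac{\ndw_t(x_t^{\mathrm y})}{\ndw_t(x_t^{\mathrm o})}=\log\Qterm-\log\mathcal D^S+\log\WedgeTerm
\]
pointwise, or at least almost everywhere, on the comparison set.

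Then I would impose the third statement, equal normalized weights across matched young and old states. This sets the left side to zero a.e., which forces $\mathcal D^S=\Qterm\WedgeTerm$ a.e. Since $\WedgeTerm\ge1$, it follows that $\mathcal D^S\ge\Qterm$ a.e., which is the final sentence of the corollary.

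The first statement says $\mathcal D^S\le\Qterm$ everywhere and $\mathcal D^S<\Qterm$ on a set of positive measure. This contradicts $\mathcal D^S\ge\Qterm$ a.e., because a positive-measure set cannot be contained in a null set. So the three statements cannot all hold. As a by-product, the argument shows that on the set where the first statement's inequality holds, the two conditions force $\mathcal D^S=\Qterm$ and $\WedgeTerm=1$.

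The main obstacle is measure-theoretic bookkeeping rather than algebra. The decomposition and the restriction $\Rterm^S=\Qterm\WedgeTerm$ must hold on a full-measure subset of the comparison set under the induced measure. This requires that the matched fibers are measurable and that the log terms are finite, so that $\contv_t>0$ and $\Qterm>0$. I would take these from the support and positivity conditions in Appendix~\ref{app:technical-foundations}. The "almost everywhere" qualifiers would then be intersected over countably many null sets.
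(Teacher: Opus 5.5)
Your proposal is correct and follows essentially the paper's argument: combine the exact decomposition with the nonpaternalistic restriction $\Rterm^S=\Qterm\WedgeTerm$ (the paper packages this as \Cref{thm:impossibility}), use neutrality to set the left side to zero so that $\mathcal D^S=\Qterm\WedgeTerm\ge\Qterm$ almost everywhere, and observe that this contradicts strict inequality on a set of positive measure. The step you flag as delicate is immediate, since $\WedgeTerm=[1-\Xi/\contv_t(x_t^{\mathrm y})]^{-1}\ge1$ follows directly from $\Xi\in[0,\contv_t(x_t^{\mathrm y}))$ in \Cref{ass:supporting-prices} via \Cref{prop:implementability}.
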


Thus discounting below market compensation favors the younger state under exact matching. The next result shows why this conclusion does not survive a coarser age comparison.

\begin{proposition}[Coarse matching can reverse an age ranking]\label{prop:coarse-ambiguity}
For any $\gamma>0$ and $\delta>\gamma$, there is an admissible local environment and a coarse current comparison with common policy-relevant state information such that
\[
\log \Rterm^S-\log\mathcal D^S=\gamma,
\qquad
\log\frac{\contv_t(x_t^{\mathrm o})}{\contv_t(x_t^{\mathrm y})}=-\delta,
\]
and therefore
\begin{equation}\label{eq:coarse-ambiguity}
\log\frac{\ndw_t(x_t^{\mathrm y})}{\ndw_t(x_t^{\mathrm o})}=\gamma-\delta<0.
\end{equation}
An exact comparison with the same $\Rterm^S$ and $\mathcal D^S$ instead gives the positive log ratio $\gamma$.
\end{proposition}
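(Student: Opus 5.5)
The proof is a direct construction followed by the coarse decomposition \eqref{eq:preview-coarse-decomposition}. I will build a two-period local environment with one young state $x_t^{\mathrm y}$, one old state $x_t^{\mathrm o}$, and a continuation state $x_{t+1}^{\mathrm y}$ reached by the young agent. Both current states share the policy-relevant current information: the same continuation payoff technology, the same market compensation $\Qterm$, and no active constraint, so $\WedgeTerm=1$. They are allowed to differ only in private marginal value. This makes the comparison ``coarse with common policy-relevant information'' in the sense of the statement, and it is not exact because $\contv_t$ differs.

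\emph{Step 1 (fixing the intertemporal part).} Under the baseline restriction, $\Rterm^S=\Qterm\WedgeTerm=\Qterm$. I will choose $\Qterm=1$, which gives gross return one, and set the social discount ratio $\mathcal D^S=e^{-\gamma}$. Then $\log\Rterm^S-\log\mathcal D^S=\gamma$. Since $\gamma>0$, this is also consistent with \Cref{cor:trilemma}: discounting below market compensation favors the young under exact matching.

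\emph{Step 2 (fixing private values).} I will pick private utilities, for example CRRA or linear with state-dependent scale, so that $\contv_t(x_t^{\mathrm y})=e^{\delta}$ and $\contv_t(x_t^{\mathrm o})=1$. This yields $\log\bigl(\contv_t(x_t^{\mathrm o})/\contv_t(x_t^{\mathrm y})\bigr)=-\delta$. I will then define social values $\smw_t$ at the three states so that they reproduce the social discount ratio $\mathcal D^S$ along the young agent's continuation. Normalizing $\smw_t(x_t^{\mathrm o})=1$, I will set $\smw_t(x_t^{\mathrm y})$ from the planner's valuation of the continuation payoff.

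\emph{Step 3 (substitution).} Substituting into \eqref{eq:preview-coarse-decomposition}, or computing $\ndw=\smw/\contv$ directly, gives $\gamma-\delta$, which is negative because $\delta>\gamma$. For the exact comparison, I replace the old state by one whose private value equals $e^{\delta}$ while keeping $\Rterm^S$ and $\mathcal D^S$ unchanged. The private-value gap term then vanishes, and \eqref{eq:preview-exact-decomposition} gives $\gamma$.

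\emph{Main obstacle.} The step I expect to be hardest is verifying admissibility. The private marginal value must be able to differ across the two current states while the policy-relevant current information, which governs continuation, compensation and wedge, stays common. The chosen values must also satisfy the paper's standing assumptions, namely positivity, support, and the within-person valuation restriction. My first attempt will be to put the difference in an individual-specific utility scale or level that the maintained policy information does not record. Such a scale multiplies $\contv_t$ without affecting $\Qterm$ or $\WedgeTerm$. I would then check positivity, support, and the within-person valuation restriction against the definitions in Appendix~\ref{app:technical-foundations}.
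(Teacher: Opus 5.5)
Your construction is correct and is essentially the paper's argument: the paper normalizes $\smw_s(x_s^{\mathrm o})=\smw_t(x_t^{\mathrm o})=\contv_t(x_t^{\mathrm y})=1$, sets $\smw_t(x_t^{\mathrm y})=e^{\gamma}$ and $\contv_t(x_t^{\mathrm o})=e^{-\delta}$, and reads $\gamma-\delta$ off the coarse identity \eqref{eq:app-coarse-identity}, so it puts $\gamma$ into $\Rterm^S$ with $\mathcal D^S=1$, whereas you put it into $\mathcal D^S=e^{-\gamma}$ with $\Rterm^S=\Qterm=1$. The paper does no further admissibility verification, so your nonpaternalistic market layer is optional; if you keep it, the cleanest source of the private-value gap is different resources within one transition-row class (as in \Cref{cor:one-asset-map}) rather than a state-dependent utility scale, which sits awkwardly with \Cref{ass:common-cardinal}.
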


The proposition isolates the source of reversal. A social discount schedule and a within-person intertemporal comparison can imply greater normalized weight on the young under exact matching, while the same objects imply the opposite ranking after private marginal values are allowed to differ.

\subsection{Coherent welfare weights}

Represent each supported young-old comparison by a directed link $e=(x^{\mathrm y},x^{\mathrm o})$. Different continuation states or market implementations can support the same current comparison. Let $\Pi_e$ collect those verified implementations. Each $\pi\in\Pi_e$ implies the log ratio
\[
\bar g(\pi)=\log\Qterm(\pi)-\log\bar{\mathcal D}^S(\pi)+\log\CWedgeTerm(\pi),
\]
where $\Qterm$ is market compensation for the policy margin and $\CWedgeTerm$ collects the active constraint wedge and any declared continuation adjustment. The next theorem gives the condition under which these local ratios belong to one welfare-weight system.

\begin{theorem}[Coherent normalized welfare weights]\label{prop:imposed-discount-rationalization}
Fix a policy margin, the state information held fixed across ages, a common private unit, and a within-person valuation restriction. Construct the finite supported comparison network. For each directed link $e=(x^{\mathrm y},x^{\mathrm o})$, let $\Pi_e$ be the set of verified implementations. An imposed local schedule $\bar{\mathcal D}^S$ is consistent with positive normalized welfare weights if and only if two conditions hold. First, all implementations of the same link imply the same value of $\bar g$. Second, every closed chain has zero signed label sum, counting a link traversed in reverse with the opposite sign. When these conditions hold, there exists $\widetilde\eta>0$ such that
\[
\log\widetilde\eta(x^{\mathrm y})-\log\widetilde\eta(x^{\mathrm o})=\bar g(e)
\]
for every edge. The solution is unique up to one multiplicative constant in each connected component. Relative levels across disconnected components, and target comparisons that load differently on those components, are not identified without additional normalizations.
\end{theorem}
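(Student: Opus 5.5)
The plan is to reduce the statement to the standard potential (cocycle) characterization on a finite directed graph. I will take the supported comparison network to be the finite directed multigraph $G$ whose vertices are the supported states and whose links are the pairs $e=(x^{\mathrm y},x^{\mathrm o})$, with parallel copies indexed by $\Pi_e$.

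The first step is the bridge to the welfare objects. By the exact decomposition \eqref{eq:preview-exact-decomposition} and the within-person valuation restriction $\Rterm^S=\Qterm\WedgeTerm$ (with declared continuation adjustments absorbed into $\CWedgeTerm$), positive normalized weights $\widetilde\eta$ are consistent with the imposed schedule $\bar{\mathcal D}^S$ exactly when every implementation $\pi\in\Pi_e$ satisfies
\[
\log\widetilde\eta(x^{\mathrm y})-\log\widetilde\eta(x^{\mathrm o})=\bar g(\pi).
\]
So consistency is equivalent to solvability of a linear system in $\phi=\log\widetilde\eta\in\RR^{V}$: one equation per verified implementation. Positivity of $\widetilde\eta$ is automatic once $\phi$ is real-valued.

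Necessity is immediate. Two implementations of the same link have the same left-hand side, so their $\bar g$ values must coincide. For a closed chain, sum $\phi(\text{head})-\phi(\text{tail})$ along it, with a reverse traversal contributing the negative of the label. The sum telescopes to zero, so the signed label sum must vanish.

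Sufficiency comes next. The first condition collapses each $\Pi_e$ to a single label $\bar g(e)$, giving a simple edge-labelled graph. In each connected component $C$, fix a root $r_C$ and set $\phi(r_C)=0$. For any vertex $v$, choose an undirected path from $r_C$ to $v$ and define $\phi(v)$ as the signed label sum along it, with orientation chosen so that an edge traversed from $x^{\mathrm o}$ to $x^{\mathrm y}$ adds $\bar g(e)$. This is well defined because two paths to $v$ concatenate into a closed chain with zero signed sum by the second condition. Each edge equation then holds, by appending that edge to a path to one of its endpoints. I expect the main care to be here, in bookkeeping the sign convention for reverse traversal and in handling parallel implementations before invoking the cycle condition; the argument itself is routine.

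For uniqueness, the difference of two solutions satisfies the homogeneous system: it is constant across every edge and hence on each component. So $\widetilde\eta$ is unique up to one multiplicative constant $e^{c_C}$ per component, and these constants are free. A relative level across two components, or any target comparison defined as an aggregate whose weights load differently on the components, changes with $(c_C)_C$ while all edge equations still hold. I will exhibit the non-identification by shifting one component's $c_C$, which moves such a target while preserving consistency. Identification therefore requires an additional cross-component normalization.
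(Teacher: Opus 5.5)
Your proposal is correct and follows essentially the same route as the paper. Necessity comes from telescoping the edge equations around closed chains, with parallel implementations sharing one left-hand side. Sufficiency comes from collapsing parallel labels, anchoring each connected component, and defining $\log\widetilde\eta$ through signed path sums that the cycle condition makes path-independent, which also gives uniqueness up to one constant per component; your explicit handling of the reverse-traversal sign convention and of cross-component non-identification (shifting one component's constant) only spells out steps the paper's proof leaves implicit.
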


The theorem separates coherence from normalization. Closed-chain consistency determines whether the local comparisons come from one welfare-weight system. Connected support identifies every ratio within a component. Local comparisons alone cannot identify the relative scale of disconnected components.

\subsection{Quantitative implication}

The calibrated application makes the distinction visible. Target $A$ contains ages $25$ to $34$ in the central current-earnings state, and target $B$ contains ages $45$ to $54$ on the same one-period saving margin. Positive values favor the younger target. \Cref{tab:main-diagnostic-audit} reports the central result.

\begin{table}[htbp]
\centering
\caption{How matching and support change the age ranking}
\label{tab:main-diagnostic-audit}
\footnotesize
\renewcommand{\arraystretch}{1.06}
\begin{tabular}{>{\raggedright\arraybackslash}p{0.43\textwidth}cc}
\toprule
Maintained comparison & Identified interval & Implication \\
\midrule
Age only, full support & $[-0.04044,\ 0.15093]$ & No age ranking \\
Matched states, fixed component normalization & $[-0.00819,\ -0.00298]$ & Favors the older target \\
Matched states, unrestricted component normalization & $[-\infty,\ +\infty]$ & Unidentified \\
Largest component shared by both targets & $[0.00004,\ 0.00082]$ & Favors the younger target \\
\bottomrule
\end{tabular}
\begin{minipage}{0.94\textwidth}
\footnotesize Notes. Intervals are normalized welfare-weight contrasts. Positive values favor target $A$, the younger group. The fixed-normalization and largest-component rows apply the same local comparison rule but differ in the support over which levels are compared.
\end{minipage}
\end{table}

The age-only comparison is uninformative. Exact matching recovers informative local ratios, but aggregation still depends on the support that connects the two targets. The fixed-normalization result is negative because the targets load differently across components, even though every retained local comparison tilts toward younger agents. The shared component gives the opposite sign, and unrestricted component scales leave the aggregate ranking unidentified. A social discount rate therefore does not rank ages until the state match, unit, support, and cross-component normalization have been specified.

For the one-period saving application, the matched state information is the current earnings transition row, together with the private marginal value of current resources. \Cref{sec:nonpaternalism} derives this rule from the household problem. Appendix~\ref{app:technical-foundations} records analogous verification requirements for risky assets, labor margins, and liquid-illiquid saving.


\section{Environment}\label{sec:environment}

The formal analysis needs a cross section of heterogeneous agents, local resource perturbations, and the continuation opportunities available to each current state. This section defines those objects.

\subsection{Demographic structure and uncertainty}

Time is discrete, $t\in\mathbb Z_+$. The benchmark aggregate path is deterministic; uncertainty indexes idiosyncratic histories. Each period a cohort is born. Ages are $a\in\Age=\{0,1,\ldots,A\}$. An individual's state is $z\in\State$, a Borel subset of Euclidean space. This may include assets, productivity, health, promised utility, or any state needed for private choices and social comparisons. A state point is $x=(a,z)\in\Pop:=\Age\times\State$. I use \emph{cell} as shorthand when referring to a point on a finite grid.

At date $t$, the living cross section is the probability measure $\dist_t$ on $\Pop$. The benchmark path is $\{\dist_t,c_t,p_t\}_{t\ge0}$, where $c_t(x)$ is the private allocation and $p_t$ collects prices, taxes, transfers, and aggregate objects.

\subsection{Private behavior and continuation opportunities}

Individuals optimize given benchmark prices, policies, and constraints. For $s\ge t$ and current state $x$, let $\Gamma_{t,s}(x)\subseteq\Pop$ be the support of the date-$s$ state conditional on survival to $s$, and set it to $\emptyset$ when survival has zero probability. Survival probabilities are included in the weights on future payoffs.

For $y\in\Gamma_{t,s}(x)$, the attainable set
\[
\mathcal T_{t,s}(x,y)\subseteq\RR_+^2
\]
contains pairs $(\delta_t,\delta_s)$ for which surrendering $\delta_t$ units at date $t$ can finance $\delta_s$ units at date $s$ along the same individual history. The comparison is supported when the maximal exchange ratio is finite and positive and the attainable correspondence is measurable.

\begin{assumption}[Admissible benchmark path]\label{ass:benchmark}
The benchmark path is feasible, measurable, and implementable. Conditional on age, $z$ is sufficient for the local continuation problem; $\Gamma_{t,s}$ is measurable for every $s\ge t$; and the benchmark allocation solves the individual's problem given prices and constraints.
\end{assumption}

\subsection{Local perturbations}

A benchmark perturbation path is a feasible family of resource additions $m_t^\tau:\Pop\to\RR$ around the benchmark, with fixed prices, policies, survival laws, transition laws, and $\dist_t$ unless an extension declares a general-equilibrium object. Its first-order signed resource shift is $\Delta=\{\Delta_t\}_{t\ge0}$.

\begin{definition}[Admissible perturbation]\label{def:perturbation}
A signed finite-horizon perturbation $\Delta$ is \emph{admissible} if:
\begin{enumerate}[label=(\roman*), leftmargin=2.3em]
    \item $\Delta_t$ is absolutely continuous with respect to $\dist_t$, with density $h_t\in L^\infty(\dist_t)$ for every $t$;
    \item $\Delta$ is the two-sided first-order resource shift generated by a feasible benchmark perturbation path; and
    \item the relevant social and private G\^ateaux derivatives along $\Delta$ exist.
\end{enumerate}
\end{definition}

On finite grids, local weights are derivatives per unit of cell mass. Appendix~\ref{app:technical-foundations} states the corresponding localization and version conventions for continuous supports.

\subsection{Valuation systems}

Social valuation is the directional derivative of the planner's objective. Private valuation is summarized by the marginal value $\contv_t(x)$ of current resources and, for within-person transfers, by market compensation or supporting state prices. Once a common private unit is fixed, $\ndw_t(x)=\smw_t(x)/\contv_t(x)$ is the normalized welfare weight. The remaining question is when ratios of this object compare like with like.


\section{Comparable Welfare Weights}\label{sec:welfare}

Comparing welfare weights across ages requires a common private scale and a rule for holding policy-relevant state information fixed. This section defines both. The resulting matched sets isolate age-related social priority in resource units. Coarser sets also load differences in private marginal value or in the state variables that price the policy margin.

\subsection{Local derivatives}

\begin{assumption}[Local differentiability and positivity]\label{ass:local-derivatives}
For every admissible perturbation $\Delta$, write $\Delta_t(\dd x)=h_t(x)\dist_t(\dd x)$. The social objective admits the G\^ateaux derivative
\[
D\SWF[\Delta]=\sum_{t\ge0}\int_{\Pop}\smw_t(x)h_t(x)\dist_t(\dd x)
\]
for some measurable $\smw_t:\Pop\to(0,\infty)$, defined $\dist_t$ almost everywhere, with finite displayed integrals on the admissible bounded-density domain. The tangent space is dated-separating: if $\sum_t\int f_t h_t\,\dd\dist_t=0$ for every admissible finite-horizon perturbation, then $f_t=0$ $\dist_t$ almost everywhere on each active date. For each active $t$ and for $\dist_t$-almost every cell $x$, the private continuation value $V_t(x;m)$, evaluated after adding $m$ units of the current resource at $x$, is differentiable at $m=0$, with
\[
\contv_t(x):=\left.\frac{\partial V_t(x;m)}{\partial m}\right|_{m=0}\in(0,\infty).
\]
\end{assumption}

\begin{definition}[Social marginal welfare weight]\label{def:smw}
The \emph{social marginal welfare weight} is any measurable representer $\smw_t(x)$ satisfying, for every admissible $\Delta$,
\begin{equation}\label{eq:smw}
D\SWF[\Delta]=\sum_{t\ge0}\int_{\Pop}\smw_t(x)h_t(x)\dist_t(\dd x).
\end{equation}
It is unique up to $\dist_t$-null sets by dated separation.
\end{definition}

\begin{definition}[Normalized welfare weight]\label{def:ndw}
Under \Cref{ass:benchmark,ass:local-derivatives}, define
\begin{equation}\label{eq:ndw}
\ndw_t(x):=\frac{\smw_t(x)}{\contv_t(x)}.
\end{equation}
\end{definition}

\begin{lemma}[Factorization and scale properties]\label{lem:factorization}
On $\{\contv_t>0\}$,
\begin{equation}\label{eq:factorization}
\smw_t(x)=\ndw_t(x)\contv_t(x).
\end{equation}
If the private utility index is transformed everywhere by $u\mapsto\alpha u+\beta$, $\alpha>0$, while the planner's cardinal representation over physical allocations is fixed, then $\contv_t$ scales by $\alpha$, $\smw_t$ is unchanged, and $\ndw_t$ scales by $\alpha^{-1}$. Ratios of $\ndw$ across states are invariant to a common positive affine rescaling.
\end{lemma}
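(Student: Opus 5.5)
The factorization \eqref{eq:factorization} follows directly from \Cref{def:ndw}. On the set $\{\contv_t>0\}$, which under \Cref{ass:local-derivatives} is $\dist_t$-almost all of $\Pop$, the quotient $\ndw_t=\smw_t/\contv_t$ is well defined and measurable. Multiplying both sides by $\contv_t(x)$ gives $\smw_t=\ndw_t\contv_t$ pointwise on that set. Because $\smw_t$ is unique only up to $\dist_t$-null sets (\Cref{def:smw}), I will state the identity $\dist_t$-almost everywhere. Any two versions of $\smw_t$ then produce versions of $\ndw_t$ that agree almost everywhere.

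For the scale property I treat the two valuation systems separately.

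\textbf{Private side.} Replace $u$ by $\alpha u+\beta$ everywhere. The private continuation value becomes $\alpha V_t(x;m)+c_t(x)$, where $c_t(x)$ collects the discounted, survival-weighted sum of the $\beta$ terms. Two facts drive this step. First, the individual's optimal policy is invariant to positive affine transformations of $u$, so the benchmark allocation and the continuation problem are unchanged. Second, the constant $c_t(x)$ depends on the state but not on the marginal resource $m$, because the horizon and survival law are fixed under admissible perturbations. Differentiating at $m=0$ then gives the transformed value $\alpha\contv_t(x)$.

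\textbf{Social side.} The planner's cardinal representation over physical allocations is held fixed. The objective $\SWF$, viewed as a function of the allocation path, is therefore the same functional before and after the transformation. Its G\^ateaux derivative along every admissible $\Delta$ is unchanged. By the uniqueness in \Cref{def:smw}, which rests on dated separation, the representer $\smw_t$ is unchanged $\dist_t$-almost everywhere.

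Combining the two sides, the transformed normalized weight is $\smw_t/(\alpha\contv_t)=\alpha^{-1}\ndw_t$. The ratio $\ndw_t(x)/\ndw_{t'}(x')$ is multiplied by $\alpha^{-1}/\alpha^{-1}=1$. Hence ratios across states are invariant to a common transformation, and $\beta$ plays no role.

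The main point needing care is the claim that the additive shift $\beta$ leaves $\partial V_t/\partial m$ unaffected. This requires that the perturbation $m$ not change survival or the effective horizon, since otherwise the $\beta$ terms would enter $V_t$ in an $m$-dependent way. I will justify this by appealing to the fixed survival and transition laws in the definition of benchmark perturbation paths.

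It is also worth noting explicitly that the transformation must be common across states. State-specific $\alpha(x)$ would break ratio invariance. This is precisely the interpersonal-comparability convention the common private unit imposes.
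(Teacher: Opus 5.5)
Your proposal is correct and follows the same route as the paper's proof. The factorization is immediate from \Cref{def:ndw}, and a common affine transformation multiplies every $\contv_t$ by $\alpha$ while leaving the social derivative in physical resources, and hence $\smw_t$, unchanged, so $\ndw_t$ scales by $\alpha^{-1}$ and its cross-state ratios are invariant. Your additional checks make explicit what the paper's two-line argument leaves implicit: that the additive constant does not depend on $m$ because survival and the horizon are fixed under admissible perturbations, and that the identity holds $\dist_t$-almost everywhere because $\smw_t$ is unique only up to null sets.
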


\begin{proof}
The identity is the definition of $\ndw_t$. A common rescaling multiplies every private marginal value by the same $\alpha$ and leaves the social derivative with respect to physical resources unchanged, so ratios of normalized welfare weights are unchanged.
\end{proof}

\subsection{The maintained private unit}

\begin{assumption}[Common cardinal private scale]\label{ass:common-cardinal}
Comparisons of $\contv_t(x)$ across states use a common cardinal index for the private continuation problem. Common positive affine transformations are innocuous. Slope changes that vary across states are not.
\end{assumption}

I maintain the private unit as an interpersonal-comparability convention in the sense of \citet{Harsanyi1955}. The physical perturbation is one unit of the current good, so $\smw_t(x)$ is its marginal social value in the planner's feasibility constraint. Every result is indexed by the policy margin $M$, the state-matching map $\Comp$, and the private unit $\contv$. A common positive affine rescaling of the private index cancels from ratios of $\ndw$. A slope that varies across states changes which private marginal values are equal and therefore changes the matched sets, support network, and normalization across disconnected components. It defines a different welfare comparison, not a harmless change of numeraire.

\subsection{Welfare weights in private-value units}

One private-value-equivalent unit at state $x$ is a resource shift of size $1/\contv_t(x)$. It raises private continuation value by one marginal unit, and $\ndw_t(x)$ is its social value. This change of units permits comparisons when private marginal values differ. It does not by itself give a resource-unit comparison of age priority. Appendix~\ref{app:technical-foundations} gives the formal measure representation.

\subsection{Matching states for a policy margin}

For policy margin $M$, let $Z_t^M(x)$ collect the non-age current information that identifies the induced future payoff, market compensation, and active constraints. Deterministic schedules that depend on age remain part of the age comparison.

\begin{definition}[State-matching map for a policy margin]\label{def:margin-sufficient}
Fix a local policy margin $M$ at date $t$. A measurable map
$\Comp_t:\Pop\to\CompSpace$ is \emph{margin-sufficient} for $M$ if it applies the
same coding of non-age information across age slices and, whenever $x=(a,z)$ and
$x'=(a',z')$ satisfy $\Comp_t(x)=\Comp_t(x')$ and $\contv_t(x)=\contv_t(x')$ in a
verified comparison, those equalities, together with the age alignment, identify
all non-age objects needed to determine the admissibility and valuation of the
future payoff induced by the policy margin, the market factor $\Qterm$, and the current loadings entering
$\WedgeTerm$.
\end{definition}

A verification certificate records the economic content of this condition. It lists the future payoff, price object, and active constraints, then retains the non-age state information needed to identify them once age alignment and $\contv_t$ are fixed. Appendix~\ref{app:technical-foundations} gives certificates for saving, risky assets, labor taxes, and liquid-illiquid portfolios. Any coarsening of $\Comp_t$ requires a new sufficiency check.

\begin{proposition}[Verified margin sufficiency]\label{prop:margin-sufficient-map}
Fix $M$ and suppose that, on the maintained support, $\Comp_t^M$ is a verified
margin-sufficient map that is minimal in the partition-refinement order among maps
preserving the admissible pair relation at date $t$.
\begin{enumerate}[label=(\roman*), leftmargin=2.3em]
    \item Matching within the exact sets defined by $(\Comp_t^M,\contv_t)$ holds fixed all current state
    objects, apart from age, that are needed to interpret $M$. Deterministic schedules
    that depend on age may vary.
    \item If a coarsening $\widetilde\Comp_t$ merges cells of the same age with equal
    $\contv_t$ but different verified codes, sufficiency must be checked again.
    \item Every refinement $\bar\Comp_t=(\Comp_t^M,W_t)$ remains margin-sufficient and
    can only reduce the set of admissible matches.
\end{enumerate}
\end{proposition}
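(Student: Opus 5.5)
The plan is to argue directly from \Cref{def:margin-sufficient}, treating the three parts in order. Write $E_t^M(u,v):=\{x\in\Pop:\Comp_t^M(x)=u,\ \contv_t(x)=v\}$ for the exact matched sets. For part (i), take any two cells $x=(a,z)$ and $x'=(a',z')$ in the same $E_t^M(u,v)$ that form a verified comparison. By definition they share the code $u$ and the private marginal value $v$. Margin sufficiency says precisely that these two equalities, together with age alignment, pin down the future payoff induced by $M$ and its admissibility, the market factor $\Qterm$, and the current loadings entering $\WedgeTerm$. So every non-age current object needed to interpret $M$ coincides across $x$ and $x'$. Anything left free must enter only through age. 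Because the coding is applied identically across age slices, deterministic age-dependent schedules are not coded and may differ. I would note explicitly that minimality plays no role in (i): only sufficiency is used.

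For part (ii), I would show that no general conclusion is available, so that re-verification is genuinely necessary. The coarsening $\widetilde\Comp_t$ merges two same-age cells $x_1,x_2$ with equal $\contv_t$ but $\Comp_t^M(x_1)\neq\Comp_t^M(x_2)$. Minimality of $\Comp_t^M$ in the partition-refinement order, among maps preserving the admissible pair relation, implies that $\widetilde\Comp_t$ is strictly coarser than $\Comp_t^M$. It therefore cannot be assumed to preserve that relation. To make this concrete, I would exhibit, or at least describe, the failure mode: $x_1$ and $x_2$ may carry different values of $\Qterm$ or different active-constraint loadings while sharing $\widetilde\Comp_t$ and $\contv_t$. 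A young cell matched to both would then receive two inconsistent valuations. The sufficiency property in \Cref{def:margin-sufficient} therefore has to be checked afresh for $\widetilde\Comp_t$. I expect this to be the main obstacle, because the statement is procedural. The cleanest formulation is probably the contrapositive: if $\widetilde\Comp_t$ were automatically sufficient, it would preserve the pair relation, and minimality of $\Comp_t^M$ would then rule out its being strictly coarser.

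For part (iii), let $\bar\Comp_t=(\Comp_t^M,W_t)$. The map is measurable as a pair of measurable maps, and it applies the same coding across ages provided $W_t$ does, which I would state as part of the construction. Suppose $\bar\Comp_t(x)=\bar\Comp_t(x')$ and $\contv_t(x)=\contv_t(x')$. Then in particular $\Comp_t^M(x)=\Comp_t^M(x')$, so the sufficiency of $\Comp_t^M$ already identifies all the required objects, and $\bar\Comp_t$ is margin-sufficient. Finally, each set $\{\bar\Comp_t=\bar u,\ \contv_t=v\}$ is contained in some $E_t^M(u,v)$. Every admissible match under $\bar\Comp_t$ is therefore an admissible match under $\Comp_t^M$, so the set of admissible matches can only shrink.
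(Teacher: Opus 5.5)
Your proposal is correct and follows the paper's own argument. Part (i) is read off from \Cref{def:margin-sufficient}. Part (ii) rests on the observation that merging distinct verified codes forfeits inherited sufficiency. Part (iii) uses the fact that equality of $(\Comp_t^M,W_t)$ implies equality of $\Comp_t^M$, so sufficiency carries over and matches can only be removed.

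One small slip: strict coarseness of $\widetilde\Comp_t$ comes from merging distinct codes, not from minimality. Minimality enters only in your contrapositive, which the paper's proof does not invoke, since it settles for ``either fails on merged support points or requires a new check.''
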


\begin{definition}[Matched set (comparison fiber)]\label{def:fiber}
For a verified margin-sufficient map $\Comp_t$, date $t$, attribute $u\in\CompSpace$, and private envelope scale $v>0$, the \emph{comparison fiber} is
\begin{equation}\label{eq:fiber}
\mathcal F_t(u,v):=\{x\in\Pop:\Comp_t(x)=u \text{ and } \contv_t(x)=v\}.
\end{equation}
\end{definition}

States in a verified matched set share the policy-relevant current information and the same private marginal value. Ratios of $\ndw$ within that set therefore compare age-related social priority for transfers measured in physical resources. A coarser map has this interpretation only after a separate sufficiency check.

\begin{proposition}[Changing the private unit changes the comparison]\label{prop:common-scale-scope}
Under \Cref{ass:benchmark,ass:local-derivatives,ass:common-cardinal}, a common positive affine rescaling leaves every ratio of normalized welfare weights unchanged. A rescaling that varies across states can change $\contv_t$, the matched sets, supported comparisons, connected components, component normalizations, and target intervals. Alternative private-unit conventions therefore require recomputing the comparison network.
\end{proposition}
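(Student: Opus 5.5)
The proof has two parts, matching the two sentences of \Cref{prop:common-scale-scope}: invariance under a common rescaling, and possible non-invariance under a state-varying one.

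\emph{Invariance.} For the first part I would invoke \Cref{lem:factorization}. Under a common transformation $u\mapsto\alpha u+\beta$ with $\alpha>0$, and with the planner's representation over physical allocations held fixed:
\begin{enumerate}[label=(\alph*), leftmargin=2.3em]
    \item $\smw_t$ is unchanged, because it is the representer in \eqref{eq:smw} of a derivative with respect to physical resources;
    \item $\contv_t$ is multiplied by $\alpha$, by differentiating $V_t(x;m)$ in \Cref{ass:local-derivatives};
    \item hence $\ndw_t$ is multiplied by $\alpha^{-1}$ at every state, and every ratio $\ndw_t(x)/\ndw_t(x')$ is unchanged.
\end{enumerate}
The same computation shows that equality $\contv_t(x)=\contv_t(x')$ holds after the rescaling if and only if it held before. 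So the fibers \eqref{eq:fiber} are relabelled ($v\mapsto\alpha v$) but their membership is unchanged. Consequently the comparison network, its connected components and the edge labels $\bar g$ coincide with the original ones. By \Cref{prop:imposed-discount-rationalization}, the identified weights and target intervals are therefore also unchanged.

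\emph{Non-invariance.} The second part is an existence claim ("can change"), so it suffices to exhibit one state-varying rescaling that changes each listed object. I would use a finite-grid example in the spirit of the sign-reversal example of \Cref{sec:benchmark}. Take three cells: an old cell $o$ and young cells $y_1,y_2$, all with the same verified code $u$, and with $\contv_t(o)=\contv_t(y_1)=v$ and $\contv_t(y_2)=v'\neq v$. Initially the fiber $\mathcal F_t(u,v)$ contains $\{o,y_1\}$, so the link $(y_1,o)$ is supported and $y_2$ is isolated.

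Now apply the rescaling $u\mapsto\alpha(x)u$ with $\alpha(o)=\alpha(y_1)=1$ and $\alpha(y_2)=v/v'$. The new private marginal values are $\contv_t(o)=\contv_t(y_1)=\contv_t(y_2)=v$, so $y_2$ joins the fiber and a new link $(y_2,o)$ appears. This merges two components into one, which changes both the connected components and the component normalizations. If a target loads on $y_2$, its interval moves from unbounded to a point under unrestricted component normalization. Separately, $\ndw_t(y_2)=\smw_t(y_2)/\contv_t(y_2)$ changes by the factor $v'/v$, so any fixed-normalization target contrast involving $y_2$ shifts by $\log(v'/v)$; choosing $v'/v$ large enough flips its sign.

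\emph{Main obstacle.} The hard step is admissibility of the example. The rescaled environment must still satisfy \Cref{ass:benchmark,ass:local-derivatives}, and the code $\Comp_t$ must remain verified margin-sufficient. I would handle this by letting the state-dependent slope depend only on a component of $z$ that \Cref{def:margin-sufficient} does not need for the saving margin, so that $\Comp_t$ is unaffected. Because the rescaling changes which cells satisfy the equality $\contv_t(x)=\contv_t(x')$, it also enters the sufficiency check. \Cref{prop:margin-sufficient-map}(ii) says sufficiency must be rechecked in that situation; for the example I would verify the certificate directly. Once admissibility is settled, the conclusion that the network must be recomputed follows immediately.
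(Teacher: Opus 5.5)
Your proposal is correct and follows essentially the paper's route: the invariance half is the cancellation of a common $\alpha$ from ratios of $\ndw_t$ (\Cref{lem:factorization}), and your three-cell example instantiates the paper's one-line observation that a state-varying slope changes the equality relations in \eqref{eq:fiber}, and hence every object built from the matched sets. One caution: your aside that the fixed-normalization contrast shifts by $\log(v'/v)$ is not what the network reports, because before the merge the isolated $y_2$'s level is set by its own component anchor rather than by $\ndw_t(y_2)$; the unbounded-to-identified change you already exhibit is enough for the claim about target intervals.
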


\begin{proof}
A common rescaling multiplies all private marginal values by the same constant and cancels from ratios of normalized welfare weights. A rescaling that varies across states changes the equality relations in \Cref{eq:fiber} and therefore every object built from the matched sets.
\end{proof}

\begin{lemma}[Minimal matched sets for age priority]\label{lem:exact-fiber-characterization}
Fix $M$ and a verified margin-sufficient map $\Comp_t^M$. Among rules that use only
$(\Comp_t^M,\contv_t)$, the matched sets are the coarsest partition that keeps fixed both
the verified code and the private unit. On this partition, ratios of $\smw_t$ coincide
with ratios of $\ndw_t$, the rule is invariant to a common rescaling of $\contv_t$, and
any coarser rule must either merge distinct verified codes or compare different private
units.
\end{lemma}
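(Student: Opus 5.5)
The proof works with the equivalence relation $x\sim x'$ if and only if $(\Comp_t^M(x),\contv_t(x))=(\Comp_t^M(x'),\contv_t(x'))$. Its classes are exactly the comparison fibers $\mathcal F_t(u,v)$ of \Cref{def:fiber}. Any rule that uses only $(\Comp_t^M,\contv_t)$ is a partition whose blocks are unions of these classes, because it cannot separate two states with identical values of the pair. So every such rule is either the fiber partition itself or coarser than it. The fiber partition is also the coarsest partition on which both coordinates are constant: any block on which $\Comp_t^M$ and $\contv_t$ are constant lies inside a single fiber. This gives the first claim.

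For the ratio claim, I would take $x,x'$ in the same fiber, so $\contv_t(x)=\contv_t(x')=v>0$. By \Cref{lem:factorization}, $\smw_t=\ndw_t\contv_t$ on $\{\contv_t>0\}$. Hence
\[
\frac{\smw_t(x)}{\smw_t(x')}=\frac{\ndw_t(x)\,v}{\ndw_t(x')\,v}=\frac{\ndw_t(x)}{\ndw_t(x')},
\]
and positivity of $\contv_t$ comes from \Cref{ass:local-derivatives}. For invariance, a common rescaling sends $\contv_t$ to $\alpha\contv_t$ with $\alpha>0$. This preserves all equalities $\contv_t(x)=\contv_t(x')$, so the fibers are relabelled $(u,v)\mapsto(u,\alpha v)$ but do not change as sets. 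By \Cref{lem:factorization} and \Cref{prop:common-scale-scope}, the ratios of $\ndw_t$, and hence of $\smw_t$ within a fiber, are unchanged.

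For the last claim, let a rule use only $(\Comp_t^M,\contv_t)$ and be strictly coarser than the fiber partition. Then some block contains $x,x'$ from different fibers, so either $\Comp_t^M(x)\neq\Comp_t^M(x')$ or $\contv_t(x)\neq\contv_t(x')$. The first case merges distinct verified codes. As \Cref{prop:margin-sufficient-map}(ii) notes, that calls for a new sufficiency check and loses the guarantee of \Cref{prop:margin-sufficient-map}(i). The second case compares different private units, and then $\smw$ ratios pick up the factor $\contv_t(x)/\contv_t(x')\neq1$.

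The main obstacle is making precise "rules that use only $(\Comp_t^M,\contv_t)$" under the measure-theoretic conventions. These are partitions measurable with respect to $\sigma(\Comp_t^M,\contv_t)$, and every equality above holds only $\dist_t$-almost everywhere. I would state everything up to $\dist_t$-null sets and appeal to the uniqueness of representers in \Cref{def:smw}. That way the fiber partition is well defined as a version and the ratio identity holds for almost every pair.
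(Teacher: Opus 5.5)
Your proposal is correct and follows the paper's own argument. The paper's proof likewise notes that admitting unequal $\contv_t$ makes the resource-weight ratio differ from the normalized ratio by $\contv_t(x)/\contv_t(x')$, that merging distinct verified codes abandons margin sufficiency, and that the matched sets impose exactly these two equalities and are unchanged by a common rescaling. Your equivalence-relation treatment of the ``coarsest partition'' claim makes explicit what the paper leaves implicit.
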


\begin{proof}
If a rule admits a pair with unequal $\contv_t$, the resource-weight ratio differs from the normalized ratio by $\contv_t(x)/\contv_t(x')$. If it merges distinct verified codes, it abandons the verified sufficiency of the state match for $M$. The matched set imposes exactly these two equalities and is unchanged by a common rescaling.
\end{proof}

\begin{remark}[Weaker comparability]\label{rem:weaker-comparability}
With only partial cardinal comparability, the comparison network is indexed by the weaker unit system. All network and interval statements remain conditional on that unit.
\end{remark}

\begin{lemma}[Nonvacuity under a monotone matching state]\label{lem:fiber-nonempty}
Suppose that for two ages and a common verified code $u$, the scalar resource state $b\mapsto\contv_t(a,b,u)$ is continuous and strictly monotone on support intervals whose images overlap. Then every $v$ in the overlap defines a nonempty cross-age exact fiber.
\end{lemma}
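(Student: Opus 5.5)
The plan is to reduce nonemptiness to the intermediate value theorem, applied separately in each age slice. Fix the two ages $a\neq a'$ and the common verified code $u$. Write $I_a$ and $I_{a'}$ for the support intervals of the scalar resource state $b$ on which the cells $(a,b,u)$ and $(a',b,u)$ lie in the maintained support. Write $f_a(b):=\contv_t(a,b,u)$ and $f_{a'}(b):=\contv_t(a',b,u)$. By hypothesis, each $f$ is continuous and strictly monotone on its interval, and the images $f_a(I_a)$ and $f_{a'}(I_{a'})$ overlap. I read "overlap" as a nonempty intersection $O:=f_a(I_a)\cap f_{a'}(I_{a'})$. Since $\contv_t>0$ by \Cref{ass:local-derivatives}, every $v\in O$ is a legitimate private envelope scale in \Cref{def:fiber}.

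The first step is to note that a continuous real function maps an interval onto an interval. So $f_a(I_a)$ and $f_{a'}(I_{a'})$ are intervals, and their intersection $O$ is an interval, although it may be degenerate or half-open depending on the endpoints of $I_a$ and $I_{a'}$. Now take any $v\in O$. Since $v\in f_a(I_a)$, there is $b\in I_a$ with $f_a(b)=v$, and strict monotonicity makes this $b$ unique. Call it $b_a(v)=f_a^{-1}(v)$. In the same way, $b_{a'}(v)=f_{a'}^{-1}(v)\in I_{a'}$. Existence here is just membership in the image; continuity is what guarantees that the image has no gaps, so that $O$ is a genuine interval of admissible scales rather than a scattered set.

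The second step is to check membership in the fiber. The cells $x=(a,b_a(v),u)$ and $x'=(a',b_{a'}(v),u)$ both satisfy $\Comp_t=u$, because the code is common, and both satisfy $\contv_t=v$. Hence both belong to $\mathcal F_t(u,v)$ from \Cref{eq:fiber}, and the fiber contains states of two distinct ages, so it is cross-age. Because \Cref{def:margin-sufficient} codes non-age information in the same way across age slices, $u$ denotes the same verified information at both ages. The pair is therefore an exact comparison in the sense of \Cref{lem:exact-fiber-characterization}.

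The main obstacle is bookkeeping rather than analysis. I must make sure the points $b_a(v)$ and $b_{a'}(v)$ actually lie in the maintained support, which is why I restrict to the support intervals rather than all of $\RR$. I must also handle the almost-everywhere versions of $\contv_t$. Here I would use the convention of Appendix~\ref{app:technical-foundations}: work with the continuous version on each interval, which is unique where the support has full measure. As a by-product, strict monotonicity also shows that each age contributes exactly one point to the fiber for each $v$, so $v\mapsto(b_a(v),b_{a'}(v))$ parametrizes the cross-age exact matches along this code.
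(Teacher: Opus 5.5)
Your proposal is correct and follows the paper's own argument. Continuity and strict monotonicity make each age slice a bijection onto an interval, so every $v$ in the overlap has exactly one preimage at each age, and $\mathcal F_t(u,v)$ therefore contains states of both ages. One caution, which the paper's proof itself records: calling the resulting pair an ``exact comparison'' is right only in the sense of an exactly matched pair, because making it an admissible local age comparison (or a network link) additionally requires the continuation-support condition (iii) of \Cref{def:comparison-triple}, and the lemma does not deliver that.
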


\subsection{Local age neutrality}

\begin{definition}[Conditional age neutrality]\label{def:cab}
Given a comparison system $(M,\Comp,\contv)$, the social criterion is \emph{conditionally residual-age neutral} on a set of exactly matched comparisons if
\[
\ndw_t(x_t^{\mathrm o})=\ndw_t(x_t^{\mathrm y})
\]
for each comparison in the set. The neutrality statement is local to the policy margin, state-matching map, private unit, and support.
\end{definition}

\begin{remark}[From exact comparisons to reported intervals]\label{rem:hierarchy}
Exactly matched comparisons are the objects governed by the theorem. An $\varepsilon$ band turns
equality into explicit mismatch bounds, and a coarser map adds a contamination radius
$\kappa$. Finite grid recursions then choose point summaries inside these sets. The
numerical section reports intervals before any point summary.
\end{remark}


\section{Local Age Comparisons}\label{sec:stochastic-mr}

The local comparison separates age-related social priority from the younger agent's within-person intertemporal tradeoff. Exact matching removes the private-value term. Coarser matching retains it, and approximate matching bounds it explicitly. \Cref{sec:age-blindness} then asks whether the resulting local ratios can be combined into one welfare-weight system.

\subsection{The local comparison}

Fix date $t$. Let $x_t^{\mathrm o}$ and $x_t^{\mathrm y}$ be current states in the same matched set, with $a^{\mathrm o}>a^{\mathrm y}$, and set $s=t+a^{\mathrm o}-a^{\mathrm y}$. A comparison also records the state reached by the younger person at date $s$, when that person has the older agent's current age.

\begin{definition}[Local age comparison]\label{def:comparison-triple}
A \emph{local age comparison} is
\[
\pi=(t,x_t^{\mathrm o},x_t^{\mathrm y},x_s^{\mathrm o})
\]
such that (i) $x_t^{\mathrm o}$ and $x_t^{\mathrm y}$ lie at date $t$ and in the same comparison fiber; (ii) $a(x_t^{\mathrm o})>a(x_t^{\mathrm y})$; and (iii) $x_s^{\mathrm o}\in\Gamma_{t,s}(x_t^{\mathrm y})$ with $a(x_s^{\mathrm o})=a(x_t^{\mathrm o})$.
\end{definition}

\begin{definition}[Ordered, approximate, and verified comparisons]\label{def:ordered-bridge-tuple}
An \emph{ordered comparison} is a quadruple $(t,x_t^{\mathrm o},x_t^{\mathrm y},x_s^{\mathrm o})$ satisfying the same age ordering, alignment, and continuation condition without imposing an exact match. It is \emph{approximately matched} when the maintained state codes agree and the stated bound on the private marginal-value difference holds. It is \emph{verified} when the induced future payoff, supporting-price/KKT representation, and within-person valuation restriction apply. A nonbaseline comparison also carries the positive continuation adjustment $\BridgeCorr(\pi)$ introduced in \Cref{subsec:bridge-assumptions}.
\end{definition}

\begin{remark}[Future payoffs with several states]\label{rem:portfolio-legs}
The displayed comparison uses one continuation state. For a future payoff with finite support $L_s^\pi=\sum_j\ell_j\delta_{x_{sj}}$, with survival and transition probabilities absorbed in $\ell_j$, write
\[
\mathcal V_s^\omega(\pi):=\sum_j\ell_j\smw_s(x_{sj}),
\]
and set $\mathcal V_s^\omega(\pi)=\smw_s(x_s^{\mathrm o})$ in the singleton case. The corresponding $\Qterm(\pi)$ is the gross market compensation factor for that payoff, and the one-period risk-free margin uses the same convention.
\end{remark}

\begin{definition}[Gross social discount ratio]\label{def:sdf-social}
For $\pi=(t,x_t^{\mathrm o},x_t^{\mathrm y},x_s^{\mathrm o})$, or the corresponding portfolio version, define
\begin{equation}\label{eq:Ds}
\mathcal D^S(\pi):=\frac{\smw_t(x_t^{\mathrm o})}{\mathcal V_s^\omega(\pi)}.
\end{equation}
It is the number of units of the future old-age payoff that is socially equivalent, at the margin, to one current old-age resource unit.
\end{definition}

Under this convention, $\mathcal D^S(\pi)$ is the gross social discount ratio for the future payoff, so $\rho^S$ is the corresponding net rate when $\mathcal D^S=1+\rho^S$. The factor $\Qterm(\pi)$ is market compensation for the same payoff, and $\WedgeTerm(\pi)$ records the effect of active constraints. The social and market ratios must value the same payoff in the same private unit.

\subsection{Decomposition under exact matching}

\begin{proposition}[Welfare-weight decomposition under exact matching]\label{thm:mr}
For any local age comparison $\pi$, define
\begin{equation}\label{eq:Mterm}
\Mterm(\pi):=\frac{\ndw_t(x_t^{\mathrm o})}{\ndw_t(x_t^{\mathrm y})}
\end{equation}
and
\begin{equation}\label{eq:Rsocial}
\Rterm^S(\pi):=\frac{\smw_t(x_t^{\mathrm y})}{\mathcal V_s^\omega(\pi)}.
\end{equation}
Then
\begin{equation}\label{eq:mr-decomposition}
\mathcal D^S(\pi)=\Mterm(\pi)\Rterm^S(\pi).
\end{equation}
\end{proposition}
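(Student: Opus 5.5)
The identity \eqref{eq:mr-decomposition} is an algebraic consequence of the definitions. I will check that every ratio involved is well defined and then multiply. First, by \Cref{ass:local-derivatives} we have $\smw_t(x)\in(0,\infty)$ and $\contv_t(x)\in(0,\infty)$ for $\dist_t$-almost every cell, so $\ndw_t$ in \eqref{eq:ndw} is finite and positive at $x_t^{\mathrm o}$ and $x_t^{\mathrm y}$. For the continuation term, $\mathcal V_s^\omega(\pi)$ is either $\smw_s(x_s^{\mathrm o})>0$ in the singleton case, or $\sum_j\ell_j\smw_s(x_{sj})$ with nonnegative loadings $\ell_j$, not all zero, on supported states; in the portfolio case of \Cref{rem:portfolio-legs} it is therefore strictly positive. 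Hence $\mathcal D^S(\pi)$ in \eqref{eq:Ds} and $\Rterm^S(\pi)$ in \eqref{eq:Rsocial} are finite and positive, and $\Mterm(\pi)$ in \eqref{eq:Mterm} is as well.

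Next I substitute. Using \Cref{lem:factorization}, $\smw_t=\ndw_t\contv_t$. The key point is that $\pi$ is a local age comparison in the sense of \Cref{def:comparison-triple}, so $x_t^{\mathrm o}$ and $x_t^{\mathrm y}$ lie in the same comparison fiber \eqref{eq:fiber}, which gives $\contv_t(x_t^{\mathrm o})=\contv_t(x_t^{\mathrm y})$. Therefore
\[
\Mterm(\pi)=\frac{\ndw_t(x_t^{\mathrm o})}{\ndw_t(x_t^{\mathrm y})}=\frac{\smw_t(x_t^{\mathrm o})}{\smw_t(x_t^{\mathrm y})},
\]
and multiplying by $\Rterm^S(\pi)=\smw_t(x_t^{\mathrm y})/\mathcal V_s^\omega(\pi)$ cancels $\smw_t(x_t^{\mathrm y})$. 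This leaves $\smw_t(x_t^{\mathrm o})/\mathcal V_s^\omega(\pi)=\mathcal D^S(\pi)$.

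The only delicate step is the version issue. $\smw_t$ and $\contv_t$ are defined only up to $\dist_t$-null sets, by \Cref{def:smw} and dated separation, so pointwise evaluation at the cells of $\pi$ needs the finite-grid convention (cells with positive mass) or the localization convention of Appendix~\ref{app:technical-foundations}. I will state the result for the fixed versions used to define the fiber, so that the identity holds for every $\pi$ built from them. I would also remark that without the fiber equality the same computation yields the extra factor $\contv_t(x_t^{\mathrm y})/\contv_t(x_t^{\mathrm o})$, which is the coarse-matching term in \eqref{eq:preview-coarse-decomposition}.
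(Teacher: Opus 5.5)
Your proposal is correct and follows the paper's own argument: substitute $\smw_t=\ndw_t\contv_t$ and use the fiber equality $\contv_t(x_t^{\mathrm o})=\contv_t(x_t^{\mathrm y})$, so the private-value ratio drops out and $\mathcal D^S(\pi)$ remains. Your positivity and version-selection checks are extra care that the paper's one-line proof leaves implicit.
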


\begin{proof}
Using $\smw_t=\ndw_t\contv_t$,
\[
\mathcal D^S(\pi)=
\frac{\ndw_t(x_t^{\mathrm o})}{\ndw_t(x_t^{\mathrm y})}
\frac{\contv_t(x_t^{\mathrm o})}{\contv_t(x_t^{\mathrm y})}
\frac{\smw_t(x_t^{\mathrm y})}{\mathcal V_s^\omega(\pi)}.
\]
The middle ratio is $1$ under exact matching.
\end{proof}

Exact matching turns the current term into a ratio of normalized welfare weights. The remaining term is the younger individual's within-person intertemporal tradeoff.

\subsection{Coarser and approximate matching}

A coarser rule may hold fixed only the state attribute and allow different values of $\contv_t$. The identity remains exact, and the error bound comes from the maintained bound on the difference in $\log \contv_t$.

\begin{proposition}[Private-value differences under coarser matching]\label{prop:coarse-matching}
Fix an ordered comparison $\bar\pi=(t,x_t^{\mathrm o},x_t^{\mathrm y},x_s^{\mathrm o})$ with $a(x_t^{\mathrm o})>a(x_t^{\mathrm y})$, $s=t+a(x_t^{\mathrm o})-a(x_t^{\mathrm y})$, $x_s^{\mathrm o}\in\Gamma_{t,s}(x_t^{\mathrm y})$, and $\Comp_t(x_t^{\mathrm o})=\Comp_t(x_t^{\mathrm y})$, without imposing equality of $\contv_t$. Then
\begin{equation}\label{eq:app-coarse-decomp}
\mathcal D^S(\bar\pi)=
\frac{\ndw_t(x_t^{\mathrm o})}{\ndw_t(x_t^{\mathrm y})}
\frac{\contv_t(x_t^{\mathrm o})}{\contv_t(x_t^{\mathrm y})}
\Rterm^S(\bar\pi),
\end{equation}
or equivalently
\begin{equation}\label{eq:app-coarse-identity}
\log\frac{\ndw_t(x_t^{\mathrm y})}{\ndw_t(x_t^{\mathrm o})}
=
\log \Rterm^S(\bar\pi)-\log\mathcal D^S(\bar\pi)
+\log\frac{\contv_t(x_t^{\mathrm o})}{\contv_t(x_t^{\mathrm y})}.
\end{equation}
\end{proposition}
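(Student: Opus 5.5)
The plan is to mirror the proof of \Cref{thm:mr}, but without invoking exact matching. The identity should follow from the definitions and the factorization in \Cref{lem:factorization}, with no structural input from the matching map beyond ensuring the objects are well defined.

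\textbf{Step 1: all objects are well defined and positive.} For $\mu_t$-almost every pair of cells, \Cref{ass:local-derivatives} gives $\smw_t(x_t^{\mathrm o})$, $\smw_t(x_t^{\mathrm y})$, $\contv_t(x_t^{\mathrm o})$ and $\contv_t(x_t^{\mathrm y})$ in $(0,\infty)$. Since $x_s^{\mathrm o}\in\Gamma_{t,s}(x_t^{\mathrm y})$ has positive survival probability, the continuation value $\mathcal V_s^\omega(\bar\pi)$ of \Cref{rem:portfolio-legs} is strictly positive. It is a positive combination of positive $\smw_s$ values with positive loadings $\ell_j$. Hence $\mathcal D^S(\bar\pi)$ of \eqref{eq:Ds} and $\Rterm^S(\bar\pi)$ of \eqref{eq:Rsocial} are finite and positive, and all logarithms below exist.

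\textbf{Step 2: the multiplicative identity.} Apply \eqref{eq:factorization} at both current states, writing $\smw_t(x_t^{\mathrm o})=\ndw_t(x_t^{\mathrm o})\contv_t(x_t^{\mathrm o})$. Insert the factor $1=\smw_t(x_t^{\mathrm y})/\bigl(\ndw_t(x_t^{\mathrm y})\contv_t(x_t^{\mathrm y})\bigr)$ into the definition of $\mathcal D^S(\bar\pi)$:
\[
\mathcal D^S(\bar\pi)=\frac{\ndw_t(x_t^{\mathrm o})\contv_t(x_t^{\mathrm o})}{\ndw_t(x_t^{\mathrm y})\contv_t(x_t^{\mathrm y})}\cdot\frac{\smw_t(x_t^{\mathrm y})}{\mathcal V_s^\omega(\bar\pi)}.
\]
Regrouping gives \eqref{eq:app-coarse-decomp}.

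\textbf{Step 3: the log form.} Taking logarithms and solving for $\log\bigl(\ndw_t(x_t^{\mathrm y})/\ndw_t(x_t^{\mathrm o})\bigr)=-\log\Mterm(\bar\pi)$ yields \eqref{eq:app-coarse-identity}.

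The computation itself is routine, so the only point requiring care is Step 1. That step confirms that the ordered comparison, which need not be verified or exactly matched, still has well-defined $\mathcal D^S$ and $\Rterm^S$ on the maintained support, so that the almost-everywhere conventions of \Cref{def:smw} apply at the chosen cells. On finite grids this is immediate, since local weights there are defined per unit of cell mass. In the continuous case I would defer to the version conventions in Appendix~\ref{app:technical-foundations}.

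I would also remark that the equality $\Comp_t(x_t^{\mathrm o})=\Comp_t(x_t^{\mathrm y})$ plays no role in the algebra. It matters only for interpreting $\Rterm^S$ as a margin-relevant within-person tradeoff, and for the error bound obtained when $|\log\contv_t(x_t^{\mathrm o})-\log\contv_t(x_t^{\mathrm y})|$ is bounded.
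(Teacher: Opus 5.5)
Your proposal is correct and follows the paper's proof: both write $\mathcal D^S(\bar\pi)$ as $\bigl(\smw_t(x_t^{\mathrm o})/\smw_t(x_t^{\mathrm y})\bigr)\Rterm^S(\bar\pi)$, substitute $\smw_t=\ndw_t\contv_t$ at both current states without cancelling the private marginal values, and take logs. Your Step 1 positivity and well-definedness check, and your remark that equality of $\Comp_t$ plays no role in the algebra, add care that the paper leaves implicit.
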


\begin{corollary}[Approximate matching bound]\label{cor:approximate-matching}
Under \Cref{prop:coarse-matching}, if
\[
|\log\contv_t(x_t^{\mathrm o})-\log\contv_t(x_t^{\mathrm y})|\le\varepsilon,
\]
then
\begin{equation}\label{eq:app-approx-bound}
\left|
\log\frac{\ndw_t(x_t^{\mathrm y})}{\ndw_t(x_t^{\mathrm o})}
-\bigl(\log \Rterm^S(\bar\pi)-\log\mathcal D^S(\bar\pi)\bigr)
\right|\le\varepsilon.
\end{equation}
Equivalently,
\begin{align}
\log\frac{\ndw_t(x_t^{\mathrm y})}{\ndw_t(x_t^{\mathrm o})}
&\ge \log\Rterm^S(\bar\pi)-\log\mathcal D^S(\bar\pi)-\varepsilon,
\label{eq:app-approx-bound-lower}\\
\log\frac{\ndw_t(x_t^{\mathrm y})}{\ndw_t(x_t^{\mathrm o})}
&\le \log\Rterm^S(\bar\pi)-\log\mathcal D^S(\bar\pi)+\varepsilon.
\label{eq:app-approx-bound-upper}
\end{align}
When $\varepsilon=0$, the exact-matching identity obtains.
\end{corollary}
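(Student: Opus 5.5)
The statement to prove is Corollary~\ref{cor:approximate-matching}, and the plan is to read it directly off the exact identity in \Cref{prop:coarse-matching}. The argument has three steps: rearrange the identity, insert the assumed bound, and then specialize to $\varepsilon=0$.

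\textbf{Step 1: rearrange the identity.} Take logarithms in \eqref{eq:app-coarse-decomp}, or simply start from \eqref{eq:app-coarse-identity}. Move $\log\Rterm^S(\bar\pi)-\log\mathcal D^S(\bar\pi)$ to the left-hand side. The gap between the normalized log ratio and the discount-adjusted term is then exactly
\[
\log\frac{\ndw_t(x_t^{\mathrm y})}{\ndw_t(x_t^{\mathrm o})}-\bigl(\log\Rterm^S(\bar\pi)-\log\mathcal D^S(\bar\pi)\bigr)=\log\contv_t(x_t^{\mathrm o})-\log\contv_t(x_t^{\mathrm y}).
\]
All logarithms are well defined because $\smw_t,\contv_t>0$ by \Cref{ass:local-derivatives}. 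The future term $\mathcal V_s^\omega(\bar\pi)$ is a positive combination of positive weights, so it is also positive.

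\textbf{Step 2: insert the bound.} Take absolute values of the identity in Step 1. The maintained hypothesis $|\log\contv_t(x_t^{\mathrm o})-\log\contv_t(x_t^{\mathrm y})|\le\varepsilon$ gives \eqref{eq:app-approx-bound} immediately. Unpacking $|w|\le\varepsilon$ as $-\varepsilon\le w\le\varepsilon$ then yields the two one-sided inequalities \eqref{eq:app-approx-bound-lower} and \eqref{eq:app-approx-bound-upper}.

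\textbf{Step 3: the case $\varepsilon=0$.} Setting $\varepsilon=0$ forces $\contv_t(x_t^{\mathrm o})=\contv_t(x_t^{\mathrm y})$. Together with the common code $\Comp_t$, this places the two states in the same comparison fiber of \Cref{def:fiber}. The identity then collapses to the exact decomposition \eqref{eq:mr-decomposition} of \Cref{thm:mr}, which is equivalently \eqref{eq:preview-exact-decomposition}.

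There is no real obstacle here. The only points needing care are these:
\begin{itemize}
\item confirm that \Cref{prop:coarse-matching} holds without equality of $\contv_t$. Its proof is the same telescoping used for \Cref{thm:mr}, namely substituting $\smw_t=\ndw_t\contv_t$ from \Cref{lem:factorization} in the numerator of $\mathcal D^S$, but without cancelling the middle ratio;
\item check positivity, so that taking logs is legitimate (handled in Step 1);
\item note that the inequalities hold pointwise for each ordered comparison, so no measure-theoretic qualification is needed beyond the $\dist_t$-almost-everywhere definition of the weights.
\end{itemize}
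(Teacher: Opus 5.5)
Your proposal is correct and follows the paper's own proof. The paper likewise notes that the mismatch hypothesis bounds the private-value term $\log\bigl(\contv_t(x_t^{\mathrm o})/\contv_t(x_t^{\mathrm y})\bigr)$ in \Cref{eq:app-coarse-identity} between $-\varepsilon$ and $\varepsilon$, which is your Steps 1--2, and the $\varepsilon=0$ case then reduces to the exact identity as you describe.
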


The sign-reversal construction is stated in \Cref{prop:coarse-ambiguity}. It follows directly from \Cref{eq:app-coarse-identity}: the private-value gap can dominate the discount and within-person terms.

Longer comparisons multiply along a specified chain. Appendix~\ref{app:technical-foundations} records the exact identity and distinguishes a fixed-chain calculation from the path-consistency requirement in \Cref{prop:imposed-discount-rationalization}.


\section{Valuing the Policy Margin}\label{sec:nonpaternalism}

This section derives the within-person term in the age comparison. The policy margin determines the future payoff, market compensation, and active constraint wedge. These objects must refer to the same marginal experiment and the same private unit.

Comparisons are per current living person. For one-period saving, the marginal experiment removes one current resource unit from the younger state and, conditional on survival, delivers a payoff over next-period states. Survival probabilities enter the payoff weights, $\Qterm$ prices the transfer, and $\WedgeTerm$ records the effect of active constraints. Alternative actuarial conventions define different policy margins and therefore different comparisons.

\subsection{Within-person valuation and matched states}

For state $(t,x)$, let $P_{t,x}$ be the benchmark continuation law. A \emph{within-person perturbation} on $M$ removes one current resource unit at $(t,x)$ and reallocates the compensating future payoff only across that individual's continuation states, holding benchmark prices, policies, and constraints fixed. Let $D^P_{t,x}$ and $D^{S,M}_{t,x}$ be the private and social derivatives on this policy margin.

\begin{assumption}[Consistent derivatives on the policy margin]\label{ass:represented-domain-consistency}
On $M$, every $(t,x)$ admits affine pathwise derivatives on admissible self-connected perturbations, with the null perturbation assigned value zero. When the same perturbation is described either pathwise or as the induced cross-sectional perturbation, the two derivatives agree on their common domain.
\end{assumption}

\begin{definition}[Nonpaternalism]\label{def:nonpaternalism}
A criterion is \emph{nonpaternalistic on $M$} if, for any within-person perturbations $\delta,\widetilde\delta$ at $(t,x)$, equality of private derivatives implies equality of social derivatives for the policy margin, and a strict private ranking implies the same strict social ranking. This is the baseline case $\BridgeCorr(\pi)=1$. Relative to the individual's ranking, a higher social value for the future payoff corresponds to $\BridgeCorr(\pi)<1$, and a lower social value corresponds to $\BridgeCorr(\pi)>1$.
\end{definition}

\subsection{Continuation adjustments}\label{subsec:bridge-assumptions}

The within-person valuation restriction states how the continuation payoff enters the planner's valuation. It is separate from the rule that matches current states. Formally, $\BridgeCorr:\Pi^M\to(0,\infty)$ is a measurable continuation adjustment on the declared comparison domain. It is fixed before the comparison network is constructed and cannot be chosen link by link to fit the welfare weights. Three cases determine its value.
\begin{enumerate}[label=(\Alph*), leftmargin=2.5em]
    \item \emph{Private-ranking benchmark.} If the planner uses the individual's ranking of the continuation payoff, then $\BridgeCorr(\pi)=1$ and $\Rterm^S(\pi)=\Qterm(\pi)\WedgeTerm(\pi)$.
    \item \emph{Paternalism about continuation.} If the planner values that continuation differently from the individual, then $\Rterm^S(\pi)=\Qterm(\pi)\WedgeTerm(\pi)\BridgeCorr(\pi)$. A higher social value corresponds to $\BridgeCorr(\pi)<1$ and a lower social value to $\BridgeCorr(\pi)>1$.
    \item \emph{Survival, health, or other continuation weights.} Ethical weights on survival, health, or future states enter $\BridgeCorr$, not $\Comp$ or $\contv$. They change the link value, but not current support unless they change the policy margin itself.
\end{enumerate}
Let $\CWedgeTerm(\pi):=\WedgeTerm(\pi)\BridgeCorr(\pi)$. Positivity makes $\log\BridgeCorr(\pi)$ well defined. For an exact edge,
\begin{equation}\label{eq:bridge-threshold}
\log\frac{\ndw_t(x_t^{\mathrm y})}{\ndw_t(x_t^{\mathrm o})}
=
\log\Qterm(\pi)-\log\mathcal D^S(\pi)+\log\WedgeTerm(\pi)+\log\BridgeCorr(\pi).
\end{equation}
Thus the normalized welfare weight is larger for the younger state whenever
\[
\BridgeCorr(\pi)>\frac{\mathcal D^S(\pi)}{\Qterm(\pi)\WedgeTerm(\pi)},
\]
with the corresponding $e^{\varepsilon}$ adjustment under approximate matching. The baseline sets $\BridgeCorr\equiv1$. Parametric continuation adjustments are maintained primitives. When they are bounded rather than fixed, the bounds enter through explicit restrictions on links or target contrasts. The quantitative application then shows how the identified ranking changes with the unit, constraint wedge, adjustment range, or support.

Appendix~\ref{app:technical-foundations} gives the corresponding verification result for a general convex household problem.

\begin{proposition}[Matched states for one-period saving]\label{cor:one-asset-map}
In the one-asset life-cycle problem of \Cref{sec:illustrations}, with state $x=(a,b,\varepsilon)$, age-dependent deterministic schedules, age-invariant transition matrix $P$, monotone private marginal value, and a one-period risk-free saving payoff, equality of $\contv$ pins down the scalar resource state within a transition-row class. The distribution of the future payoff is the survival-contingent row $P_{k\cdot}$, its market value is $1+r$, the relevant KKT term is the borrowing constraint on saving, and the support is the common one-period continuation support. Hence the non-age state information to hold fixed is the current earnings transition-row type,
\[
\Comp_a^M(b,\varepsilon_k)=P_{k\cdot}:=(P_{kk'})_{k'}.
\]
The matching rule also requires equality of $\contv_a$. Distinct rows identify current earnings states, coincident rows define row-equivalence classes, and any further coarsening requires a separate support check.
\end{proposition}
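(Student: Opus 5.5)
The plan is to verify \Cref{def:margin-sufficient} directly for the candidate map $\Comp_a^M(b,\varepsilon_k)=P_{k\cdot}$, one object at a time. I first write down the policy margin. One current unit is removed at $x=(a,b,\varepsilon_k)$ and saved at the risk-free rate. Conditional on survival with probability $\psi_a$, it delivers $1+r$ units at each next-period state $(a+1,b',\varepsilon_{k'})$, with weight $P_{kk'}$. The future payoff in the sense of \Cref{rem:portfolio-legs} therefore has legs $\ell_{k'}=\psi_a P_{kk'}$, located at the benchmark policy $b'=g_a(b,\varepsilon_k)$. The gross market factor is $\Qterm=1+r$ under the paper's per-living-person convention, with survival absorbed in $\ell$. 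The constraint wedge comes from the KKT condition for the borrowing limit: $\contv_a(b,\varepsilon_k)=\beta(1+r)\psi_a\sum_{k'}P_{kk'}\contv_{a+1}(b',\varepsilon_{k'})+\lambda_a(b,\varepsilon_k)$, and $\WedgeTerm$ is the ratio of $\contv_a$ to the Euler term. These are the three objects that the map must identify, given age alignment and the value of $\contv$.

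Next I check identification. The market factor $1+r$ is common to all states, so it requires nothing from the state. The payoff's distribution over earnings is the row $P_{k\cdot}$, which $\Comp^M$ holds fixed. Its location requires $b$ (and hence $b'$) and the loadings in $\WedgeTerm$. Here I use monotonicity. Within a fixed age and a fixed row class, $b\mapsto\contv_a(b,\varepsilon_k)$ is strictly monotone, so equality of $\contv_a$ pins down $b$ within that age slice. Age-dependent deterministic schedules are allowed to vary, because they belong to the age comparison (\Cref{prop:margin-sufficient-map}(i)). Given $(a,b,\varepsilon_k)$, the Bellman solution determines $b'$, $\lambda_a$, and therefore $\WedgeTerm$. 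The common one-period continuation support then follows, because $\Gamma_{t,t+1}$ is generated by the row $P_{k\cdot}$ together with the deterministic policy.

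The subtle step, and the one I expect to be the main obstacle, is handling $\varepsilon_k$ itself when two earnings states share a row. The payoff law and the valuation objects depend on $\varepsilon_k$ only through $P_{k\cdot}$ and through the current budget. I would argue that once $\contv_a$ is fixed inside a row-equivalence class, the scalar resource state (cash on hand, combining $b$ and current earnings) is pinned down. Under the age-invariant $P$ and one-asset structure, the continuation problem depends only on cash on hand and the row, so the pair (row, $\contv_a$) suffices. Where rows are distinct, they identify $\varepsilon_k$ outright. This gives the stated dichotomy: distinct rows identify earnings states, and coincident rows define classes. I would close by invoking \Cref{prop:margin-sufficient-map}(ii) for the final claim. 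Any further coarsening merges distinct verified codes, and by \Cref{lem:exact-fiber-characterization} it then requires a new sufficiency and support check.
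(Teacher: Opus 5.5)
Your proposal is correct and follows essentially the same route as the paper: within a fixed age and row class, the one-step saving problem is a scalar problem in cash on hand $w_a=y_a(\varepsilon_k)+(1+r)b$. Strict monotonicity of $\contv_a$ then makes equal $\contv_a$ recover $w_a$, and hence the next asset and KKT loading, while distinct rows carry different Euler weights and identify the earnings state. The only difference is ordering: the paper works with $w_a$ from the start, which covers distinct and coincident rows in one step, whereas your second paragraph's claim that $\contv_a$ pins down $b$ ``within a fixed row class'' holds only for a fixed $\varepsilon_k$ until your third paragraph switches to cash on hand.
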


\begin{remark}[Verification in the one-asset model]\label{rem:worked-one-asset-map}
In the benchmark one-asset model, $\contv_a(b,\varepsilon_k)=u'(c_a(b,\varepsilon_k))$. Equality of $\contv_a$ recovers the scalar resource state inside a row class, while $P_{k\cdot}$ fixes the non-age continuation composition and support. The market and constraint terms are $\Qterm^M=1+r$ and $\WedgeTerm^M=[1-\lambda_a/\contv_a]^{-1}$. If regimes or multiple supporting prices change the payoff, price, support, or active constraint, the regime indicator enters $\Psi_a^M$.
\end{remark}

\subsection{Market valuation and constraints}

Fix an ordered comparison $\pi=(t,x_t^{\mathrm o},x_t^{\mathrm y},x_s^{\mathrm o})$, or a finite-support portfolio version of its future payoff. Let $\{\delta_R^\pi\}_{R\ge0}$ remove one current unit at $x_t^{\mathrm y}$ and add that future payoff at scale $R$.

\begin{definition}[Market compensation factor]\label{def:market-factor}
$\Qterm(\pi)$ is the gross future payoff that can be supported by one current unit along the policy margin. With a supporting state-price density, $\Qterm(\pi)$ is bounded by the inverse local price and equals it when the leg is locally purchasable.
\end{definition}

\begin{definition}[Private and social indifference factors]\label{def:represented-family}
When unique, $\widehat R^P(\pi)$ and $\widehat R^{S,M}(\pi)$ solve
\[
D^P_{t,x_t^{\mathrm y}}[\delta_{\widehat R^P(\pi)}^\pi]=0,
\qquad
D^{S,M}_{t,x_t^{\mathrm y}}[\delta_{\widehat R^{S,M}(\pi)}^\pi]=0.
\]
\end{definition}

\begin{proposition}[Social valuation of the policy margin]\label{prop:primitive-bridge}
If the social objective is differentiable and the marginal transfer has density $-1$ at the current state and density $R$ on the future payoff portfolio, then
\begin{equation}\label{eq:primitive-bridge-derivative}
D^{S,M}_{t,x_t^{\mathrm y}}[\delta_R^\pi]
=-\smw_t(x_t^{\mathrm y})+R\mathcal V_s^\omega(\pi),
\end{equation}
so the social indifference factor for the policy margin is
\begin{equation}\label{eq:primitive-bridge-factor}
\widehat R^{S,M}(\pi)=\Rterm^S(\pi)=\frac{\smw_t(x_t^{\mathrm y})}{\mathcal V_s^\omega(\pi)}.
\end{equation}
\end{proposition}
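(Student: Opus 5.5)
The plan is to compute the Gâteaux derivative of the social objective along $\delta_R^\pi$ directly from the representer in \Cref{def:smw}, and then solve the resulting affine equation in $R$.

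\textbf{Step 1: write down the perturbation.} The perturbation $\delta_R^\pi$ is a finite-horizon signed resource shift with two nonzero dated components:
\begin{itemize}[leftmargin=2em]
\item at date $t$, a unit point mass removed at $x_t^{\mathrm y}$, with density $-1$ per unit of cell mass;
\item at date $s$, the future payoff $L_s^\pi=\sum_j\ell_j\delta_{x_{sj}}$ scaled by $R$, with survival and transition probabilities already absorbed in $\ell_j$.
\end{itemize}
I will read "density" in the per-unit-of-cell-mass sense fixed in \Cref{sec:environment}. On a finite grid this makes $h_t$ and $h_s$ bounded, so clause (i) of \Cref{def:perturbation} holds. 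For continuous supports I would invoke the localization convention of Appendix~\ref{app:technical-foundations}.

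Admissibility in clause (ii) follows because the shift is generated by the within-person saving perturbation on $M$, holding prices fixed. The Gâteaux derivatives in clause (iii) are given by hypothesis and \Cref{ass:local-derivatives}.

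\textbf{Step 2: evaluate the derivative.} Substituting these densities into \Cref{eq:smw}, the date-$t$ term contributes $-\smw_t(x_t^{\mathrm y})$. The date-$s$ term contributes $R\sum_j\ell_j\smw_s(x_{sj})=R\,\mathcal V_s^\omega(\pi)$, by the definition in \Cref{rem:portfolio-legs}; the singleton case is the special case with one leg.

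Next I use \Cref{ass:represented-domain-consistency}. The pathwise social derivative $D^{S,M}_{t,x_t^{\mathrm y}}$ agrees with the cross-sectional derivative $D\SWF$ on the induced perturbation. This gives \Cref{eq:primitive-bridge-derivative}.

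\textbf{Step 3: solve for the indifference factor.} The map $R\mapsto -\smw_t(x_t^{\mathrm y})+R\mathcal V_s^\omega(\pi)$ is affine with strictly positive slope, since $\smw_s>0$ by \Cref{ass:local-derivatives} and $\ell_j\ge0$ with at least one positive weight. It therefore has a unique zero, and \Cref{def:represented-family} applies. That zero is $\smw_t(x_t^{\mathrm y})/\mathcal V_s^\omega(\pi)$, which equals $\Rterm^S(\pi)$ by \Cref{eq:Rsocial}.

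\textbf{Main obstacle.} The delicate step is justifying that the pathwise perturbation is a legitimate admissible cross-sectional perturbation whose derivative is captured by the representer. This requires two things:
\begin{itemize}[leftmargin=2em]
\item the point-mass shifts at $x_t^{\mathrm y}$ and on the support of $L_s^\pi$ must be expressed as bounded densities relative to $\dist_t$ and $\dist_s$;
\item survival weighting must enter only through $\ell_j$, not a second time through $\dist_s$.
\end{itemize}
I would handle both by working cellwise on the finite grid, where weights are derivatives per unit of cell mass. The measure-theoretic version would rest on the appendix localization argument and on the $\dist$-a.e. uniqueness from dated separation, so that the specific version of $\smw$ used at these points does not matter.
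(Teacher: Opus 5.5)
Your proposal is correct and follows the argument the paper treats as immediate; the paper gives no separate proof of this proposition. You evaluate the representer \eqref{eq:smw} on the two dated components of $\delta_R^\pi$, identify the pathwise and cross-sectional derivatives via \Cref{ass:represented-domain-consistency}, and solve the resulting affine equation in $R$, whose positive slope $\mathcal V_s^\omega(\pi)>0$ delivers the uniqueness required by \Cref{def:represented-family}. The cell-mass bookkeeping you flag as the main obstacle---scaling the date-$s$ densities so that each $x_{sj}$ receives $R\ell_j$ per unit of current mass at $x_t^{\mathrm y}$, and reading the derivative per unit of that mass---is what the paper leaves implicit in its per-current-living-person convention.
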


\begin{assumption}[Regularity of the marginal transfer]\label{ass:represented-family-bridge}
For every exact, approximate, or coarse comparison used below, either \Cref{prop:primitive-bridge} applies or the marginal transfer is restricted to a domain satisfying \Cref{eq:primitive-bridge-derivative,eq:primitive-bridge-factor}.
\end{assumption}

\begin{proposition}[Market compensation under constraints]\label{prop:supporting-prices-broad}
Suppose $\Qterm(\pi)$ is the local marginal exchange rate for the future payoff and the sign-normalized KKT loading of that transfer is $\Xi(\pi)\in[0,\contv_t(x_t^{\mathrm y}))$. Then
\begin{equation}\label{eq:private-comp-broad}
\widehat R^P(\pi)=\Qterm(\pi)\left[1-\frac{\Xi(\pi)}{\contv_t(x_t^{\mathrm y})}\right]^{-1}.
\end{equation}
The bracketed term is one when all relevant loadings vanish.
\end{proposition}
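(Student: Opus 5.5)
We prove \Cref{prop:supporting-prices-broad} by computing the private derivative of the marginal transfer $\delta_R^\pi$ from the household's KKT conditions. We then solve the indifference condition in \Cref{def:represented-family} for $R$, mirroring the social computation in \Cref{prop:primitive-bridge}.

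\textbf{Step 1: the private derivative.} The transfer removes one current unit at $x_t^{\mathrm y}$ and adds the future payoff at scale $R$. By the envelope theorem under \Cref{ass:benchmark}, with the benchmark allocation optimal and constraints fixed, the direct effect of the current unit is $-\contv_t(x_t^{\mathrm y})$. The future payoff is valued at the household's shadow valuation of the leg. Supporting prices and the KKT conditions give the following relation. At an interior optimum, the private value of one unit of the future payoff times $\Qterm(\pi)$ equals $\contv_t(x_t^{\mathrm y})$. This is the standard Euler condition generalized to the portfolio leg via \Cref{rem:portfolio-legs}. When a constraint loads on the transfer, the Euler condition becomes
\[
\contv_t(x_t^{\mathrm y})-\Xi(\pi)=\Qterm(\pi)\,V^P_s(\pi).
\]
Here $V^P_s(\pi)$ is the private marginal value of the unit payoff, and $\Xi(\pi)$ is the sign-normalized multiplier on the binding constraint. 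Hence
\[
D^P_{t,x_t^{\mathrm y}}[\delta_R^\pi]=-\contv_t(x_t^{\mathrm y})+R\,V^P_s(\pi).
\]
By \Cref{ass:represented-domain-consistency}, this derivative is affine in $R$ and vanishes at $R=0$ only through the current leg.

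\textbf{Step 2: solve for the indifference factor.} Setting the derivative to zero gives $\widehat R^P=\contv_t/V^P_s$. Substituting $V^P_s=(\contv_t-\Xi)/\Qterm$ then yields \eqref{eq:private-comp-broad}. The hypothesis $\Xi<\contv_t$ guarantees $V^P_s>0$, so the root exists, is unique, and is positive and finite. When $\Xi=0$ we recover $\widehat R^P=\Qterm$.

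\textbf{Main obstacle.} The delicate step is justifying the KKT relation with the correct sign normalization when the marginal exchange rate is only locally defined. Per \Cref{def:market-factor}, $\Qterm$ only bounds the inverse state price unless the leg is purchasable. I would first handle the locally purchasable case, where the supporting state-price density gives equality directly. I would then define $\Xi(\pi)$ as exactly the loading that absorbs the gap in the Euler inequality, so that the displayed identity holds by construction on the policy margin $M$. The remaining check is that the constraint is one-sided, as with a borrowing constraint on saving, which gives $\Xi\ge0$. This matches \Cref{rem:worked-one-asset-map}, where $\WedgeTerm^M=[1-\lambda_a/\contv_a]^{-1}$.
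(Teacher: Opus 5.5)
Your proposal is correct and follows essentially the paper's proof. The paper writes the private derivative directly as $-\contv_t(x_t^{\mathrm y})+\frac{R}{\Qterm(\pi)}\bigl(\contv_t(x_t^{\mathrm y})-\Xi(\pi)\bigr)$, with $\Xi(\pi):=\sum_j\lambda_j(\pi)\alpha_j(\pi)$ taken as the reduced-form KKT loading; this is your Euler relation $V^P_s=(\contv_t-\Xi)/\Qterm$ substituted in, after which both arguments set the derivative to zero and use $\Xi<\contv_t$ for positivity. One wording fix: the affine derivative does not vanish at $R=0$, where it equals $-\contv_t(x_t^{\mathrm y})$; what \Cref{ass:represented-domain-consistency} actually gives is that the null perturbation has value zero, so the intercept comes only from the current leg.
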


\begin{corollary}[Convex liquid/illiquid extension]\label{cor:liquid-illiquid-bridge}
In a convex liquid/illiquid extension with verified supporting prices, the same KKT formula applies after the comparison records the current illiquid state and relevant adjustment-cost gradients.
\end{corollary}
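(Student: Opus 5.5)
The statement immediately above is \Cref{cor:liquid-illiquid-bridge}. The plan is to reduce it to \Cref{prop:supporting-prices-broad}: the liquid/illiquid problem will be cast as a general convex household problem whose policy margin has a well-defined local marginal exchange rate $\Qterm(\pi)$ and a sign-normalized KKT loading $\Xi(\pi)$. The only work specific to this case is to show that, once the comparison records the current illiquid state and the relevant adjustment-cost gradients, these two objects are identified. Formula \Cref{eq:private-comp-broad} then holds verbatim.

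The first step is to write the household's Lagrangian at $x_t^{\mathrm y}=(a,b,k,z)$, with liquid assets $b$, illiquid assets $k$ and remaining state $z$. It has a borrowing constraint on $b$, a nonnegativity constraint on $k$, and a convex adjustment cost $\phi(d,k)$ on illiquid deposits $d$. Convexity and the verified supporting prices give KKT multipliers $(\lambda_b,\lambda_k)$ and a supporting state-price density.

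The second step is to take the within-person perturbation $\delta_R^\pi$, which removes one current unit and adds the payoff at scale $R$. Differentiating the value function along this perturbation, using the envelope theorem, gives
\[
D^P_{t,x_t^{\mathrm y}}[\delta_R^\pi]=-\contv_t(x_t^{\mathrm y})+\Xi(\pi)+R\,\contv_t(x_t^{\mathrm y})/\Qterm(\pi).
\]
The term $\Xi(\pi)$ collects the multipliers weighted by the transfer's loadings, and $\Qterm(\pi)$ is now the inverse local price, including the marginal adjustment cost $\partial_d\phi$ whenever the margin is routed through the illiquid account. Solving $D^P=0$ yields \Cref{eq:private-comp-broad}, provided $\Xi(\pi)\in[0,\contv_t)$. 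Nonnegativity follows from the sign normalization of the KKT multipliers. The strict upper bound follows from interior optimality of current consumption, with $\contv_t=u'(c)>0$.

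The third step is the verification that justifies the phrase ``the same formula applies after the comparison records\ldots''. Extend the certificate of \Cref{prop:margin-sufficient-map} by adding the current illiquid state $k$ and the gradients $(\partial_d\phi,\partial_k\phi)$ to $\Comp_t$. I then check that, together with equality of $\contv_t$ and age alignment, these fix the payoff's support, the factor $\Qterm$, and the loadings entering $\Xi$. By part (iii) of \Cref{prop:margin-sufficient-map}, this refinement remains margin-sufficient. By \Cref{ass:represented-family-bridge}, the social factor equals $\Rterm^S$. Hence under nonpaternalism $\Rterm^S=\Qterm\WedgeTerm$ with $\WedgeTerm=[1-\Xi/\contv_t]^{-1}$.

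I expect the main obstacle to be the kink of the adjustment cost at $d=0$. There $\phi$ is not differentiable, so $\Qterm$ is not a single number: the supporting prices form an interval given by the subdifferential. I would handle this first by restricting to comparisons whose regime (inaction versus adjustment) is recorded in $\Comp_t$, as in \Cref{rem:worked-one-asset-map}. On an inaction region, the margin routes through the liquid account with $\Qterm=1+r$ and the subgradient is absorbed into the loading. On an adjustment region, $\phi$ is differentiable and the envelope argument is standard. If the supporting price is still not unique, the conclusion should be read with $\Qterm$ equal to the maximal exchange ratio in $\mathcal T_{t,s}$, which matches \Cref{def:market-factor}.
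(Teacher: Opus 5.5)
Your overall route is the paper's: reduce to \Cref{prop:supporting-prices-broad}, augment the matched-state tuple with the illiquid holding and the adjustment-cost gradients so that the payoff, $\Qterm$ and the loadings are pinned down, and obtain the social equality from the bridge assumptions plus nonpaternalism as in \Cref{prop:implementability}.

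\textbf{The gap is in the one computation you actually carry out.} Setting your displayed derivative $-\contv_t(x_t^{\mathrm y})+\Xi(\pi)+R\,\contv_t(x_t^{\mathrm y})/\Qterm(\pi)$ to zero gives $\widehat R^P(\pi)=\Qterm(\pi)\bigl[1-\Xi(\pi)/\contv_t(x_t^{\mathrm y})\bigr]$, not \Cref{eq:private-comp-broad}. With $\Xi\ge0$ this puts private compensation below market compensation, which is a wedge $\le1$. That reverses the inequality $\Rterm^S\ge\Qterm$ and breaks the young-tilt implication of \Cref{thm:impossibility}.

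The multiplier cannot enter as an additive, $R$-independent term. The KKT/Euler condition for the margin says the continuation value of one unit of the future payoff is $(\contv_t-\Xi)/\Qterm$, so the loading scales with $R$:
\[
D^P_{t,x_t^{\mathrm y}}[\delta_R^\pi]=-\contv_t(x_t^{\mathrm y})+\frac{R}{\Qterm(\pi)}\bigl(\contv_t(x_t^{\mathrm y})-\Xi(\pi)\bigr).
\]
This is the expression in the paper's proof of \Cref{prop:supporting-prices-broad}, and it does give the inverse bracket. In the one-asset case it is just $u'(c_a)=\beta p_a(1+r)S+\lambda_a$ substituted into $-u'(c_a)+\beta p_aRS$. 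The simplest repair is the paper's: once the verified supporting prices deliver $\Qterm$ and $\Xi$ for the two-asset problem, invoke \Cref{prop:supporting-prices-broad} directly rather than re-deriving the derivative.

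Some smaller points:
\begin{itemize}
\item The strict bound $\Xi<\contv_t$ does not follow from $u'(c)>0$. In the correct expression it is positivity of the continuation marginal value $\contv_t-\Xi$. The paper treats this as the admissibility condition built into \Cref{ass:supporting-prices}, i.e.\ the ``verified supporting prices'' hypothesis.
\item Part (iii) of \Cref{prop:margin-sufficient-map} cannot certify your augmented map. It preserves sufficiency only under refinement of a map already sufficient for the same margin, and the one-asset transition-row code is not sufficient for the liquid/illiquid margin. The direct check you announce is what has to do the work.
\item The step from nonpaternalism to $\widehat R^{S,M}=\widehat R^P$ also uses \Cref{ass:represented-domain-consistency}: the null perturbation is valued at zero and derivatives are affine. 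Cite it alongside \Cref{ass:represented-family-bridge}.
\end{itemize}
Your regime coding at the adjustment kink is a sensible addition. The paper leaves it implicit in the supporting-price hypothesis and in \Cref{rem:nonconvex-scope}.
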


\begin{remark}[Outside the convex class]\label{rem:nonconvex-scope}
Outside the convex class, supporting prices may fail to be unique and loadings may vary with the active regime. The regime indicators needed to identify the payoff, price, or active constraint then enter the verified comparison.
\end{remark}

\begin{remark}[Verification checklist for matched states]\label{rem:verification-recipe}
For policy margin $M$, specify the future payoff, derive its market value, list the active constraints, collect the nonresource state information in $\Psi_a^M$, and verify on support that equality of $(\Psi_a^M,\contv_a)$ holds all of these objects fixed.
\end{remark}

\begin{assumption}[Local supporting price representation]\label{ass:supporting-prices}
Every comparison using market valuation has a verified supporting-price/KKT argument yielding $\Xi(\pi)\in[0,\contv_t(x_t^{\mathrm y}))$ such that
\begin{equation}\label{eq:private-comp-representation}
\widehat R^P(\pi)=\Qterm(\pi)\left[1-\frac{\Xi(\pi)}{\contv_t(x_t^{\mathrm y})}\right]^{-1}.
\end{equation}
If the perturbation is locally spanned and all relevant constraints are slack, then $\Xi(\pi)=0$.
\end{assumption}

\begin{proposition}[Within-person valuation under constraints]\label{prop:implementability}
Suppose \Cref{ass:benchmark,ass:localized-cell,ass:local-derivatives,ass:supporting-prices,ass:represented-domain-consistency,ass:represented-family-bridge} hold and the criterion is nonpaternalistic on $M$. Then every verified comparison satisfies
\begin{equation}\label{eq:R-market-wedge}
\Rterm^S(\pi)=\Qterm(\pi)\WedgeTerm(\pi),
\qquad
\WedgeTerm(\pi)=\left[1-\frac{\Xi(\pi)}{\contv_t(x_t^{\mathrm y})}\right]^{-1}\ge1,
\end{equation}
so $\Rterm^S(\pi)\ge\Qterm(\pi)$ in the baseline $\BridgeCorr\equiv1$ case. Equality holds whenever $\Xi(\pi)=0$; a sufficient primitive condition is local spanning and local slackness. With a correction to the within-person valuation restriction, define $\CWedgeTerm(\pi)=\WedgeTerm(\pi)\BridgeCorr(\pi)$ and replace the right side by $\Qterm(\pi)\CWedgeTerm(\pi)$.
\end{proposition}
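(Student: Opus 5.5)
The plan is to show that the social indifference factor and the private indifference factor coincide on the policy margin, and then to substitute the representation of the private factor from \Cref{ass:supporting-prices}. There are three steps.

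\textbf{Step 1: the social factor.} By \Cref{prop:primitive-bridge}, which \Cref{ass:represented-family-bridge} makes available for every comparison used, the social derivative along $\delta_R^\pi$ is the affine function
\[
R\mapsto-\smw_t(x_t^{\mathrm y})+R\,\mathcal V_s^\omega(\pi).
\]
Its unique zero is $\widehat R^{S,M}(\pi)=\Rterm^S(\pi)$. Positivity of $\smw$ from \Cref{ass:local-derivatives} ensures $\mathcal V_s^\omega(\pi)>0$, so this root is well defined.

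\textbf{Step 2: the private and social zeros coincide.} This is the main obstacle. The argument runs as follows.
\begin{enumerate}[label=(\alph*), leftmargin=2.3em]
    \item By \Cref{ass:represented-domain-consistency}, the pathwise private and social derivatives are affine on the family $\{\delta_R^\pi\}$. The same assumption lets me identify the pathwise social derivative with the cross-sectional one from Step 1.
    \item The null perturbation is a within-person perturbation with private and social derivative both equal to zero.
    \item Suppose $D^P_{t,x_t^{\mathrm y}}[\delta_R^\pi]=0$. Then $\delta_R^\pi$ and the null perturbation have equal private derivatives. Nonpaternalism on $M$ (\Cref{def:nonpaternalism}) then forces $D^{S,M}_{t,x_t^{\mathrm y}}[\delta_R^\pi]=0$.
    \item Hence the private zero $\widehat R^P(\pi)$ is also a social zero. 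Uniqueness of the social zero from Step 1 gives $\widehat R^P(\pi)=\widehat R^{S,M}(\pi)$.
    \item Existence of the private zero is guaranteed by \Cref{ass:supporting-prices}, which supplies $\widehat R^P(\pi)$ explicitly.
    \item The strict-ranking clause of nonpaternalism adds a consistency check. For $R$ above the private root, the private derivative is positive, so the social derivative must be positive as well, in line with the affine sign pattern.
\end{enumerate}
The delicate point is confirming that each $\delta_R^\pi$ is an admissible self-connected within-person perturbation. It removes a unit at $x_t^{\mathrm y}$ and pays only on that person's continuation states, which is exactly the within-person class. I will also invoke \Cref{ass:localized-cell} to localize the cell-mass derivative at $x_t^{\mathrm y}$.

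\textbf{Step 3: the market-and-wedge formula and the inequality.} Combining Steps 1 and 2 with \Cref{ass:supporting-prices} gives
\[
\Rterm^S(\pi)=\widehat R^P(\pi)=\Qterm(\pi)\WedgeTerm(\pi),\qquad
\WedgeTerm(\pi)=\left[1-\Xi(\pi)/\contv_t(x_t^{\mathrm y})\right]^{-1}.
\]
Since $\Xi(\pi)\in[0,\contv_t(x_t^{\mathrm y}))$, the bracket lies in $(0,1]$, so $\WedgeTerm(\pi)\ge1$ and therefore $\Rterm^S(\pi)\ge\Qterm(\pi)$. If $\Xi(\pi)=0$, the bracket equals one and equality holds. \Cref{ass:supporting-prices} delivers $\Xi(\pi)=0$ under local spanning and slackness.

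For the corrected case, I replace nonpaternalism with the declared restriction of \Cref{subsec:bridge-assumptions}, case (B). There the social indifference factor is $\BridgeCorr(\pi)$ times the private factor. The same chain of steps then gives $\Rterm^S(\pi)=\Qterm(\pi)\WedgeTerm(\pi)\BridgeCorr(\pi)=\Qterm(\pi)\CWedgeTerm(\pi)$.
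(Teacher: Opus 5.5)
Your proposal is correct and follows the paper's route. The private indifference factor is $\Qterm\WedgeTerm$ by the supporting-price representation; the paper cites \Cref{lem:supporting-prices}, which gives the same formula. Nonpaternalism together with \Cref{ass:represented-family-bridge} then identifies that factor with $\Rterm^S$, and $\WedgeTerm\ge1$ follows from $\Xi\in[0,\contv_t)$. Your Step 2 compares $\delta^\pi_{\widehat R^P}$ with the null perturbation, which \Cref{ass:represented-domain-consistency} assigns value zero, and then uses uniqueness of the root of the affine social derivative; this spells out what the paper compresses into a single sentence.
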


\begin{remark}[Economic meaning of the wedge]\label{rem:wedge-economic}
$\WedgeTerm(\pi)$ is the inverse share of the current private marginal value available for the future payoff after the KKT terms are netted out. If $\mathcal D^S(\pi)=\Qterm(\pi)$, then \Cref{thm:mr,prop:implementability} give $\ndw_t(x_t^{\mathrm y})/\ndw_t(x_t^{\mathrm o})=\WedgeTerm(\pi)$.
\end{remark}

\begin{proposition}[Target comparisons in private-value units]\label{prop:policy-margin-reduction}
Fix a financing rule for the policy margin and write two local policy perturbations in $\PVEU$ units as $\Delta^A=\tau^A-\phi$ and $\Delta^B=\tau^B-\phi$, with common financing measure $\phi$. If both are admissible and $\PVEU$-integrable, then
\begin{equation}\label{eq:policy-margin-reduction}
D\SWF[\Delta^A]-D\SWF[\Delta^B]
=\int \ndw_t(x)(\tau^A-\tau^B)(\dd x).
\end{equation}
If $\tau^A(\Pop)=\tau^B(\Pop)>0$, the sign equals the sign of the difference in $\ndw$-weighted target averages.
\end{proposition}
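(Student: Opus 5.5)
The plan is to reduce both perturbations to the social derivative representation of \Cref{ass:local-derivatives} and then change units from resources to private-value-equivalent units using the factorization of \Cref{lem:factorization}.

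\emph{Step 1: units.} Recall that one $\PVEU$ at $x$ is a resource shift of size $1/\contv_t(x)$. So a $\PVEU$ measure $\nu$ corresponds to the resource shift with density
\[
h_t^\nu(x)\,\dist_t(\dd x)=\frac{1}{\contv_t(x)}\,\nu(\dd x).
\]
I will state this conversion explicitly for $\tau^A$, $\tau^B$ and $\phi$. Admissibility of $\Delta^A$ and $\Delta^B$ is assumed, so the converted resource shifts are admissible perturbations in the sense of \Cref{def:perturbation}. In particular, their densities lie in $L^\infty(\dist_t)$.

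\emph{Step 2: linear representation.} Applying \Cref{def:smw} to each perturbation, and then \Cref{eq:factorization}, gives
\[
D\SWF[\Delta^j]=\int \smw_t(x)\,\frac{1}{\contv_t(x)}\,(\tau^j-\phi)(\dd x)=\int \ndw_t(x)\,(\tau^j-\phi)(\dd x),\qquad j\in\{A,B\}.
\]
Here $\PVEU$-integrability is exactly what makes each integral finite. Because of that, splitting each integral into its $\tau^j$ and $\phi$ parts is legitimate.

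\emph{Step 3: subtraction.} Subtracting the two expressions, the common financing term $\int\ndw_t\,\dd\phi$ cancels, and this yields \Cref{eq:policy-margin-reduction}.

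\emph{Step 4: sign statement.} Let $T=\tau^A(\Pop)=\tau^B(\Pop)>0$. Dividing by $T$, the right side equals
\[
\int\ndw_t\,\dd\bar\tau^A-\int\ndw_t\,\dd\bar\tau^B,
\qquad \bar\tau^j=\tau^j/T.
\]
This is the difference of the $\ndw$-weighted target averages, and since $T>0$ the division does not change the sign.

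The main obstacle is Step 2. It relies on linearity of the Gâteaux derivative across the $\tau^j$ and $\phi$ pieces, but $\tau^j$ or $\phi$ alone need not be admissible; only the differences $\Delta^A$ and $\Delta^B$ are assumed to be. I would avoid decomposing the derivative itself. Instead I would use that \Cref{eq:smw} is an integral representation valid for the admissible $\Delta^j$ directly, and split only the resulting integral. That step needs only integrability, not admissibility of the pieces.

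A related subtlety is that \Cref{ass:local-derivatives} requires absolute continuity with respect to $\dist_t$. If the $\PVEU$ measures charge $\dist_t$-null sets, $\ndw_t$ is undefined there. I would note that admissibility rules this out, and that on finite grids the densities are per unit of cell mass, so the identity holds cellwise. If the perturbations span several dates, the same argument applies date by date and the results are summed over $t$.
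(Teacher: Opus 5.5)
Your proposal is correct and follows the paper's proof. The paper cites the $\PVEU$ representation (\Cref{prop:pcu-representation}), which is exactly your substitution of $\smw_t=\ndw_t\contv_t$ into the derivative representation; it then cancels the common financing integral and uses equal target mass for the sign.

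One small quibble concerns integrability. $\PVEU$-integrability bounds the total variation of $\tau^j-\phi$, not the separate $\ndw_t$-integrals of $\tau^j$ and $\phi$. To respect your own caution about splitting, subtract the two integrals against $\tau^A-\phi$ and $\tau^B-\phi$ directly, using linearity in the measure. That route never requires $\int\ndw_t\,\dd\phi$ to exist on its own.
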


\begin{remark}[Policy to margin mapping]\label{rem:policy-mapping}
A full policy maps into this local object only after specifying the financing rule, private unit, support, and component normalizations. The quantitative section reports the supported local target object and explicit omitted-support layers.
\end{remark}

\begin{corollary}[State-price form of the restriction]\label{cor:market-value}
If the future payoff is locally purchasable at state-price density $q_{t,s}$, then $\Rterm^S(\pi)=q_{t,s}(x_s^{\mathrm o}\mid x_t^{\mathrm y})^{-1}\WedgeTerm(\pi)$ in the singleton case, with the analogous portfolio price for finite-support legs.
\end{corollary}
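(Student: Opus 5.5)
The claim is the state-price form $\Rterm^S(\pi)=q_{t,s}(x_s^{\mathrm o}\mid x_t^{\mathrm y})^{-1}\WedgeTerm(\pi)$. The plan is to combine \Cref{prop:implementability} with \Cref{def:market-factor}. Under the maintained hypotheses of \Cref{prop:implementability}, in the baseline case $\BridgeCorr\equiv1$, every verified comparison satisfies $\Rterm^S(\pi)=\Qterm(\pi)\WedgeTerm(\pi)$. Here $\WedgeTerm(\pi)=[1-\Xi(\pi)/\contv_t(x_t^{\mathrm y})]^{-1}$ comes from the supporting-price/KKT representation in \Cref{ass:supporting-prices}. It therefore suffices to show that $\Qterm(\pi)=q_{t,s}(x_s^{\mathrm o}\mid x_t^{\mathrm y})^{-1}$ when the future payoff is locally purchasable, and then substitute.

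For the singleton case, \Cref{def:market-factor} says $\Qterm(\pi)$ is the gross future payoff that one current unit can support along the policy margin. With a supporting state-price density, this payoff is bounded above by the inverse local price and attains that bound when the leg is locally purchasable. The argument would be made explicit as follows.

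\begin{enumerate}[label=(\roman*), leftmargin=2.3em]
\item Purchasing $R$ units of the payoff delivered at $x_s^{\mathrm o}$ costs $R\,q_{t,s}(x_s^{\mathrm o}\mid x_t^{\mathrm y})$ current units, with survival already absorbed in the payoff weights as in \Cref{rem:portfolio-legs}.
\item Supporting prices rule out any cheaper route, so one current unit finances at most $q_{t,s}^{-1}$.
\item Local purchasability means the attainable set $\mathcal T_{t,s}(x_t^{\mathrm y},x_s^{\mathrm o})$ contains the ray $(\delta,\delta/q_{t,s})$ for small $\delta$, so the bound is attained.
\end{enumerate}

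Hence the maximal exchange ratio is $q_{t,s}^{-1}$, which is finite and positive because the comparison is supported.

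For finite-support legs $L_s^\pi=\sum_j\ell_j\delta_{x_{sj}}$, the same argument applies with the portfolio price $\sum_j\ell_j q_{t,s}(x_{sj}\mid x_t^{\mathrm y})$ in place of the singleton price. Linearity of the local purchase cost in the legs gives $\Qterm(\pi)$ equal to the inverse of this portfolio price. With a declared continuation adjustment, the same substitution into the corrected form $\Qterm(\pi)\CWedgeTerm(\pi)$ of \Cref{prop:implementability} gives the analogous statement.

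The only real obstacle is keeping the attained bound consistent with the KKT term. Local purchasability identifies $\Qterm$ as the market price. The constraint loading $\Xi(\pi)$, for example from a binding borrowing constraint on saving, then enters only through $\WedgeTerm$ and does not also distort $\Qterm$. I would handle this by using the decomposition in \Cref{prop:supporting-prices-broad}, where $\Qterm$ is the local marginal exchange rate and $\Xi$ is separately sign-normalized, so the two factors do not double-count.
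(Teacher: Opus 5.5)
Your proposal is correct and follows the paper's own proof: local purchasability gives $\Qterm(\pi)=q_{t,s}(x_s^{\mathrm o}\mid x_t^{\mathrm y})^{-1}$, and substituting this into $\Rterm^S(\pi)=\Qterm(\pi)\WedgeTerm(\pi)$ from \Cref{prop:implementability} yields the claim. Your steps (i)--(iii) and the portfolio-price extension only spell out what the paper takes directly from \Cref{def:market-factor}.
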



\section{Coherence and Identification}\label{sec:age-blindness}

The local decomposition yields two identification results. First, social discounting below market compensation implies a larger normalized welfare weight on the younger state when the comparison is exact and the maintained within-person restriction holds. Second, the local ratios identify one coherent welfare-weight system only when the consistency conditions in \Cref{prop:imposed-discount-rationalization} hold. The network representation follows the integrability logic familiar from revealed preference and convex analysis \citep{Afriat1967,Rockafellar1970,Varian1982}, while the economic environment determines which states can be linked and how each link is valued.

\begin{proposition}[Age ranking under exact matching]\label{thm:impossibility}
Let $\pi=(t,x_t^{\mathrm o},x_t^{\mathrm y},x_s^{\mathrm o})$ be an exactly matched comparison for which the future payoff, supporting-price/KKT representation, and within-person valuation restriction are verified. Suppose \Cref{ass:benchmark,ass:localized-cell,ass:local-derivatives,ass:supporting-prices,ass:represented-domain-consistency,ass:represented-family-bridge} hold and the criterion is nonpaternalistic on the policy margin $M$. Then
\begin{equation}\label{eq:impossibility-ratio}
\log\frac{\ndw_t(x_t^{\mathrm y})}{\ndw_t(x_t^{\mathrm o})}
=
\log\Qterm(\pi)-\log\mathcal D^S(\pi)+\log\WedgeTerm(\pi)
\ge
\log\Qterm(\pi)-\log\mathcal D^S(\pi).
\end{equation}
Thus, if
\begin{equation}\label{eq:low-social-discounting}
\mathcal D^S(\pi)<\Qterm(\pi),
\end{equation}
then
\begin{equation}\label{eq:young-tilt}
\ndw_t(x_t^{\mathrm y})>\ndw_t(x_t^{\mathrm o}).
\end{equation}
Conditional age neutrality in normalized welfare weights fails on any matched set that supports such a comparison. If $\mathcal D^S(\pi)=\Qterm(\pi)$, then
\[
\frac{\ndw_t(x_t^{\mathrm y})}{\ndw_t(x_t^{\mathrm o})}=\WedgeTerm(\pi),
\]
so any remaining younger tilt is entirely due to the constraint wedge.
\end{proposition}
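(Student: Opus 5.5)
The plan is to combine the exact-matching decomposition of \Cref{thm:mr} with the within-person valuation result of \Cref{prop:implementability}, take logarithms, and then read off the sign implications.

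First, because $\pi$ is exactly matched, $\contv_t(x_t^{\mathrm o})=\contv_t(x_t^{\mathrm y})$. \Cref{thm:mr} then gives $\mathcal D^S(\pi)=\Mterm(\pi)\Rterm^S(\pi)$ with $\Mterm(\pi)=\ndw_t(x_t^{\mathrm o})/\ndw_t(x_t^{\mathrm y})$. All objects are strictly positive: $\smw_t>0$ and $\contv_t>0$ by \Cref{ass:local-derivatives}, so $\ndw_t>0$, and $\mathcal V_s^\omega(\pi)>0$ as a positive combination of positive weights. Taking logs is therefore legitimate and yields
\[
\log\frac{\ndw_t(x_t^{\mathrm y})}{\ndw_t(x_t^{\mathrm o})}=\log\Rterm^S(\pi)-\log\mathcal D^S(\pi).
\]

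Second, I would substitute $\Rterm^S(\pi)=\Qterm(\pi)\WedgeTerm(\pi)$ from \Cref{prop:implementability}. Its hypotheses coincide with those listed in the statement: nonpaternalism gives the baseline $\BridgeCorr\equiv1$, and the verified supporting-price/KKT representation supplies $\Xi(\pi)\in[0,\contv_t(x_t^{\mathrm y}))$. This produces the equality in \eqref{eq:impossibility-ratio}. The inequality follows because $\WedgeTerm(\pi)=[1-\Xi/\contv_t]^{-1}\ge1$, so $\log\WedgeTerm(\pi)\ge0$.

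Third come the consequences. If $\mathcal D^S(\pi)<\Qterm(\pi)$, the right side of \eqref{eq:impossibility-ratio} is strictly positive, so $\ndw_t(x_t^{\mathrm y})>\ndw_t(x_t^{\mathrm o})$. This contradicts \Cref{def:cab} on any matched set containing $\pi$, so conditional age neutrality fails there. If instead $\mathcal D^S=\Qterm$, exponentiating gives the ratio $\WedgeTerm(\pi)$, which is the statement of \Cref{rem:wedge-economic}.

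The only step needing care is the second one. It requires checking that the social indifference factor from \Cref{prop:primitive-bridge} coincides with $\Rterm^S$ under \Cref{ass:represented-family-bridge}. It also requires checking that nonpaternalism, through \Cref{ass:represented-domain-consistency}, equates the private and social indifference factors, $\widehat R^{S,M}=\widehat R^P$. Since that chain is exactly the content of \Cref{prop:implementability}, I would invoke it rather than reprove it, noting only that exact matching is what removes the private-value term absent from that proposition.
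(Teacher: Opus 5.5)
Your proposal is correct and follows the paper's own proof essentially step for step: \Cref{thm:mr} gives $\ndw_t(x_t^{\mathrm y})/\ndw_t(x_t^{\mathrm o})=\Rterm^S(\pi)/\mathcal D^S(\pi)$, \Cref{prop:implementability} substitutes $\Rterm^S=\Qterm\WedgeTerm$, and $\WedgeTerm\ge1$ delivers the inequality, the younger tilt, and the failure of conditional neutrality. Your positivity check before taking logarithms and the exponentiation for the case $\mathcal D^S=\Qterm$ are details the paper leaves implicit, not a different route.
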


\begin{remark}[Scope of the nonpaternalistic benchmark]\label{rem:nonpaternalism-sensitivity}
With a continuation adjustment, set $\CWedgeTerm(\pi):=\WedgeTerm(\pi)\BridgeCorr(\pi)$. Then \Cref{eq:impossibility-ratio} becomes $\log(\ndw_t(x_t^{\mathrm y})/\ndw_t(x_t^{\mathrm o}))=\log\Qterm(\pi)-\log\mathcal D^S(\pi)+\log\CWedgeTerm(\pi)$, and the younger tilt conclusion requires $\mathcal D^S(\pi)<\Qterm(\pi)\CWedgeTerm(\pi)$. The baseline maintained in the numerical diagnostic is $\BridgeCorr\equiv1$.
\end{remark}

\subsection{Supported comparison networks}

\begin{definition}[Supported comparison network]\label{def:exact-comparison-graph}
Fix date $t$ and a policy margin. The supported comparison network $G_t=(V_t,E_t)$ has vertices given by current states that belong to a populated matched set. A directed link $e=(x^{\mathrm y},x^{\mathrm o})$ belongs to $E_t$ when $a(x^{\mathrm o})>a(x^{\mathrm y})$, the two states are exactly matched, and some continuation state $x_s^{\mathrm o}\in\Gamma_{t,s}(x^{\mathrm y})$ reaches age $a(x^{\mathrm o})$.
\end{definition}

The network is defined on current states. Future payoffs determine the link labels. Exact matches are the conceptual benchmark. Finite applications use explicit tolerance bands for private marginal values.

\begin{proposition}[Local welfare-ratio labels]\label{prop:exact-edge-weight}
For a link $e=(x^{\mathrm y},x^{\mathrm o})$ and continuation state $x_s^{\mathrm o}$, write
\[
\pi(e;x_s^{\mathrm o}):=(t,x^{\mathrm o},x^{\mathrm y},x_s^{\mathrm o})
\]
and define the corrected constraint term $\CWedgeTerm(\pi):=\WedgeTerm(\pi)\BridgeCorr(\pi)$, with $\BridgeCorr\equiv1$ in the baseline. The edge label is
\begin{equation}\label{eq:edge-weight-witness}
g_t(e;x_s^{\mathrm o})
:=
\log\Qterm(\pi(e;x_s^{\mathrm o}))
-
\log\mathcal D^S(\pi(e;x_s^{\mathrm o}))
+
\log\CWedgeTerm(\pi(e;x_s^{\mathrm o})).
\end{equation}
Under the assumptions of \Cref{thm:impossibility} in the baseline, and under those assumptions plus the continuation-adjusted relation $\Rterm^S=\Qterm\WedgeTerm\BridgeCorr$ when $\BridgeCorr\neq1$,
\begin{equation}\label{eq:edge-weight-gradient}
g_t(e;x_s^{\mathrm o})=
\log\ndw_t(x^{\mathrm y})-
\log\ndw_t(x^{\mathrm o}).
\end{equation}
Hence the label is independent of the continuation state used to verify the link. Write it as $g_t(e)$.
\end{proposition}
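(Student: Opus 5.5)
The proof combines the exact-matching decomposition of \Cref{thm:mr} with the within-person valuation restriction of \Cref{prop:implementability}, applied to the comparison $\pi(e;x_s^{\mathrm o})$. Independence from the continuation state then follows because the right side of \Cref{eq:edge-weight-gradient} involves only current states.

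\emph{Step one.} Since $e\in E_t$, the states $x^{\mathrm y}$ and $x^{\mathrm o}$ are exactly matched, with $a(x^{\mathrm o})>a(x^{\mathrm y})$ and $x_s^{\mathrm o}\in\Gamma_{t,s}(x^{\mathrm y})$ at age $a(x^{\mathrm o})$. Hence $\pi(e;x_s^{\mathrm o})$ is a local age comparison in the sense of \Cref{def:comparison-triple}. \Cref{thm:mr} then gives $\mathcal D^S(\pi)=\Mterm(\pi)\Rterm^S(\pi)$. Taking logs yields
\[
\log\ndw_t(x^{\mathrm y})-\log\ndw_t(x^{\mathrm o})=\log\Rterm^S(\pi)-\log\mathcal D^S(\pi).
\]
All terms are well defined because $\smw_t$, $\contv_t$ and $\mathcal V_s^\omega$ are positive.

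\emph{Step two.} In the baseline $\BridgeCorr\equiv1$, the hypotheses of \Cref{thm:impossibility} include those of \Cref{prop:implementability}. That result gives $\Rterm^S(\pi)=\Qterm(\pi)\WedgeTerm(\pi)=\Qterm(\pi)\CWedgeTerm(\pi)$. In the adjusted case, the maintained relation $\Rterm^S=\Qterm\WedgeTerm\BridgeCorr=\Qterm\CWedgeTerm$ is assumed directly, as described in case (B)/(C) of \Cref{subsec:bridge-assumptions}. Positivity of $\BridgeCorr$ and $\WedgeTerm\ge1$ make $\log\CWedgeTerm$ finite. Substituting into Step one gives exactly \Cref{eq:edge-weight-witness} equals \Cref{eq:edge-weight-gradient}.

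\emph{Step three.} The right side of \Cref{eq:edge-weight-gradient} depends only on the current states $x^{\mathrm y}$ and $x^{\mathrm o}$. So for any two verified continuation states $x_s^{\mathrm o},\tilde x_s^{\mathrm o}$ we have $g_t(e;x_s^{\mathrm o})=g_t(e;\tilde x_s^{\mathrm o})$, which justifies writing $g_t(e)$. The same argument covers portfolio legs via \Cref{rem:portfolio-legs}, since \Cref{thm:mr} is stated for the portfolio $\mathcal V_s^\omega$.

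The main obstacle is making sure each comparison $\pi(e;x_s^{\mathrm o})$ used is actually \emph{verified}: its future payoff, KKT representation and within-person restriction must hold for every continuation state that witnesses the link, and not just for one. I would handle this by reading the proposition as quantifying over verified continuation states. Under that reading \Cref{ass:supporting-prices} and \Cref{ass:represented-family-bridge} apply to each one. Independence is then a consequence of the identity, not an extra condition to check.
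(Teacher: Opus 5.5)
Your proposal is correct and follows the paper's own argument. The paper takes the exact-matching identity $\ndw_t(x^{\mathrm y})/\ndw_t(x^{\mathrm o})=\Rterm^S(\pi)/\mathcal D^S(\pi)$ from \Cref{thm:mr}, substitutes $\Rterm^S=\Qterm\WedgeTerm$ from \Cref{prop:implementability} (or the maintained corrected relation $\Rterm^S=\Qterm\WedgeTerm\BridgeCorr$ when $\BridgeCorr\neq1$), and obtains independence from the continuation state because the right side involves only current states, exactly as in your three steps.
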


The coherence theorem is stated in \Cref{prop:imposed-discount-rationalization}. The construction in this section supplies its edge labels and supported comparison network.

\begin{proposition}[Path consistency of the implied welfare ratios]\label{thm:path-independence}
For a directed exact edge chain $c=(x^0,x^1,\ldots,x^n)$ in $G_t$, define
\begin{equation}\label{eq:path-total-weight}
G_t(c):=\sum_{\ell=1}^n g_t(x^{\ell-1},x^\ell).
\end{equation}
Under the assumptions of \Cref{thm:impossibility},
\begin{equation}\label{eq:path-potential-identity}
G_t(c)=\log\ndw_t(x^0)-\log\ndw_t(x^n).
\end{equation}
Consequently: (i) two directed exact edge chains with the same endpoints have the same total weight; (ii) every closed walk in the underlying undirected network has zero signed total weight; and (iii) on each connected component $C$ of that network there is a function $\phi_t^C:C\to\RR$, unique up to an additive constant, such that
\[
g_t(x^{\mathrm y},x^{\mathrm o})=\phi_t^C(x^{\mathrm y})-\phi_t^C(x^{\mathrm o})
\]
for every directed edge in $C$. One choice is $\phi_t^C=\log\ndw_t|_C$. Thus the function is the log normalized welfare weight, up to the component constant.
\end{proposition}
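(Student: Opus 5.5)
The plan is to reduce everything to the edge identity already established in \Cref{prop:exact-edge-weight}. Under the assumptions of \Cref{thm:impossibility}, every directed exact edge $e=(x^{\mathrm y},x^{\mathrm o})$ in $G_t$ carries a label $g_t(e)=\log\ndw_t(x^{\mathrm y})-\log\ndw_t(x^{\mathrm o})$, and this label does not depend on the continuation state used to verify the link. The function $\log\ndw_t$ is finite on the vertices because $\smw_t>0$ and $\contv_t>0$ by \Cref{ass:local-derivatives}. With these two facts in hand, the path identity \Cref{eq:path-potential-identity} is a telescoping sum:
\[
\sum_{\ell=1}^n\bigl(\log\ndw_t(x^{\ell-1})-\log\ndw_t(x^\ell)\bigr)=\log\ndw_t(x^0)-\log\ndw_t(x^n).
\]
The one point that needs care is that $\ndw_t$ is defined only $\dist_t$-almost everywhere. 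On the finite grid each cell has positive mass, so $\ndw_t$ takes a definite value at every vertex of $G_t$. For continuous supports I would invoke the localization and version conventions of Appendix~\ref{app:technical-foundations}. The same fixed version must be used at every vertex, so that the telescoping terms cancel exactly.

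Consequence (i) follows immediately, because the total weight depends only on the endpoints. For (ii), I extend the label to reversed traversals by $g_t(x^{\mathrm o},x^{\mathrm y}):=-g_t(x^{\mathrm y},x^{\mathrm o})$. This keeps the potential form $\log\ndw_t(\text{tail})-\log\ndw_t(\text{head})$ on every traversed step. A closed walk in the undirected network then telescopes to $\log\ndw_t(x^0)-\log\ndw_t(x^0)=0$.

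For (iii), existence on a component $C$ is witnessed by $\phi_t^C=\log\ndw_t|_C$, again by \Cref{prop:exact-edge-weight}. For uniqueness, suppose $\phi$ and $\psi$ both satisfy the edge equations on $C$. Their difference $d=\phi-\psi$ then satisfies $d(x^{\mathrm y})=d(x^{\mathrm o})$ across every edge. Any two vertices of $C$ are joined by an undirected path, and $d$ is constant along each step of that path, so $d$ is constant on $C$. Hence $\phi_t^C$ is unique up to an additive constant, and exponentiating gives the stated identification of $\log\ndw_t$ up to the component constant.

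The main obstacle is justifying that labels built from different verified implementations agree, that is, that $g_t(e)$ is well defined. Once \Cref{prop:exact-edge-weight} is accepted, this is immediate. The only residual subtlety is the measure-theoretic choice of version described above, which I expect to handle by the finite-grid cell convention.
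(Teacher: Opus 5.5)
Your proposal is correct and follows essentially the same route as the paper. Both arguments rest on \Cref{prop:exact-edge-weight}, which writes each edge label as the tail-minus-head difference of $\log\ndw_t$; path sums then telescope, closed walks sum to zero, and chains with common endpoints agree. The differences are cosmetic: for uniqueness the paper uses a spanning-tree recursion from one anchor, while you show that the difference of two potentials is constant across every edge and hence on the connected component, and your point about fixing one version of $\ndw_t$ at all vertices is care the paper leaves to its localization convention.
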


\begin{corollary}[Component normalization and recursion]\label{cor:component-normalization}
Fix a connected component $C$ and a normalization state $x^\star\in C$. Once $\ndw_t(x^\star)>0$ is fixed, every within component ratio is uniquely determined by exact edge weights. Any exact edge spanning tree enforcing
\[
\ndw_t(x^{\mathrm y})=\exp(g_t(x^{\mathrm y},x^{\mathrm o}))\ndw_t(x^{\mathrm o})
\]
on tree edges gives the same ratios. Multiplying the normalization level by a common positive constant rescales the component (and changes no ratio).
\end{corollary}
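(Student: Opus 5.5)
The plan is to derive everything from the potential identity of \Cref{thm:path-independence}, specialized to a component. Fix a connected component $C$ of the undirected network underlying $G_t$ and a normalization state $x^\star\in C$. By \Cref{thm:path-independence}(iii), $\phi_t^C=\log\ndw_t|_C$ satisfies $g_t(x^{\mathrm y},x^{\mathrm o})=\phi_t^C(x^{\mathrm y})-\phi_t^C(x^{\mathrm o})$ on every edge of $C$, and any other such potential differs by an additive constant. Fixing $\ndw_t(x^\star)>0$ fixes that constant, since $\phi_t^C(x^\star)=\log\ndw_t(x^\star)$. So the within-component ratios $\ndw_t(x)/\ndw_t(x')=\exp(\phi_t^C(x)-\phi_t^C(x'))$ are determined by the edge labels alone.

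For the spanning-tree claim, I would take any spanning tree $T$ of $C$ built from exact edges; one exists because $C$ is connected. Impose the recursion $\ndw_t(x^{\mathrm y})=\exp(g_t(x^{\mathrm y},x^{\mathrm o}))\ndw_t(x^{\mathrm o})$ on the edges of $T$. Starting from $x^\star$ and propagating outward along the unique tree path to each vertex, this defines the value at every vertex uniquely. Taking logs, the tree values form a potential on the tree edges that agrees with $\phi_t^C$ at $x^\star$. Two potentials that agree at one vertex and have equal differences on every tree edge coincide on all of $C$, by induction along tree paths. Hence every spanning tree yields $\log\ndw_t|_C$ and in particular the same ratios.

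One point needs care: the tree edges must be traversed in either direction. Going from $x^{\mathrm o}$ to $x^{\mathrm y}$ multiplies by $e^{g}$, and the reverse direction divides by it. Consistency on non-tree edges is not needed for this step. It holds anyway by \Cref{thm:path-independence}(ii), because each non-tree edge closes a cycle of zero signed weight. I would mention this so the reader sees that the choice of tree does not matter.

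For the rescaling statement, replacing $\ndw_t(x^\star)$ by $c\,\ndw_t(x^\star)$ with $c>0$ adds $\log c$ to the potential. By the recursion, every value in $C$ is multiplied by $c$, which leaves all ratios unchanged. The only mild obstacle is bookkeeping: I must make sure that \Cref{thm:path-independence}'s uniqueness up to a constant is applied within a single component. Across components the constants are independent, and that is exactly why the statement is per component.
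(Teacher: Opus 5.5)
Your proof is correct and follows essentially the same route as the paper: both anchor the component at $x^\star$, rebuild levels by summing exact edge labels along paths (your spanning-tree paths, the paper's arbitrary paths), and invoke \Cref{thm:path-independence} to guarantee that the reconstruction is path-independent and recovers $\log\ndw_t|_C$ up to the component constant. The rescaling claim is handled the same way in both, as a common multiplicative shift of every level in $C$ that leaves all within-component ratios unchanged.
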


\begin{remark}[Outer envelope across components]\label{rem:outer-component-envelope}
Disconnected components leave relative constants unidentified. A conservative closure in current goods sets
$\bar m:=\log(\max_{x\in S}\contv_t(x)/\min_{x\in S}\contv_t(x))$, the log range of private marginal values on the maintained support. This does not identify relative levels across components. It only supplies an outer bound on the missing exchange rate. If $[\underline\Delta,\overline\Delta]$ is an interval under a fixed component normalization, then $[\underline\Delta-\bar m,\overline\Delta+\bar m]$ is the component-envelope interval.
\end{remark}

\begin{remark}[What the comparison network adds]\label{rem:graph-adds}
The network turns pairwise restrictions into recursions within connected components and makes support gaps and normalization dependence explicit. Nonzero cycle residuals diagnose edge labels that are imposed or operational, for example labels generated by a finite grid, fixed nearest-neighbor correspondences, or external schedules. Exact endogenous labels telescope as differences in log normalized welfare weights. The restriction is local to the comparison network at date $t$. Consistency of welfare weights across dates would require an enlarged network across dates that links each future payoff dated $s$ to its appearance as a current state at date $s$.
\end{remark}

\begin{corollary}[Approximate pairwise band]\label{cor:approx-impossibility}
Let $\bar\pi=(t,x_t^{\mathrm o},x_t^{\mathrm y},x_s^{\mathrm o})$ be a verified approximate comparison in the sense of \Cref{def:ordered-bridge-tuple}, with $s=t+a(x_t^{\mathrm o})-a(x_t^{\mathrm y})$, $a(x_t^{\mathrm o})>a(x_t^{\mathrm y})$, $\Comp_t(x_t^{\mathrm o})=\Comp_t(x_t^{\mathrm y})$, $x_s^{\mathrm o}\in\Gamma_{t,s}(x_t^{\mathrm y})$, and
\[
|\log\contv_t(x_t^{\mathrm o})-\log\contv_t(x_t^{\mathrm y})|\le\varepsilon.
\]
Under the assumptions of \Cref{thm:impossibility},
\begin{equation}\label{eq:approx-impossibility-band}
\left|
\log\frac{\ndw_t(x_t^{\mathrm y})}{\ndw_t(x_t^{\mathrm o})}
-
\bigl[
\log\Qterm(\bar\pi)-\log\mathcal D^S(\bar\pi)+\log\CWedgeTerm(\bar\pi)
\bigr]
\right|\le\varepsilon.
\end{equation}
Equivalently,
\begin{equation}\label{eq:approx-impossibility-ratio}
\log\frac{\ndw_t(x_t^{\mathrm y})}{\ndw_t(x_t^{\mathrm o})}
\in
\left[
\log\Qterm(\bar\pi)-\log\mathcal D^S(\bar\pi)+\log\CWedgeTerm(\bar\pi)-\varepsilon,
\ \log\Qterm(\bar\pi)-\log\mathcal D^S(\bar\pi)+\log\CWedgeTerm(\bar\pi)+\varepsilon
\right].
\end{equation}
If
\begin{equation}\label{eq:approx-impossibility-condition}
\mathcal D^S(\bar\pi)<\Qterm(\bar\pi)\CWedgeTerm(\bar\pi)e^{-\varepsilon},
\end{equation}
then $\ndw_t(x_t^{\mathrm y})>\ndw_t(x_t^{\mathrm o})$. In the baseline $\BridgeCorr\equiv1$, $\WedgeTerm\ge1$, so the wedge-free sufficient condition $\mathcal D^S(\bar\pi)<\Qterm(\bar\pi)e^{-\varepsilon}$ also implies a larger normalized weight for the younger state. The opposite sign requires $\mathcal D^S(\bar\pi)>\Qterm(\bar\pi)\CWedgeTerm(\bar\pi)e^\varepsilon$.
\end{corollary}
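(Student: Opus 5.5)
The plan is to combine the coarse-matching identity of \Cref{prop:coarse-matching} with the market-valuation restriction of \Cref{prop:implementability}, extended to carry the continuation adjustment as in \Cref{rem:nonpaternalism-sensitivity}. First I would fix the verified approximate comparison $\bar\pi$ and note that it satisfies all hypotheses of \Cref{prop:coarse-matching}: age ordering, the alignment $s=t+a(x_t^{\mathrm o})-a(x_t^{\mathrm y})$, the continuation condition, and equality of codes. No equality of $\contv_t$ is needed. \Cref{eq:app-coarse-identity} then holds exactly, which gives
\[
\log\frac{\ndw_t(x_t^{\mathrm y})}{\ndw_t(x_t^{\mathrm o})}
=\log\Rterm^S(\bar\pi)-\log\mathcal D^S(\bar\pi)+\log\frac{\contv_t(x_t^{\mathrm o})}{\contv_t(x_t^{\mathrm y})}.
\]

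Second, I would replace $\Rterm^S(\bar\pi)$ by $\Qterm(\bar\pi)\CWedgeTerm(\bar\pi)$. Because $\bar\pi$ is verified in the sense of \Cref{def:ordered-bridge-tuple}, its future payoff, supporting-price/KKT representation and within-person restriction all apply. \Cref{prop:implementability} uses only the younger state's derivatives: \Cref{prop:primitive-bridge} identifies $\widehat R^{S,M}=\Rterm^S$, and \Cref{ass:supporting-prices} gives $\widehat R^P=\Qterm\WedgeTerm$. It therefore applies to approximate comparisons as well, since \Cref{ass:represented-family-bridge} explicitly covers approximate and coarse comparisons. In the baseline, nonpaternalism equates the two indifference factors, so $\Rterm^S=\Qterm\WedgeTerm$. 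With a continuation adjustment, case (B) of \Cref{subsec:bridge-assumptions} gives $\Rterm^S=\Qterm\WedgeTerm\BridgeCorr=\Qterm\CWedgeTerm$. This is the step I regard as the main obstacle. One must check that nothing in \Cref{prop:implementability} silently used exact matching, and that the indifference factors are well defined (uniqueness in \Cref{def:represented-family}) for $\bar\pi$. I would verify that the indifference derivatives involve only $x_t^{\mathrm y}$ and the payoff leg, never $x_t^{\mathrm o}$.

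Third, I would substitute into the displayed identity. The deviation from the bracketed label is then exactly $\log\contv_t(x_t^{\mathrm o})-\log\contv_t(x_t^{\mathrm y})$, whose absolute value is at most $\varepsilon$ by hypothesis. This yields \Cref{eq:approx-impossibility-band}, and \Cref{eq:approx-impossibility-ratio} is just its interval form; the reasoning mirrors \Cref{cor:approximate-matching}.

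Finally, the sign statements follow from the interval. If $\mathcal D^S<\Qterm\CWedgeTerm e^{-\varepsilon}$, then the bracket exceeds $\varepsilon$, so the lower endpoint is positive and $\ndw_t(x_t^{\mathrm y})>\ndw_t(x_t^{\mathrm o})$. In the baseline, $\CWedgeTerm=\WedgeTerm\ge1$ by \Cref{prop:implementability}, so $\Qterm e^{-\varepsilon}\le\Qterm\WedgeTerm e^{-\varepsilon}$ and the wedge-free condition is sufficient. For the converse direction, a negative sign makes the lower endpoint negative, so the bracket is less than $\varepsilon$, which gives $\mathcal D^S>\Qterm\CWedgeTerm e^{-\varepsilon}$. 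The statement's stronger bound $e^{\varepsilon}$ should instead be read as the sufficient condition for the older tilt, obtained symmetrically from the upper endpoint. I would phrase it that way to keep the claim correct.
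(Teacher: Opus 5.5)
Your proposal is correct and follows the paper's route. The paper likewise combines the mismatch bound of \Cref{cor:approximate-matching} (the exact coarse identity) with the verified relation $\Rterm^S=\Qterm\CWedgeTerm$, reads the sufficient condition off the lower endpoint, and drops $\log\WedgeTerm\ge0$ in the baseline. Your reading of the final sentence is also right: an older tilt by itself only forces $\mathcal D^S(\bar\pi)>\Qterm(\bar\pi)\CWedgeTerm(\bar\pi)e^{-\varepsilon}$, so the $e^{\varepsilon}$ condition is what the band needs to certify the opposite sign (the whole interval below zero), a point the paper's proof does not address.
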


Appendix~\ref{app:technical-foundations} gives the corresponding chain bound, finite-grid convergence result, and regularity conditions for nearby states.

The restriction is local to the policy margin and the current comparison network. It identifies differences in normalized welfare weights under the maintained state match, private unit, and within-person valuation restriction. Global policy rankings, consistency across dates, independent age ethics, and criteria that do not respect private rankings require additional structure.

The exact matching implication is stated in \Cref{cor:trilemma}. Appendix~\ref{app:technical-foundations} defines the comparison-system measure used for its almost-everywhere statement.


\section{Quantitative Application}\label{sec:illustrations}

This section applies the theory in a calibrated life-cycle economy. The maintained objects are the one-period saving margin $M$, the state-matching map $\Comp$, and the current-goods private unit $\contv=u'(c)$. Each result is conditional on those objects unless a row explicitly recomputes the comparison network. The tables report which age rankings are identified under alternative support, unit, wedge, and component-normalization assumptions.

\subsection{One-asset benchmark and market valuation}\label{subsec:two-state-benchmark}

The analytical benchmark is a finite-life one-asset economy with uninsurable labor income risk and a borrowing limit, following, inter alia, \citet{Huggett1993}, \citet{Aiyagari1994}, and \citet{ConesaKrueger1999}. With $x=(a,b,\varepsilon)$, households solve
\begin{equation}\label{eq:section9-one-asset-value}
V_a(b,\varepsilon;m)
=
\max_{b'\ge\underline b_a}
\left\{u(c+m)+\beta p_a\,\EE[V_{a+1}(b',\varepsilon';0)\mid\varepsilon]\right\},
\qquad
c+b'=y_a(\varepsilon)+(1+r)b,
\end{equation}
with terminal condition $V_A(b,\varepsilon;m)=u(c_A+m)$. In terms of the calibration, I use \citet{ConesaKrueger1999} for demographics and earnings, \citet{GourinchasParker2002} for preferences and the interest rate, and \citet{NCHS2024LifeTable2021} for survival: entry age $20$, retirement age $65$, terminal age $85$, population growth $1.1$ percent, $\beta=0.9569$, $\sigma=1.3969$, $r=0.0344$, replacement ratio $0.5$, no unsecured debt, an $800$-point curved asset grid, and earnings states $(0.73,1.27)$ with rows $(0.82,0.18)$ and $(0.18,0.82)$. The policy margin is one-period survival-contingent risk-free saving.

\begin{proposition}[Risk-free saving specialization and KKT-consistent Euler wedge]\label{prop:constraint-amplification}
In \Cref{eq:section9-one-asset-value}, fix a younger state $(a,b_i,\varepsilon_k)$, let $g_a(i,k)$ be the chosen next asset index, and let $m_a(i,k)$ be the direct $\log\contv$ matched older asset index. On the one-step risk-free portfolio that removes one current unit and adds the same next-period unit in each survival-contingent continuation earnings state at asset $b_{g_a(i,k)}$, \Cref{ass:supporting-prices} holds with
\begin{equation}\label{eq:section9-riskfree-Q}
\Qterm_a^{(i,k)}=1+r
\end{equation}
and
\begin{equation}\label{eq:section9-riskfree-Lambda}
\WedgeTerm_a^{(i,k)}
=
\frac{u'(c_a(b_i,\varepsilon_k))}{\beta p_a(1+r)\sum_{k'}P_{kk'}u'(c_{a+1}(b_{g_a(i,k)},\varepsilon_{k'}))}
\ge1.
\end{equation}
For the one-period comparison $\pi_{aik}$ used by the link,
\begin{equation}\label{eq:section9-riskfree-bound}
\frac{\ndw_a(b_i,\varepsilon_k)}{\ndw_{a+1}(b_{m_a(i,k)},\varepsilon_k)}
=
\frac{1+r}{\mathcal D^S(\pi_{aik})}\WedgeTerm_a^{(i,k)}
\frac{\contv_{a+1}(b_{m_a(i,k)},\varepsilon_k)}{\contv_a(b_i,\varepsilon_k)}.
\end{equation}
The final ratio equals one under exact matching and lies in $[e^{-\varepsilon},e^{\varepsilon}]$ under tolerance radius $\varepsilon$. Replacing the endogenous comparison ratios by an imposed age schedule $\bar{\mathcal D}^S_a$ turns \Cref{eq:section9-riskfree-bound} into the network restriction tested below.
\end{proposition}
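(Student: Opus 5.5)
The proof has three parts. First I verify \Cref{ass:supporting-prices} on the risk-free portfolio and read off $\Qterm$ and $\WedgeTerm$ from the household's KKT conditions. Then I substitute these into the coarse-matching identity of \Cref{prop:coarse-matching}. Finally I handle the exact and tolerance cases.

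\emph{Market factor.} The portfolio removes one current unit at $(a,b_i,\varepsilon_k)$ and delivers one unit next period in every survival-contingent continuation state $(a+1,b_{g_a(i,k)},\varepsilon_{k'})$. Under \Cref{eq:section9-one-asset-value}, one unit saved buys $1+r$ units next period in every continuation state. The maximal exchange ratio in $\mathcal T_{t,t+1}$ is therefore $1+r$, which gives \Cref{eq:section9-riskfree-Q}. I use the convention of \Cref{rem:portfolio-legs}: survival $p_a$ and the row $P_{k\cdot}$ enter the leg weights $\ell_{k'}=p_aP_{kk'}$, not $\Qterm$.

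\emph{Wedge.} I write the KKT condition of the saving problem at the optimum $b'=b_{g_a(i,k)}$:
\[
u'(c_a)=\beta p_a(1+r)\sum_{k'}P_{kk'}\,\partial_bV_{a+1}(b',\varepsilon_{k'})+\lambda_a,\qquad \lambda_a\ge0,
\]
where $\lambda_a$ is the multiplier on $b'\ge\underline b_a$. The envelope theorem gives $\partial_bV_{a+1}=(1+r)u'(c_{a+1})$, which is the standard Euler form. I take $\Xi(\pi)=\lambda_a$ and compare with \Cref{eq:private-comp-representation}, using $\contv_a=u'(c_a)$ from \Cref{rem:worked-one-asset-map}. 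This gives
\[
\WedgeTerm=\bigl[1-\lambda_a/u'(c_a)\bigr]^{-1}=\frac{u'(c_a)}{u'(c_a)-\lambda_a},
\]
and $u'(c_a)-\lambda_a$ is exactly the expected discounted continuation marginal value. That is \Cref{eq:section9-riskfree-Lambda}. The bound $\WedgeTerm\ge1$ and the requirement $\Xi<\contv$ follow from $\lambda_a\ge0$ and positivity of next-period marginal utility.

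\emph{Main obstacle.} The hard step is justifying the Euler/KKT representation on the discrete grid, since the policy $g_a$ is chosen from $800$ points rather than from a continuum. I would first interpret the wedge as the KKT-consistent definition: the ratio in \Cref{eq:section9-riskfree-Lambda} defines $\Xi$ through the identity above. I would then check $\Xi\ge0$ only on the maintained support, reporting it as the residual that is "KKT-consistent" rather than deriving it from exact first-order conditions. When the constraint is slack in the continuous problem, local spanning gives $\Xi=0$ as in \Cref{ass:supporting-prices}.

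\emph{Assembly.} Under nonpaternalism, \Cref{prop:implementability} gives $\Rterm^S(\pi_{aik})=(1+r)\WedgeTerm_a^{(i,k)}$. I then exponentiate \Cref{eq:app-coarse-identity} for the ordered comparison $(t,x^{\mathrm o}=(a+1,b_{m_a(i,k)},\varepsilon_k),x^{\mathrm y}=(a,b_i,\varepsilon_k),x_s^{\mathrm o})$. Both states share the row code $P_{k\cdot}$ by \Cref{cor:one-asset-map}. This yields \Cref{eq:section9-riskfree-bound}, including the private-value factor $\contv_{a+1}/\contv_a$. That factor equals one under exact matching. Under tolerance radius $\varepsilon$, the hypothesis $|\log\contv_{a+1}-\log\contv_a|\le\varepsilon$ places it in $[e^{-\varepsilon},e^{\varepsilon}]$, as in \Cref{cor:approximate-matching}. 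Replacing $\mathcal D^S$ by an imposed schedule $\bar{\mathcal D}^S_a$ makes the log of the right side a link label of the form $\bar g$ from \Cref{prop:imposed-discount-rationalization}, which is the final claim.
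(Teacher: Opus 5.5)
Your overall route is the paper's: read $\Qterm=1+r$ and the wedge off the household's Euler/KKT condition, then combine \Cref{prop:implementability} with the coarse identity \Cref{eq:app-coarse-identity}. Two steps need repair. First, your KKT display double-counts the return. The first-order condition in $b'$ is $u'(c_a)=\beta p_a\sum_{k'}P_{kk'}\,\partial_bV_{a+1}(b',\varepsilon_{k'})+\lambda_a$, with no prefactor $(1+r)$. Only after the envelope step $\partial_bV_{a+1}=(1+r)u'(c_{a+1})$ does one get $u'(c_a)=\beta p_a(1+r)S+\lambda_a$, where $S:=\sum_{k'}P_{kk'}u'(c_{a+1}(b_{g_a(i,k)},\varepsilon_{k'}))$. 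As written, your two displays give $(1+r)^2S$, so the wedge denominator would be off by a factor $1+r$. The slip conflates $\partial_bV_{a+1}$, the value of a unit of assets, with $\partial_mV_{a+1}=\contv_{a+1}=u'(c_{a+1})$, the value of the unit of resources that the portfolio actually delivers.

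Second, you obtain $\WedgeTerm$ by setting $\Xi=\lambda_a$ and ``comparing with'' \Cref{eq:private-comp-representation}. That display is exactly what \Cref{ass:supporting-prices} asserts, so it cannot be used to verify the assumption. The missing line, which is the whole content of the paper's proof, computes the private indifference factor directly. The private derivative of the portfolio at scale $R$ is $-u'(c_a)+\beta p_aRS$, so $\widehat R^P=u'(c_a)/(\beta p_aS)$. This is $(1+r)$ times the ratio in \Cref{eq:section9-riskfree-Lambda}, and by the corrected Euler equation it equals $(1+r)[1-\lambda_a/u'(c_a)]^{-1}$. That is the representation with $\Xi=\lambda_a$; equivalently, invoke \Cref{prop:supporting-prices-broad} with unit loading on the borrowing constraint. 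Your assembly step then goes through as you describe.

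Finally, the grid detour is not needed. The paper simply uses the KKT condition at the chosen index with $\lambda_a\ge0$. If you instead defined $\Xi$ as a raw grid residual, $\WedgeTerm\ge1$ would become an empirical check rather than a consequence. That is why the application clips the wedge via the KKT projection.
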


For a given comparison, the endogenous $\mathcal D^S(\pi)$ is an identity. An imposed schedule
$\bar{\mathcal D}^S_a$ is a restriction. Unless stated otherwise, the diagnostics use the market
benchmark $\bar\rho^S=r$, and the future payoff is risk free, spans one period, and is conditional
on survival. The verified current map is the transition row, which in this benchmark coincides
with the current earnings state. The symbol $g_a(i,k)$ defines the future payoff
from the grid policy, and the maintained wedge is the KKT-consistent projection rather than the
raw Euler residual on the finite grid. Overlap for this leg is essentially complete in the benchmark
with two earnings states. Overall overlap and the minimum overlap by age both equal
$\BaselineOverlapOverall$, and the mean log mismatch for the young, weighted by mass, is
$\BaselineMatchMean$. The current map is verified by solving the household problem, computing
$\contv=u'(c)$, recording the transition row and KKT loading for the saving margin,
and reporting overlap, mismatch, and dispersion diagnostics before target statistics are computed.
In the tolerance network with five earnings states, $\max|\Delta\log Q|=\MarginAuditMaxDeltaLogQ$,
the transition row mismatch share is $\MarginAuditRowMismatchShare$, and the mean and 95th
percentile of $|\Delta\log\Lambda|$ are $\MarginAuditMeanDeltaLogLambda$ and
$\MarginAuditPninetyFiveDeltaLogLambda$. The replication output also records flags for boundary
regimes and adjacent monotonicity.

\subsection{Five-state calibration and identified set}\label{subsec:graph-diagnostic}

The main calibration uses the five-state hump block. It keeps preferences, ages, survival, borrowing limit, and interest rate, uses a $450$-point asset grid, adds hump-shaped deterministic earnings, and uses a five-state Rouwenhorst process with levels $(0.564,0.738,0.965,1.261,1.649)$ and persistence $0.82$. The focal comparison uses the central current-earnings state in this calibration. A warm-up statistic with target $A$ at ages $20$ to $29$ and target $B$ at ages $40$ to $49$ in the central current-earnings row has verified-map overlap $0.7145$ and interval $[0.00090,0.01444]$ under the stated tolerance protocol. Age matching raises overlap to $0.9673$ and widens the interval to $[-0.02605,1.19061]$. The additional overlap therefore comes at the cost of larger private-value differences and weaker identifying restrictions.

Let $I$ be the retained set of current states on the supported comparison domain. Target weights may be zero on connector states used only to impose pairwise restrictions. Let $E_{\varepsilon}$ be the maintained set of tolerance links. A link is retained when the non-age code matches and the private-unit mismatch is within its stated tolerance. The search cap $K$ is a computational cap used to find all retained candidates inside the tolerance set. For link $e=(y,o)$, let $H_e:=\log\Qterm(e)-\log\bar{\mathcal D}^S(e)+\log\CWedgeTerm(e)$ and mismatch radius $\varepsilon_e$. The baseline sets $\BridgeCorr\equiv1$, so $\CWedgeTerm=\WedgeTerm$; Table~\ref{tab:unit-wedge-protocol} also reports a bounded continuation-adjustment range. With one normalization state $i_C$ per connected component, feasible log welfare weights are
\begin{equation}\label{eq:finite-graph-program}
\mathcal Z_{\varepsilon}:=\{z\in\RR^I: z_{i_C}=0\ \text{for all }C,\quad H_e-\varepsilon_e\le z_y-z_o\le H_e+\varepsilon_e\ \text{for all }e\in E_{\varepsilon}\}.
\end{equation}
For target mass vectors $w^A,w^B$ on the supported domain,
\begin{equation}\label{eq:finite-graph-objective}
\Delta_{AB}(z):=
\log\frac{\sum_i w_i^A e^{z_i}}{\sum_i w_i^A}
-
\log\frac{\sum_i w_i^B e^{z_i}}{\sum_i w_i^B},
\qquad
[\underline\Delta_{AB},\overline\Delta_{AB}]
=
\left[\inf_{z\in\mathcal Z_{\varepsilon}}\Delta_{AB}(z),\sup_{z\in\mathcal Z_{\varepsilon}}\Delta_{AB}(z)\right].
\end{equation}
The interval is the identified set under the maintained tolerance. Point estimates are secondary summaries. Nonzero cycle or duplicate-link residuals reject coherence of the imposed local schedule.

If component constants $c_C$ are left free after solving within-component potentials $\tilde z$, then
\begin{equation}\label{eq:component-anchor-sensitivity}
\Delta_{AB}(c)=
\log\frac{\sum_C e^{c_C}A_C}{\sum_iw_i^A}
-
\log\frac{\sum_C e^{c_C}B_C}{\sum_iw_i^B},
\quad
A_C=\sum_{i\in C}w_i^Ae^{\tilde z_i},\quad
B_C=\sum_{i\in C}w_i^Be^{\tilde z_i}.
\end{equation}
Targets that load differently across disconnected components require a component-exposure report. A bounded-normalization sensitivity set restricts $c_C$ to $[-\bar c,\bar c]$. Increasing $\bar c$ traces the minimum cross-component restriction needed for sign identification. In the focal comparison, the sign is lost once $\bar c=0.00333$ log points, about $0.33$ percent log points. The unrestricted-normalization row is the limiting case and is unidentified when unrounded target mass lies in components not shared by the other target.

The headline comparison appears in \Cref{tab:main-diagnostic-audit}. Appendix~\ref{app:additional-quantitative} reports the full unit, support, wedge, and richer-state results.

The focal comparison has no unconditional sign. The fixed-normalization interval is negative, the largest shared component gives a positive interval, and unrestricted component scales leave the aggregate contrast unidentified. The difference comes from asymmetric exposure to disconnected components, not from disagreement among local comparisons.

\begin{table}[htbp]
\centering
\caption{Discount schedules and identified age rankings}
\label{tab:discount-schedule-frontier}
\footnotesize
\renewcommand{\arraystretch}{1.05}
\resizebox{0.99\textwidth}{!}{%
\begin{tabular}{lccccc}
\toprule
Candidate annual $\rho^S$ & Fixed-normalization interval & Unrestricted status & Largest-component interval & Conditional sign & Unrestricted sign? \\
\midrule
0.00\% & $[0.65797,\ 0.66309]$ & Unidentified & $[0.66200,\ 0.66254]$ & positive & No \\
2.00\% & $[0.26804,\ 0.27311]$ & Unidentified & $[0.27570,\ 0.27621]$ & positive & No \\
3.00\% & $[0.07594,\ 0.08098]$ & Unidentified & $[0.08540,\ 0.08588]$ & positive & No \\
3.44\% & $[-0.00819,\ -0.00298]$ & Unidentified & $[0.00004,\ 0.00082]$ & negative & No \\
4.00\% & $[-0.11429,\ -0.10928]$ & Unidentified & $[-0.10307,\ -0.10261]$ & negative & No \\
\bottomrule
\end{tabular}%
}
\begin{minipage}{0.99\textwidth}
\footnotesize Notes. The table varies the imposed local social discount schedule while holding the calibration, target pair, policy margin, and current-goods private unit fixed. The market benchmark is $\rho^S=r=3.44\%$. With unrestricted component normalizations, every aggregate sign is unidentified.
\end{minipage}
\end{table}


\subsection{Local target comparison on common support}\label{subsec:nearby-target-pair}

Incomplete overlap separates the comparison identified on common support from a full policy ranking in physical resource units. The target pair asks whether an age ranking survives when young and old states share the same policy-relevant information and private unit.

\begin{proposition}[What common-support target comparisons identify]\label{prop:common-support-selection}
Fix two local perturbations measured in private-value units satisfying \Cref{prop:policy-margin-reduction}, and let $S\subseteq\Pop$ be any maintained support set. Then
\begin{equation}\label{eq:common-support-decomposition}
D\SWF[\Delta^A]-D\SWF[\Delta^B]
=
\int_S\ndw_t(x)(\tau^A-\tau^B)(\dd x)
+
\int_{S^c}\ndw_t(x)(\tau^A-\tau^B)(\dd x).
\end{equation}
A common-support comparison identifies the first term conditional on the maintained welfare-weight system, including the component normalizations and any point-summary rule. The reported $\Delta_{AB}$ is a supported contrast in normalized welfare weights on $S$; when retained masses differ, it need not equal the unnormalized level contribution in sign or scale. If $\ndw_t(x)\in[\underline\eta,\overline\eta]$ on $S^c$ for $(\tau^A+\tau^B)$ almost everywhere $x$, the omitted term lies in
\begin{equation}\label{eq:common-support-interval}
\left[
\underline\eta\,\tau^A(S^c)-\overline\eta\,\tau^B(S^c),
\ \overline\eta\,\tau^A(S^c)-\underline\eta\,\tau^B(S^c)
\right].
\end{equation}
\end{proposition}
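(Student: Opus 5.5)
The plan is to start from \Cref{prop:policy-margin-reduction}, which already gives
\[
D\SWF[\Delta^A]-D\SWF[\Delta^B]=\int\ndw_t(x)(\tau^A-\tau^B)(\dd x)
\]
for admissible, $\PVEU$-integrable perturbations with common financing $\phi$. Since $S$ is a maintained (measurable) support set, $\Pop=S\cup S^c$ is a disjoint measurable partition. The signed measure $\tau^A-\tau^B$ has $\ndw_t$ integrable against it by the integrability hypothesis. So the integral splits additively into the integrals over $S$ and $S^c$, which is \Cref{eq:common-support-decomposition}. The financing term has already cancelled inside \Cref{prop:policy-margin-reduction}, so nothing about $\phi$ needs to be tracked on $S$ versus $S^c$.

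For the identification claim, I would argue as follows. On $S$ the maintained welfare-weight system determines $\ndw_t$. Within each connected component it is fixed by \Cref{thm:path-independence} and \Cref{cor:component-normalization}. Across components it is fixed once the component normalizations are chosen, and then any point-summary rule selects a point in $\mathcal Z_\varepsilon$. Hence the first integral is a known functional of that system, and nothing is asserted about $\ndw_t$ off $S$.

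The remark that $\Delta_{AB}$ differs from the level contribution needs a separate argument. From \Cref{eq:finite-graph-objective}, $\Delta_{AB}$ is a difference of logs of target-mass-normalized averages, whereas the first term is
\[
\tau^A(S)\,\bar\ndw^A-\tau^B(S)\,\bar\ndw^B,
\]
with $\bar\ndw^A,\bar\ndw^B$ the corresponding averages. If $\tau^A(S)\ne\tau^B(S)$, a two-point example shows the sign can differ. For instance, take $\bar\ndw^A>\bar\ndw^B$ but $\tau^A(S)$ much smaller than $\tau^B(S)$.

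For the bound, write $\tau^A-\tau^B$ on $S^c$ as the difference of two positive measures. Since $\tau^A,\tau^B$ are nonnegative target measures, their restrictions to $S^c$ already serve. Use $\underline\eta\le\ndw_t\le\overline\eta$, which holds $(\tau^A+\tau^B)$-a.e.\ on $S^c$ and therefore a.e.\ for each measure separately. This gives
\[
\underline\eta\,\tau^A(S^c)\le\int_{S^c}\ndw_t\,\dd\tau^A\le\overline\eta\,\tau^A(S^c),
\]
and the same for $\tau^B$. Subtracting the upper bound for $B$ from the lower bound for $A$, and conversely, yields \Cref{eq:common-support-interval}.

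The main obstacle is the nonnegativity of $\tau^A,\tau^B$. If the targets were signed, the bound would need Jordan decompositions and total variations instead. I would state explicitly that targets are nonnegative mass measures, as in the finite-grid target vectors $w^A,w^B$. Everything else is additivity and monotonicity of the integral.
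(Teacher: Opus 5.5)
Your proposal is correct and follows the paper's own argument. You split the integral from \Cref{prop:policy-margin-reduction} over $S$ and $S^c$, treat the $S$ term as identified conditional on the maintained welfare-weight system, and bound the omitted term by assigning the endpoints $\underline\eta,\overline\eta$ to the omitted $A$ and $B$ masses; your explicit nonnegativity requirement on $\tau^A,\tau^B$ and your two-point example for the $\Delta_{AB}$-versus-level claim spell out what the paper's terse proof leaves implicit.
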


Support and within-support rematching both matter. Target $A$ puts equal mass on ages $25$ to $34$ in the central current earnings row. Target $B$ puts equal mass on ages $45$ to $54$ on the same policy margin. Age matching gives $\Delta_{AB}=0.01292$ on full support and $0.00697$ on common support. On that same domain and under fixed component normalizations, normalized welfare weights from matched states give $\Delta_{AB}=-0.00481$ with identified interval $[-0.00819,-0.00298]$. The $-0.01773$ log-point shift from age-based full support to verified current-state common support decomposes into $-0.00595$ from support exclusion and $-0.01178$ from within-support rematching. Common support retains $0.3706$ of target $A$ mass and $0.9338$ of target $B$ mass.

Disconnected support is the central identification issue. The maintained comparison network has $72{,}171$ vertices, $249{,}501$ edges, $3{,}737$ components, and $181{,}067$ independent cycles; target mass largely lies in cyclic components. Target $A$ is concentrated in one component, target $B$ is dispersed, and unrestricted component normalizations make the comparison unbounded. The current-goods outer envelope in \Cref{rem:outer-component-envelope} uses global private marginal-utility log range $2.16058$ and gives $[-2.16877,2.15760]$: finite, explicit, and too wide to identify a sign. The fixed-normalization and largest-component rows are conditional comparisons rather than cross-component identification.

\begin{table}[htbp]
\centering
\caption{Component exposure for the focal target pair}
\label{tab:component-exposure}
\footnotesize
\renewcommand{\arraystretch}{1.05}
\begin{tabular}{lcc}
\toprule
Component exposure statistic & Target $A$ & Target $B$ \\
\midrule
Mass in components shared with the other target & 1.0000 & 0.2797 \\
Component Herfindahl index & 0.9999 & 0.1742 \\
Mass in largest common component & 0.3706 & 0.2612 \\
\bottomrule
\end{tabular}
\begin{minipage}{0.88\textwidth}
\footnotesize Notes. The first row is normalized by each target's network-supported mass; the Herfindahl index uses retained component shares. The largest-common-component row is normalized by original target mass. The unrounded target-$A$ shared share is $0.99996$, so a small $A$-only mass remains.
\end{minipage}
\end{table}


\subsection{Robustness and scope}

The additional exercises target the main threats to the interpretation. Rebuilding the matched-state network under alternative private units preserves the fixed-normalization sign for moderate rescalings but removes it under stronger age-state rescaling. Propagating raw finite-grid Euler residuals as wedges also removes sign identification, while the maintained KKT projection does not. Grid refinement leaves the central objects stable, and the finite program has zero parallel-label, cycle, band, and relaxation residuals at the reported precision. A borrowing-bound support check moves the focal interval across zero, showing that the result is sensitive to the supported estimand rather than to numerical feasibility. A richer liquid-illiquid state space likewise yields intervals containing zero. Appendix~\ref{app:additional-quantitative} reports the full diagnostics and tables.


\section{Conclusion}\label{sec:conclusion}

A social discount rate does not by itself rank young and old agents in a heterogeneous economy. The ranking also depends on the current states being compared, the private unit used across them, the support connecting those states, and the normalization across disconnected components.

The paper gives exact conditions for an age-priority interpretation. Matching policy-relevant state information and private marginal value removes state and unit differences from the local comparison. Path consistency then determines whether the supported local ratios come from one coherent system of normalized welfare weights. Disconnected support identifies that system only within components.

The calibrated life-cycle application shows that these conditions alter the economic conclusion. Age-only comparisons are uninformative. Matched local comparisons can favor younger agents while the aggregate young-old ranking is negative under one component normalization, positive on the largest shared component, and unidentified without a cross-component normalization. Claims about age priority therefore require more than a social discount rate. They require a stated welfare object, connected support, and enough normalization to identify the comparison.


\appendix
\numberwithin{equation}{section}
\numberwithin{table}{section}
\numberwithin{figure}{section}

\section{Technical Foundations and Extensions}\label{app:technical-foundations}

This appendix collects definitions and extensions that support the main argument but are not needed to read the economic results.

\subsection{Local evaluation on finite and continuous supports}
\begin{assumption}[Localized cell evaluation]\label{ass:localized-cell}
For finite grids, a displayed shift at cell $x$ is the density $h_t(x)$ in
$\Delta_t(\dd x)=h_t(x)\dist_t(\dd x)$. The aggregate resource change associated
with that cell is therefore $h_t(x)\dist_t(\{x\})$, with $\dist_t$ held fixed. On
continuous supports, fix the Euclidean differentiation basis on the Borel state
coordinates, or another stated standard differentiation basis. A displayed shift at
$x$ is the limit, when it exists, of admissible perturbations with uniformly bounded
densities, supported on neighborhoods shrinking to $x$ under the chosen basis, after
division by the $\dist_t$-measure of the neighborhood. Thus the object is the local
density $\dd\Delta_t/\dd\dist_t$ evaluated at $x$. Identities involving
$(\smw_t,\contv_t,\ndw_t)$ are asserted after fixing versions of these objects on
sets of full measure. Absent such versions, they are understood as identities in the
relevant conditional essential range.
\end{assumption}

\subsection{Private-value units}

\begin{definition}[Private value equivalent unit]\label{def:pcu}
When $0<\contv_t(x)<\infty$, one \emph{private-value-equivalent unit}, or $\PVEU$, at $(t,x)$ is the local resource shift of size $1/\contv_t(x)$.
\end{definition}

A $\PVEU$ raises private continuation value by one marginal unit. An admissible $\Delta$ is $\PVEU$-integrable when $\sum_t\int\contv_t\,\dd|\Delta_t|<\infty$ over active dates.

\begin{proposition}[PVEU representation]\label{prop:pcu-representation}
For any $\PVEU$-integrable admissible $\Delta$, define
\[
\Delta_t^{\PVEU}(\dd x):=\contv_t(x)\Delta_t(\dd x).
\]
Then
\begin{equation}\label{eq:pcu-representation}
D\SWF[\Delta]
= \sum_{t\ge0}\int_{\Pop}\smw_t(x)\Delta_t(\dd x)
= \sum_{t\ge0}\int_{\Pop}\ndw_t(x)\Delta_t^{\PVEU}(\dd x).
\end{equation}
Thus $\ndw_t(x)$ is the social value of one $\PVEU$ delivered to cell $x$ on this integrable local domain.
\end{proposition}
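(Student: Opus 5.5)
The identity is a change of measure inside the Gâteaux derivative of \Cref{ass:local-derivatives}, followed by the factorization of \Cref{lem:factorization}. Let $\Delta$ be admissible and $\PVEU$-integrable, and write $\Delta_t(\dd x)=h_t(x)\dist_t(\dd x)$ with $h_t\in L^\infty(\dist_t)$.

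The first equality is \Cref{def:smw}. The representer identity $D\SWF[\Delta]=\sum_t\int\smw_t h_t\,\dd\dist_t$ is the same as $\sum_t\int\smw_t\,\dd\Delta_t$, because $\Delta_t\ll\dist_t$ has density $h_t$. The versions of $\smw_t,\contv_t$ are fixed on full-measure sets as in \Cref{ass:localized-cell}, so the integrand is well defined $\dist_t$-almost everywhere.

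For the second equality, define $\Delta_t^{\PVEU}$ by $\dd\Delta_t^{\PVEU}/\dd\Delta_t=\contv_t$. Since $\contv_t\in(0,\infty)$ $\dist_t$-a.e., this is a well-defined signed measure. $\PVEU$-integrability, $\int\contv_t\,\dd|\Delta_t|<\infty$, makes its total variation finite date by date and summable over active dates. Then:
\begin{enumerate}[label=(\roman*), leftmargin=2.3em]
\item For each $t$, the chain rule for Radon--Nikodym densities gives
\[
\int\ndw_t\,\dd\Delta_t^{\PVEU}=\int\ndw_t\contv_t\,\dd\Delta_t .
\]
\item On $\{\contv_t>0\}$, which has full $\dist_t$-measure and hence full $|\Delta_t|$-measure, \Cref{lem:factorization} replaces $\ndw_t\contv_t$ by $\smw_t$ pointwise, so the right side equals $\int\smw_t\,\dd\Delta_t$.
\item Summing over $t$ recovers the first expression for $D\SWF[\Delta]$.
\end{enumerate}

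The main obstacle is justifying that each integral, and the interchange of sum and integral, is legitimate rather than merely formal. The integrand $\ndw_t$ need not be bounded, and $\PVEU$-integrability controls $\contv_t$, not $\ndw_t$. I would handle this by working with $\smw_t$ throughout. Because $\ndw_t\contv_t=\smw_t$ holds a.e., absolute integrability of $\ndw_t$ against $\Delta_t^{\PVEU}$ is equivalent to absolute integrability of $\smw_t$ against $\Delta_t$. \Cref{ass:local-derivatives} asserts the latter: its displayed integrals are finite on the admissible bounded-density domain, and the perturbation is finite-horizon, so only finitely many dates contribute. The sum is therefore finite, and no convergence argument is needed.

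The interpretive claim follows by localization, as in \Cref{ass:localized-cell}. Apply the identity to a perturbation delivering density $1/\contv_t$ concentrated at cell $x$, and normalize by cell mass. The right side then reduces to $\ndw_t(x)$. This shows that $\ndw_t(x)$ is the social value of one $\PVEU$ at $x$.
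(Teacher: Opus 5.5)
Your proposal is correct and follows the paper's own argument, which substitutes the factorization $\smw_t=\ndw_t\contv_t$ into the representer \Cref{eq:smw} and unwinds the definition of $\Delta_t^{\PVEU}$. Your extra bookkeeping makes explicit what the paper's one-line proof leaves implicit: fixed full-measure versions, the identity $|\Delta_t^{\PVEU}|=\contv_t|\Delta_t|$ that reduces integrability of $\ndw_t$ to that of $\smw_t$, and finitely many active dates.
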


\begin{proof}
Substitute $\smw_t=\ndw_t\contv_t$ into \Cref{eq:smw} and use the definition of $\Delta_t^{\PVEU}$.
\end{proof}

For cells with positive private marginal values,
\[
\frac{\smw_t(x)}{\smw_t(x')}=\frac{\contv_t(x)}{\contv_t(x')}\frac{\ndw_t(x)}{\ndw_t(x')}.
\]
Weights in private-value units allow unequal private marginal values. Exact matching is needed to interpret a resource-unit comparison as age-related social priority.

\subsection{Verification certificates for policy margins}

A verification certificate lists the future payoff induced by the policy margin, its market price, and the relevant constraint loadings. The matched-state code retains the non-age coordinates needed to identify those objects once age alignment and the private marginal value are fixed.

\begin{table}[htbp]
\centering
\caption{Verification certificate for matched states}
\label{tab:primitive-certificate}
\scriptsize
\renewcommand{\arraystretch}{1.05}
\resizebox{0.99\textwidth}{!}{%
\begin{tabular}{>{\raggedright\arraybackslash}p{0.19\textwidth}>{\raggedright\arraybackslash}p{0.22\textwidth}>{\raggedright\arraybackslash}p{0.18\textwidth}>{\raggedright\arraybackslash}p{0.18\textwidth}>{\raggedright\arraybackslash}p{0.24\textwidth}}
\toprule
Policy margin & Future payoff & Market object & Current loading & Required non-age information \\
\midrule
Risk-free saving & Next-period survival-contingent asset payoff & $1+r$ & Borrowing/Euler wedge & Transition-row type \\
Risky asset & State-contingent payoff vector & Asset price or state-price vector & Portfolio constraints & Payoff regime and transition row \\
Labor tax & Current labor or resource perturbation & After-tax wage & Labor wedge & Productivity, earnings, and tax regime \\
Liquid-illiquid saving & Liquid and illiquid portfolio payoff & Asset prices and adjustment cost & Portfolio and adjustment constraints & Illiquid state, adjustment regime, and transition row \\
\bottomrule
\end{tabular}%
}
\end{table}

\subsection{General verification for convex household problems}

Consider now a finite-life household with state $x=(a,\zeta,w)$, age, nonresource state, and scalar resources. At age $a$ it solves
\begin{equation}\label{eq:ha-class-value}
V_a(w,\zeta;m)=\max_{d\in D_a(\zeta,w)}\left\{u_a(c_a(d,w,\zeta)+m,\zeta)+\beta p_a\int V_{a+1}(W_a(d,w,\zeta,\zeta'),\zeta';0)P_a(\zeta,w,d,\dd\zeta')\right\}
\end{equation}
subject to convex constraints
\begin{equation}\label{eq:ha-class-constraints}
g_{aj}(d,w,\zeta)\le0,
\qquad j=1,\ldots,J_a .
\end{equation}
Let $\contv_a(w,\zeta)=\partial V_a(w,\zeta;m)/\partial m|_{m=0}>0$.

\begin{assumption}[Convex age-structured household class]\label{ass:ha-class}
For each $(a,\zeta)$, feasible sets are nonempty and convex, the objective in
\Cref{eq:ha-class-value} is concave, and the constraints in
\Cref{eq:ha-class-constraints} satisfy Slater's condition: relative to the affine
constraints, there is a feasible choice that satisfies the inequality constraints
strictly. Flow utility is differentiable and strictly concave, $\contv_a>0$, and
$w\mapsto \contv_a(w,\zeta)$ is continuous and strictly monotone on the support
region used for matching. For margin $M$, fix a verified measurable tuple $\Psi_a^M(\zeta)$ such that equality of
$(\Psi_a^M,\contv_a)$, after age alignment, identifies the represented future leg, the
gross market compensation object, and the relevant loadings.
\end{assumption}

\begin{remark}[Constructive certificate for richer state spaces]\label{rem:constructive-certificate}
A primitive certificate lists the represented leg, market price, and KKT loadings, then lets $\Psi_a^M$ be the minimal non-age tuple that identifies them once $\contv_a$ and age alignment are fixed.
\end{remark}

\begin{definition}[Verification protocol for the current map]\label{prop:model-derived-map}
For margin $M$, define
\begin{equation}\label{eq:model-derived-map}
\Comp_a^M(w,\zeta):=\Psi_a^M(\zeta),\qquad
\Psi_a^M(\zeta)=\min\{\text{non-age primitives identifying }(L^M,\Qterm^M,\Xi^M,\operatorname{supp} L^M)\},
\end{equation}
where the minimum is in the partition-refinement order on the maintained support.
On maintained support, a cross-age current match requires
\[
\Comp_a^M(w,\zeta)=\Comp_{a'}^M(\widetilde w,\widetilde\zeta),
\qquad
\contv_a(w,\zeta)=\contv_{a'}(\widetilde w,\widetilde\zeta).
\]
Every strict coarsening requires a separate sufficiency check, whereas any refinement remains sufficient and can only reduce the admissible match set.
\end{definition}
\begin{remark}[Represented margin dependence and scope]\label{rem:margin-dependence}
The one-step risk-free saving margin is a benchmark. Risky assets, labor supply, promised utility, or longer legs may enlarge the tuple and require a separate verification.
\end{remark}

\subsection{Longer comparison chains}

For a chain of exact fiber comparison tuples $\{\pi_\ell\}_{\ell=0}^{n-1}$, define
\[
\mathcal D^S(\pi_{0:n-1}):=\prod_{\ell=0}^{n-1}\mathcal D^S(\pi_\ell).
\]

\begin{corollary}[Blockwise multiplication on a fixed chain]\label{cor:chain-decomposition}
For any exact fiber comparison tuples $\{\pi_\ell\}_{\ell=0}^{n-1}$, singleton or finite-portfolio future legs,
\begin{equation}\label{eq:chain-decomposition}
\mathcal D^S(\pi_{0:n-1})=\prod_{\ell=0}^{n-1}\Mterm(\pi_\ell)\Rterm^S(\pi_\ell).
\end{equation}
If every future leg is a singleton and, for each $\ell<n-1$, the dated future older cell in block $\ell$ is the dated current older cell in block $\ell+1$, then with $(t_n,x^{\mathrm o}_{t_n,n})=(s_{n-1},x^{\mathrm o}_{s_{n-1},n-1})$,
\begin{equation}\label{eq:chain-decomposition-telescoped}
\mathcal D^S(\pi_{0:n-1})
=
\frac{\smw_{t_0}(x^{\mathrm o}_{t_0,0})}{\smw_{t_n}(x^{\mathrm o}_{t_n,n})}
=
\prod_{\ell=0}^{n-1}\Mterm(\pi_\ell)\Rterm^S(\pi_\ell).
\end{equation}
The product form also holds for portfolio-valued legs, with bounds multiplying across links and log errors adding.
\end{corollary}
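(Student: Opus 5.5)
The plan is to prove \Cref{eq:chain-decomposition} by applying \Cref{thm:mr} link by link, then obtain the telescoped form from the chain-linking condition. For each exact fiber tuple $\pi_\ell=(t_\ell,x^{\mathrm o}_{t_\ell,\ell},x^{\mathrm y}_{t_\ell,\ell},x^{\mathrm o}_{s_\ell,\ell})$, or its finite-portfolio version, \Cref{thm:mr} gives $\mathcal D^S(\pi_\ell)=\Mterm(\pi_\ell)\Rterm^S(\pi_\ell)$. That identity uses only the factorization $\smw=\ndw\contv$ from \Cref{lem:factorization} and the equality $\contv_{t_\ell}(x^{\mathrm o}_{t_\ell,\ell})=\contv_{t_\ell}(x^{\mathrm y}_{t_\ell,\ell})$, which holds because each tuple lies in a comparison fiber. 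All factors are strictly positive and finite, since $\smw>0$, $\contv\in(0,\infty)$, and portfolio weights $\ell_j$ yield $\mathcal V^\omega_s>0$. Taking the product over $\ell=0,\ldots,n-1$ and using the definition $\mathcal D^S(\pi_{0:n-1})=\prod_\ell\mathcal D^S(\pi_\ell)$ then gives \Cref{eq:chain-decomposition} directly. No linking condition is needed at this stage, so the statement holds for singleton and portfolio legs alike.

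For the telescoped identity, assume singleton legs. Then \Cref{def:sdf-social} reads $\mathcal D^S(\pi_\ell)=\smw_{t_\ell}(x^{\mathrm o}_{t_\ell,\ell})/\smw_{s_\ell}(x^{\mathrm o}_{s_\ell,\ell})$, because in the singleton case $\mathcal V^\omega_s(\pi_\ell)=\smw_{s_\ell}(x^{\mathrm o}_{s_\ell,\ell})$. The linking hypothesis says $(s_\ell,x^{\mathrm o}_{s_\ell,\ell})=(t_{\ell+1},x^{\mathrm o}_{t_{\ell+1},\ell+1})$ for $\ell<n-1$. So the denominator of factor $\ell$ is the numerator of factor $\ell+1$, and the product collapses to $\smw_{t_0}(x^{\mathrm o}_{t_0,0})/\smw_{s_{n-1}}(x^{\mathrm o}_{s_{n-1},n-1})$. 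With the stated convention $(t_n,x^{\mathrm o}_{t_n,n})=(s_{n-1},x^{\mathrm o}_{s_{n-1},n-1})$, this is the middle expression in \Cref{eq:chain-decomposition-telescoped}. Combining it with the first part gives the second equality.

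The one delicate step is making sure the cancellation is legitimate. The value $\smw_{s_\ell}(x^{\mathrm o}_{s_\ell,\ell})$ appearing as a future leg must be the same version as $\smw_{t_{\ell+1}}(x^{\mathrm o}_{t_{\ell+1},\ell+1})$ appearing as a current state. I would handle this by invoking \Cref{ass:localized-cell}: versions of $\smw$ are fixed on sets of full measure. On finite grids this is automatic, because dated cells are points. On continuous supports the identity is asserted for the fixed versions, or else read in the conditional essential range, and the same dated representer evaluated at the same dated state gives the same number.

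For the final sentence on portfolio legs, the product form needs nothing beyond the first part. In the approximate case, \Cref{cor:approximate-matching} applies to each link, so each $\Mterm(\pi_\ell)$ is identified only up to a factor in $[e^{-\varepsilon_\ell},e^{\varepsilon_\ell}]$. Taking logs, the errors add to at most $\sum_\ell\varepsilon_\ell$ by the triangle inequality, and the multiplicative bounds multiply. I do not expect a genuine telescoping for portfolio legs. The cancellation fails because a portfolio value $\mathcal V^\omega_s$ is an average of social weights rather than a single one. I would therefore state only the product and bound forms in that case.
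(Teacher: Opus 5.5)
Your proposal is correct and follows essentially the same route as the paper: apply \Cref{thm:mr} block by block and multiply, then use the singleton linking condition to cancel the intermediate welfare weights, with per-link $\varepsilon_\ell$ bounds from \Cref{cor:approximate-matching} giving multiplying bounds and additive log errors. Your explicit handling of versions of $\smw$ via \Cref{ass:localized-cell} adds precision that the paper's one-line proof leaves implicit.
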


\begin{remark}[Chain identity versus path consistency]\label{rem:chain-not-path}
\Cref{cor:chain-decomposition} is a product identity along a specified chain. 
\Cref{thm:path-independence} imposes the stronger requirement that $\log\ndw$ be consistent across all paths in the comparison network. 
Once exact equality in $\contv_t$ is replaced by approximate matching, finite-grid recursions depend on the chosen network.
\end{remark}

\begin{remark}[Relation to Eden's benchmark]
Dropping $\contv_t$ gives same-$\Comp_t$ matching. Dropping $\Comp_t$ as well, or taking it to be trivial, gives pure age matching. Exact fibers coincide with age matching only for cells that share the verified code and private envelope.
\end{remark}

\subsection{Comparison-system measure}

Now set $\Psi_t(x):=(\Comp_t(x),\contv_t(x))$, $\nu_t:=\Psi_{t\#}\dist_t$, and let $\{\dist_{t,g}\}_g$ be the regular conditional law. An \emph{exact fiber comparison system} is a measurable subprobability kernel $K_t(g,\dd\pi)$ over admissible bridge-verified exact fiber tuples, whose normalized current marginals are absolutely continuous with respect to $\dist_{t,g}$ on fibers with $K_t(g,\Pi)>0$ (unsupported fibers receive zero mass). It induces
\[
\Theta_t(\dd\pi):=\int K_t(g,\dd\pi)\,\nu_t(\dd g).
\]
Note that almost-everywhere and positive-measure statements below are with respect to $\Theta_t$. This prevents the comparison system from concentrating on representatives outside the full measure versions used for conditional residual neutrality.

The economic implication of this comparison system is stated in \Cref{cor:trilemma}.

\subsection{Approximate networks and convergence}
\begin{proposition}[Approximate chain identity and endpoint mismatch]\label{prop:approx-chain-bound}
Fix date $t$ and a directed current cell chain $c=(x^0,x^1,\ldots,x^n)$. For each edge $\ell$, suppose there is a bridge-verified admissible approximate tuple
\[
\bar\pi_\ell=(t,x^\ell,x^{\ell-1},x_{s_\ell}^{\mathrm o})
\]
with maintained current map equality, witness $x_{s_\ell}^{\mathrm o}\in\Gamma_{t,s_\ell}(x^{\ell-1})$, and mismatch
\[
\varepsilon_\ell:=|\log\contv_t(x^\ell)-\log\contv_t(x^{\ell-1})|.
\]
Define
\begin{equation}\label{eq:approx-edge-weight}
\widetilde g_t^\ell:=
\log\Qterm(\bar\pi_\ell)-\log\mathcal D^S(\bar\pi_\ell)+\log\CWedgeTerm(\bar\pi_\ell)
\end{equation}
and
\begin{equation}\label{eq:approx-chain-total}
\widetilde G_t(c):=\sum_{\ell=1}^n\widetilde g_t^\ell.
\end{equation}
Under the assumptions of \Cref{thm:impossibility},
\begin{equation}\label{eq:approx-chain-telescopes}
\widetilde G_t(c)=\log\smw_t(x^0)-\log\smw_t(x^n).
\end{equation}
Hence any second approximate chain $\widetilde c$ with the same endpoints satisfies
\begin{equation}\label{eq:approx-chain-disagreement}
\widetilde G_t(c)=\widetilde G_t(\widetilde c).
\end{equation}
Moreover,
\begin{equation}\label{eq:approx-chain-error}
\widetilde G_t(c)-\bigl(\log\ndw_t(x^0)-\log\ndw_t(x^n)\bigr)
=
\log\contv_t(x^0)-\log\contv_t(x^n),
\end{equation}
so
\begin{equation}\label{eq:approx-chain-error-bound}
\left|
\widetilde G_t(c)-\bigl(\log\ndw_t(x^0)-\log\ndw_t(x^n)\bigr)
\right|
\le
\sum_{\ell=1}^n\varepsilon_\ell.
\end{equation}
If each edge has mismatch at most $\bar\varepsilon$, the endpoint mismatch is at most $n\bar\varepsilon$.
\end{proposition}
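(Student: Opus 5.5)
The plan is to prove the approximate chain identity one edge at a time and then telescope, reusing the exact-edge machinery of \Cref{prop:exact-edge-weight} and \Cref{thm:path-independence}. Since the matching is only approximate, I will work with the resource weight $\smw_t$ rather than $\ndw_t$.

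For a single edge $\ell$, I would start from the coarse-matching identity of \Cref{prop:coarse-matching} applied to $\bar\pi_\ell$. That identity needs only equality of the maintained current map and the continuation witness; it does not need equality of $\contv_t$. Rewriting \Cref{eq:app-coarse-decomp} with $\smw_t=\ndw_t\contv_t$ (\Cref{lem:factorization}) collapses the $\ndw$ and $\contv$ ratios into
\[
\mathcal D^S(\bar\pi_\ell)\cdot\frac{\smw_t(x^{\ell-1})}{\smw_t(x^\ell)}=\Rterm^S(\bar\pi_\ell).
\]
This is really just \Cref{def:sdf-social} combined with \Cref{eq:Rsocial}. Because the comparisons are bridge-verified, under the assumptions of \Cref{thm:impossibility} I can invoke \Cref{prop:implementability}, together with its continuation-adjusted version when $\BridgeCorr\neq1$, to replace $\Rterm^S(\bar\pi_\ell)$ by $\Qterm(\bar\pi_\ell)\CWedgeTerm(\bar\pi_\ell)$. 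Taking logs gives
\[
\widetilde g_t^\ell=\log\smw_t(x^{\ell-1})-\log\smw_t(x^\ell).
\]
The one point to check is that \Cref{prop:implementability}, and the regularity of \Cref{ass:represented-family-bridge}, are stated for approximate and coarse comparisons as well, not only exact ones. \Cref{ass:represented-family-bridge} explicitly covers them, and the supporting-price representation concerns only the younger state $x^{\ell-1}$, so the substitution is valid. I expect this verification, that the within-person restriction never uses exact matching, to be the main obstacle. Everything after it is bookkeeping.

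Summing over $\ell=1,\dots,n$ telescopes to \Cref{eq:approx-chain-telescopes}. The right side depends only on the endpoints, which gives \Cref{eq:approx-chain-disagreement} for any second chain $\widetilde c$ with the same endpoints. Next I substitute $\log\smw_t=\log\ndw_t+\log\contv_t$ at $x^0$ and $x^n$, which yields the exact error identity \Cref{eq:approx-chain-error}.

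For the bound, I write
\[
\log\contv_t(x^0)-\log\contv_t(x^n)=\sum_{\ell=1}^n\bigl(\log\contv_t(x^{\ell-1})-\log\contv_t(x^\ell)\bigr).
\]
The triangle inequality then bounds its absolute value by $\sum_\ell\varepsilon_\ell$, which is \Cref{eq:approx-chain-error-bound}. The final claim follows immediately from $\varepsilon_\ell\le\bar\varepsilon$.
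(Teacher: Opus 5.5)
Your proposal is correct and follows the paper's proof essentially step for step. The paper obtains the edge identity $\widetilde g_t^\ell=\log\smw_t(x^{\ell-1})-\log\smw_t(x^\ell)$ from \Cref{ass:represented-family-bridge} and \Cref{prop:implementability}, telescopes it, substitutes $\smw_t=\ndw_t\contv_t$ at the endpoints, and bounds the endpoint private-scale gap by the sum of edge mismatches, exactly as you do, with your check that the bridge substitution never uses exact matching made explicit.
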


\begin{proposition}[Finite-grid approximate-graph convergence]\label{prop:grid-convergence}
Let $G^0$ be an exact-fiber graph on a compact supported domain with finitely many connected components after the stated component anchors. Let $G_n$ be finite approximate graphs generated from partitions with mesh $h_n\downarrow0$ and matching tolerances $\varepsilon_n\downarrow0$. Suppose: (i) every exact edge and target cell has approximating finite-grid counterparts; (ii) no sequence of approximate edges converges outside the exact edge relation; (iii) the composite edge label $\log\Qterm-\log\bar{\mathcal D}^S+\log\CWedgeTerm$ and the target weights are continuous on the supported domain; (iv) the normalized feasible sets for the finite programs are uniformly bounded after component anchoring; and (v) the anchored finite feasible sets are nonempty for all sufficiently large $n$ and converge inner semicontinuously to the exact anchored set. Embed each finite potential as the partition-cell piecewise-constant function on the exact supported domain, with anchored components identified in $L^\infty$ modulo null sets. Then, the feasible potential sets converge in Hausdorff distance in this common metric space to the exact anchored feasible set, and the identified target interval in \Cref{eq:finite-graph-objective} satisfies
\[
\underline\Delta_{AB}^{(n)}\to \underline\Delta_{AB}^{0},
\qquad
\overline\Delta_{AB}^{(n)}\to \overline\Delta_{AB}^{0}.
\]
Thus, shrinking the resource grid and matching tolerance recovers the exact-fiber interval when support and component structure are stable. In KKT environments the regime code is explicit: borrowing-kink cells are either retained as boundary cells or removed in a separate support check, not treated as smooth interiors.
\end{proposition}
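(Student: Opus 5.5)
The plan is to treat the statement as a stability result for a parametric family of finite convex programs embedded in a common function space, and then pass the set convergence through the continuous objective \Cref{eq:finite-graph-objective}. First I fix the common metric space. Each anchored finite potential $z^{(n)}\in\mathcal Z_{\varepsilon_n}$ is identified with its partition-cell piecewise-constant extension on the supported domain. The exact anchored feasible set $\mathcal Z^0$ consists of potentials $z$ with $z_y-z_o=H_e$ on exact edges and zero at the component anchors. By \Cref{thm:path-independence} and \Cref{cor:component-normalization}, $\mathcal Z^0$ is the single anchored potential when labels are coherent, or a closed bounded set in the $L^\infty$ tolerance version. Hypothesis (iv) places every finite feasible set, after anchoring, in a fixed ball of $L^\infty$.

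Second, I would prove Hausdorff convergence by establishing two Kuratowski inclusions and then upgrading them. The inner half is supplied directly by hypothesis (v): every exact feasible point is the limit of finite feasible points. For the outer half, I take $z^{(n)}\in\mathcal Z_{\varepsilon_n}$ converging along a subsequence to some $z$. The band constraints are $|z^{(n)}_y-z^{(n)}_o-H_e|\le\varepsilon_e$, where $\varepsilon_e\le\varepsilon_n\to0$. By (ii), approximate edges accumulate only on exact edges, and by (iii) the labels $H_e$ converge to the exact labels. Together these give $z_y-z_o=H_e$ on every exact edge, with anchors preserved by (i) and the finitely many components. So every limit lies in $\mathcal Z^0$.

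The upgrade from Kuratowski to Hausdorff convergence is the main obstacle, because $L^\infty$ balls are not compact. My first attempt is to use the structure of the problem. Finitely many components, combined with \Cref{cor:component-normalization}, mean each exact potential is determined by edge labels along spanning trees. By the continuity in (iii), the finite potentials are then uniformly equicontinuous on the compact domain, up to the mesh $h_n$. An Arzelà--Ascoli-type argument applied to the piecewise-constant extensions then yields relative compactness, and for uniformly bounded sets inside a compact set Kuratowski and Hausdorff convergence coincide. If equicontinuity fails near borrowing-kink cells, I would instead argue by contradiction. Suppose there were points at distance at least $\eta$ from $\mathcal Z^0$, or points of $\mathcal Z^0$ at distance at least $\eta$ from the finite sets. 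Then the compactness extracted above, or the inner semicontinuity in (v), produces a contradiction. This is also where the statement's requirement that kink cells be coded explicitly as a regime enters: it keeps the labels continuous, as (iii) requires, on each retained piece.

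Finally, the map $z\mapsto\Delta_{AB}(z)$ is a difference of logs of positive weighted averages of $e^{z}$. The potentials are uniformly bounded and the target weights are continuous by (iii), so this map is uniformly continuous on the common ball. The piecewise-constant target masses also converge to the exact target masses, by (i) and (iii). Hausdorff convergence of the feasible sets therefore gives convergence of the infima and suprema. The one remaining check is that the objective evaluated on grid weights converges uniformly to the exact objective, so that $\underline\Delta_{AB}^{(n)}\to\underline\Delta_{AB}^0$ and $\overline\Delta_{AB}^{(n)}\to\overline\Delta_{AB}^0$.
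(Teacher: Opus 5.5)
Your skeleton matches the paper's proof: outer convergence of the band constraints, inner convergence from (v), Hausdorff convergence of the anchored feasible sets, and then continuity of $\Delta_{AB}$. The gap is at the compactness step. The paper simply attributes a common compact envelope to (iv) and applies the maximum--minimum theorem there. You are right that uniform boundedness in $L^\infty$ is not compactness, but the argument you offer in its place fails.

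Continuity of the composite label in (iii) is continuity in the two endpoints of a single link. It says nothing about how a potential varies between metrically nearby cells. Links join a younger and an older state, so nearby cells of the same age are never adjacent. Their potential difference is a signed label sum along a connecting path in $G_n$, and that path's length and route are not controlled by $h_n$. Each link also leaves slack $\varepsilon_e$, which accumulates linearly as in \Cref{prop:approx-chain-bound}. On a pendant path of $L$ links, the feasible values at the free end fill an interval of width $2\sum_e\varepsilon_e$, which can reach $2L\varepsilon_n$. Nothing in (i)--(v) keeps $L\varepsilon_n$ small. Hence neither asymptotic equicontinuity nor relative compactness of $\bigcup_n\mathcal Z_{\varepsilon_n}$ follows. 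Your fallback contradiction argument reuses the very compactness it was meant to supply.

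The missing control has to be assumed. For instance, let $L_n$ be the largest graph distance in $G_n$ from a vertex to its component anchor, and assume $\varepsilon_nL_n\to0$. With coherent labels $\mathcal Z^0=\{z^0\}$, as you note. Pick $\hat z^{(n)}\in\mathcal Z_{\varepsilon_n}$ with $\hat z^{(n)}\to z^0$, using (v). For any $z\in\mathcal Z_{\varepsilon_n}$, the difference $z-\hat z^{(n)}$ vanishes at the anchors and changes by at most $2\varepsilon_e$ across each link. So $\|z-\hat z^{(n)}\|_\infty\le2\varepsilon_nL_n$, which gives Hausdorff convergence directly with no compactness argument at all.

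There are also two smaller corrections. First, in your outer inclusion it is (i), not (ii), that lets you impose $z_y-z_o=H_e$ on \emph{every} exact edge. You need finite links converging to each exact edge, on which the band holds. By contrast, (ii) only rules out spurious limit constraints, which bear on the inner half that (v) already gives you. Second, passing $z^{(n)}(y_n)-z^{(n)}(o_n)\to z(y)-z(o)$ along converging links requires point values of the limit. $L^\infty$ modulo null sets does not provide these unless the limit is continuous at edge endpoints or the edge constraints are stated almost everywhere.

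Your final step is sound and somewhat sharper than the paper's appeal to the maximum theorem. The map $z\mapsto\log(\sum_iw_ie^{z_i}/\sum_iw_i)$ is $1$-Lipschitz in the sup norm. So each endpoint of the interval moves by at most twice the Hausdorff distance, plus the error from changing the weights. For that weight error you need the normalized target measures to converge in total variation, not merely weakly, because the potentials are only bounded and measurable.
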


\begin{remark}[From exact potentials to finite grid recursions]\label{rem:exact-vs-approx-graph}
The exact theorem gives a potential for $\log\ndw$. Approximate chain totals telescope in raw welfare weights, and the endpoint mismatch in the private scale controls their distance from that residual potential.
\end{remark}

\begin{remark}[From tuplewise to class level restrictions]\label{rem:tuplewise-regularity}
Pairwise $\varepsilon$ bounds become class-level bounds with regularity of a declared class summary of $\ndw_t$ across nearby private envelopes. Here $\bar\eta_t(u,e^z)$ denotes a maintained positive measurable selection, average, or envelope over $\{x:\Comp_t(x)=u,\ \contv_t(x)=e^z\}$; if $z\mapsto\log\bar\eta_t(u,e^z)$ is $L$-Lipschitz, moving $\varepsilon$ in $z=\log\contv_t$ adds at most $L\varepsilon$. For multidimensional states, a declared metric $d$ on sufficient codes yields a relaxed edge when $d(x,x')\le\delta$, with graph band $\varepsilon_e+L\delta$; exact matching is $\delta=0$.
\end{remark}


\section{Additional Quantitative Results}\label{app:additional-quantitative}

The tables below report the full comparison set and the unit, support, wedge, grid, and richer-state exercises summarized in the main text.

\begin{table}[htbp]
\centering
\caption{Full identification results for the focal target comparison}
\label{tab:full-main-diagnostic-audit}
\scriptsize
\renewcommand{\arraystretch}{1.05}
\resizebox{0.99\textwidth}{!}{%
\begin{tabular}{>{\raggedright\arraybackslash}p{0.24\textwidth}>{\raggedright\arraybackslash}p{0.13\textwidth}>{\raggedright\arraybackslash}p{0.14\textwidth}>{\raggedright\arraybackslash}p{0.10\textwidth}>{\raggedright\arraybackslash}p{0.09\textwidth}>{\raggedright\arraybackslash}p{0.13\textwidth}>{\raggedright\arraybackslash}p{0.17\textwidth}}
\toprule
Maintained system & Support & Unit & Normalization & Wedge & Identified sign? & Interval \\
\midrule
Age-only & full & implicit dollars & none & none & No & [-0.04044,\ 0.15093] \\
Age-only & common & implicit dollars & none & none & No & [-0.00967,\ 0.07886] \\
Matched states & common & current goods & fixed & KKT & Yes, conditional & [-0.00819,\ -0.00298] \\
Matched states & common & current goods & free & KKT & No & [$-\infty$,\ $+\infty$] \\
Matched states & common & current goods & outer & KKT & No & [-2.16877,\ 2.15760] \\
Largest common component & common comp. & current goods & within comp. & KKT & Yes, within comp. & [0.00004,\ 0.00082] \\
Unit-rescaled network & rebuilt & age-state & fixed & KKT & No & [-0.48701,\ 0.50273] \\
Richer-state system & rebuilt & current goods & fixed & KKT & No & [-0.03237,\ 0.06770] \\
\bottomrule
\end{tabular}%
}
\begin{minipage}{0.99\textwidth}
\footnotesize Notes. Intervals are supported contrasts in normalized welfare weights; positive $\Delta_{AB}$ favors target $A$. An identified sign is conditional on the maintained row. Unrestricted component normalizations are unbounded. The outer-envelope row restricts component gaps by the log range of private marginal values on maintained support.
\end{minipage}
\end{table}


\subsection{Omitted support, unit rebuilds, and KKT-consistent wedges}\label{subsec:omitted-support-normalization-wedge}

Table~\ref{tab:unit-wedge-protocol} treats the private unit, continuation adjustment, and constraint wedge as identification primitives. The supported comparison leaves omitted-support, private-unit, continuation-adjustment, and wedge-implementation layers; support shares are over the focal target domain. The current-goods unit is tied to the planner's physical resource constraint, while the age-state rescaled unit rebuilds support, candidates, components, and intervals through $\widetilde\contv^{(\theta)}=\contv/\bar\contv_a(k)^\theta$. The baseline sign is certified for $\theta\le0.50$ and continuation-adjustment range $|\Delta\log\mathfrak B|\le0.001$, but not by $\theta=0.75$ or $\theta=1$.

My raw-Euler row is a numerical stress test. It shows what happens if finite-grid Euler residuals are propagated as economic wedges. The maintained object uses the KKT-consistent projection: set $\widehat\lambda_a=\max\{u'(c_a)-E_a,0\}$ on boundary cells and $0$ on strict interiors up to tolerance, and use $\Lambda=[1-\widehat\lambda/u'(c)]^{-1}$ clipped at one before graph recursion. On the baseline margin, strict interior cells carry $\BaselineStrictInteriorShare$ of positive mass and boundary-loading cells $\BaselineBoundaryLoadingShare$. The implemented wedge mean and 95th percentile are $\BaselineImplementedWedgeMean$ and $\BaselineImplementedWedgePNinetyFive$.

\begin{table}[htbp]
\centering
\caption{Unit path, omitted support, and wedge diagnostics}
\label{tab:unit-wedge-protocol}
\footnotesize
\renewcommand{\arraystretch}{1.04}
\resizebox{0.99\textwidth}{!}{%
\begin{tabular}{>{\raggedright\arraybackslash}p{0.28\textwidth}c>{\raggedright\arraybackslash}p{0.16\textwidth}ccccc}
\toprule
Specification & $\theta$ & Network rule & Target support share & Components & Point & Interval or bound & Identified sign \\
\midrule
Current-goods unit & 0.00 & tolerance graph & 0.7472 & 3737 & -0.00481 & [-0.00819,\ -0.00298] & negative \\
Bounded continuation adjustment, $\bar b=0.001$ & none & fixed-network range & 0.7472 & 3737 & -0.00481 & [-0.00919,\ -0.00198] & negative \\
Unit path, $\theta=0.25$ & 0.25 & rebuilt tolerance network & 0.7573 & 62 & -0.00457 & [-0.01611,\ -0.00281] & negative \\
Unit path, $\theta=0.50$ & 0.50 & rebuilt tolerance network & 0.7888 & 62 & -0.02694 & [-0.08342,\ -0.02136] & negative \\
Unit path, $\theta=0.75$ & 0.75 & rebuilt tolerance network & 0.8166 & 47 & -0.03167 & [-0.28857,\ 0.30268] & no \\
Age-state rescaled unit & 1.00 & rebuilt tolerance network & 0.9895 & 37 & 0.04076 & [-0.48701,\ 0.50273] & no \\
Direct raw Euler proxy wedge & none & diagnostic five-nearest network & 0.7472 & 77 & 0.03032 & [-0.97832,\ 0.98022] & no \\
Illustrative lower-tail completion & none & completion bound & 1.0000 & not applicable & not applicable & [-0.02902,\ -0.02402] & maintained completion \\
\bottomrule
\end{tabular}%
}
\begin{minipage}{0.99\textwidth}
\footnotesize Notes. Target support share is over focal target mass; candidate share in \Cref{tab:grid-candidate-sensitivity} is over tolerance candidates. The continuation-adjustment row reports the induced target-average range $|\Delta\log\BridgeCorr|\le0.001$ on the fixed network; the fixed-normalization sign is lost only when this range reaches $0.00298$. Unit-path rows rebuild the network. The raw Euler proxy uses a diagnostic five-nearest network and unprojected finite-grid Euler residuals.
\end{minipage}
\end{table}


\subsection{Locality, grid discipline, and richer states}\label{subsec:locality-grid-states}

Table~\ref{tab:grid-candidate-sensitivity} reports grid refinement and tolerance-first matching. Its candidate support share is computed over the tolerance-candidate network, and not over the focal target domain in \Cref{tab:unit-wedge-protocol}; note also that each tolerance row is a maintained finite-network estimand, not by itself a joint proof of exact-fiber convergence. Grid mismatch is small. The maintained tolerance row uses $\varepsilon=0.005$ and an exhaustive search cap $K=58$. Holding targets fixed, higher curvature $\sigma=2$ gives $-0.00444$ with $[-0.00607,-0.00348]$ and $\beta=0.97$ gives $-0.00137$ with $[-0.00180,-0.00110]$. Lower patience $\beta=0.94$ gives $0.13629$ with $[0.11113,0.14976]$ and lower $r=0.02$ gives $0.07253$ with $[0.05591,0.08221]$. Holding the calibration fixed and shifting target windows gives $[0.00090,0.01444]$ for ages $20$ to $29$ versus $40$ to $49$, the focal $[-0.00819,-0.00298]$, and $[-0.00929,-0.00516]$ for ages $30$ to $39$ versus $50$ to $59$.

\begin{table}[htbp]
\centering
\caption{Grid refinement and tolerance-first matching diagnostics}
\label{tab:grid-candidate-sensitivity}
\footnotesize
\renewcommand{\arraystretch}{1.05}
\resizebox{0.99\textwidth}{!}{%
\begin{tabular}{lcccccc}
\toprule
\multicolumn{7}{l}{\textit{Panel A. Asset-grid refinement}} \\
Asset points & Mean $|\Delta \log\upsilon|$ & 95th percentile $|\Delta \log\upsilon|$ & Mean wedge & 95th percentile wedge & $\widehat{\Spread}$ & $\widehat{\DemDist}$ \\
\midrule
250 & 0.0006 & 0.0033 & 1.0120 & 1.0842 & 3.036 & 5.500 \\
800 & 0.0004 & 0.0017 & 1.0119 & 1.0839 & 3.027 & 5.231 \\
1200 & 0.0003 & 0.0013 & 1.0118 & 1.0834 & 3.023 & 5.195 \\
\midrule
\multicolumn{7}{l}{\textit{Panel B. Tolerance-first current-state network}} \\
Tolerance $\varepsilon$ & Exhaustive cap $K$ & Exhaustive? & Mean mismatch & 95th percentile mismatch & Candidate support share & Interval \\
\midrule
0.001 & 24 & yes & 0.00042 & 0.00094 & 0.5686 & $[-0.00374,\ -0.00350]$ \\
0.002 & 34 & yes & 0.00088 & 0.00188 & 0.7438 & $[-0.00664,\ -0.00506]$ \\
0.005 & 58 & yes & 0.00231 & 0.00471 & 0.8993 & $[-0.00819,\ -0.00298]$ \\
0.010 & 82 & yes & 0.00475 & 0.00942 & 0.9268 & $[-0.01055,\ -0.00219]$ \\
\bottomrule
\end{tabular}%
}
\begin{minipage}{0.99\textwidth}
\footnotesize Notes. Panel B fixes the tolerance band. The candidate support share is computed on the network admitted by that band, not on the target domain alone. $\widehat{\Spread}$ is the finite log spread of the recovered multipliers within a fiber; $\widehat{\DemDist}$ is the corresponding transport distance across ages. The cap $K$ is the first value at which the search has reached all retained pairs. The maintained row sets $\varepsilon=0.005$.
\end{minipage}
\end{table}


For the maintained row, the result is not a near miss. At $\varepsilon=0.005$, the normalized finite program is feasible, and the parallel, cycle, band, and relaxation diagnostics are all $0$ at the reported precision. The borrowing kink check then removes target cells whose common support depends on the lower bound. Retained focal support falls from $0.7472$ to $0.7177$, and the identified interval becomes $[-0.00159,0.00090]$. The estimand supported by the sample has changed; the finite grid has not failed. The exercise isolates dependence on support at the boundary from numerical infeasibility or noise in the raw Euler equations.

The exercise with liquid and illiquid assets yields overlap 
$\LiquidIlliquidExactOverlap$ on the fine $k\times\varepsilon$ grid and 
$\LiquidIlliquidTercileOverlap$ when assets are grouped into terciles and 
earnings are matched. The corresponding mean and worst contamination are 
$\LiquidIlliquidTercileMeanKappaPlp$ and 
$\LiquidIlliquidTercileMaxKappaPlp$ percent log points. The point summaries are 
$\Delta_{AB}=\LiquidIlliquidTercilePointDelta$ with asset terciles and earnings, 
$\LiquidIlliquidEarningsOnlyPointDelta$ with earnings alone, and 
$\LiquidIlliquidAgeOnlyPointDelta$ with age alone. The corresponding network 
intervals are 
$[\LiquidIlliquidTercileGraphLower,\LiquidIlliquidTercileGraphUpper]$, 
$[\LiquidIlliquidEarningsOnlyGraphLower,\LiquidIlliquidEarningsOnlyGraphUpper]$, 
and 
$[\LiquidIlliquidAgeOnlyGraphLower,\LiquidIlliquidAgeOnlyGraphUpper]$ percent 
log points. Since all three intervals contain $0$, the richer state space does not identify a stronger sign. The fallback is the rule described above: a match 
at distance $\delta$ is allowed only after the bounds are widened by the declared 
interpolation allowance $L\delta$.




\section{Aggregation and Coarsening}\label{sec:geometry}

This appendix section converts the pairwise restriction into exact-fiber spread, mismatch-adjusted bounds, and contamination from coarsening. Regular conditional laws exist because $\Pop=\Age\times\State$ and the comparison-map codomain are maintained standard Borel.

\subsection{Spread as an aggregation of pairwise gaps}

Let
\[
\Psi_t(x):=(\Comp_t(x),\contv_t(x)),
\qquad
\nu_t:=\Psi_{t\#}\dist_t,
\]
and fix a regular conditional law $\{\dist_{t,g}\}_{g}$ for $X_t\sim\dist_t$ given $\Psi_t(X_t)=g$.

\begin{definition}[Fiberwise and global demographic spread]\label{def:spread}
For date $t$ and fiber value $g$, define
\begin{equation}\label{eq:spread-fiber}
\Spread_t(\ndw\mid g):=
\esssup_{\dist_{t,g}}\log\ndw_t-
\essinf_{\dist_{t,g}}\log\ndw_t.
\end{equation}
The global spread is
\begin{equation}\label{eq:spread}
\Spread_t(\ndw):=
\esssup_{g\sim\nu_t}\Spread_t(\ndw\mid g).
\end{equation}
On finite grids, these are ordinary within fiber ranges and their maximum. Thus $\Spread_t(\ndw)=0$ iff full-fiber conditional residual age neutrality holds at date $t$; tuple-relative neutrality gives the lower-bound statements below without requiring unused cells in the fiber to be neutral.
\end{definition}

\begin{definition}[Exact fiber admissible comparison system]\label{def:comparison-system}
An \emph{exact fiber admissible comparison system} anchored at $t$ is a measurable subprobability kernel $K_t(g,\dd\pi)$ from fiber values into bridge-verified comparison tuples such that, for $\nu_t$ almost everywhere $g$, either $K_t(g,\Pi)=0$ or the normalized kernel is supported on tuples whose two current cells lie in fiber $g$ and whose induced current old and current young marginals are absolutely continuous with respect to $\dist_{t,g}$. Unsupported fibers are assigned the zero kernel.
\end{definition}

For such a system, set $\underline g_t(g;K_t)=0$ when $K_t(g,\Pi)=0$; otherwise define
\begin{equation}\label{eq:g-lower-profile}
\underline g_t(g;K_t):=
\esssup_{\pi\sim K_t(g,\cdot)/K_t(g,\Pi)}
[\log\Qterm(\pi)-\log\mathcal D^S(\pi)]_+
\end{equation}
This wedge-free profile is conservative. With a verified KKT wedge and bridge correction, replace the bracket by $[\log\Qterm-\log\mathcal D^S+\log\CWedgeTerm]_+$. The numerics use the conservative object unless a wedge implementation is named. Define
\begin{equation}\label{eq:g-lower}
\underline g_t(K_t):=
\esssup_{g\sim\nu_t}\underline g_t(g;K_t).
\end{equation}

\begin{proposition}[Fiberwise lower bound profile]\label{prop:fiberwise-bound}
Fix $t$ and an exact fiber admissible comparison system $K_t$. Under \Cref{ass:benchmark,ass:localized-cell,ass:local-derivatives,ass:supporting-prices,ass:represented-domain-consistency,ass:represented-family-bridge} and nonpaternalism on $M$,
\begin{equation}\label{eq:fiberwise-bound}
\Spread_t(\ndw\mid g)\ge \underline g_t(g;K_t)
\qquad\text{for }\nu_t\text{ almost everywhere }g.
\end{equation}
\end{proposition}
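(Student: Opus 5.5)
The plan is to reduce the fiberwise bound to the pointwise exact-matching result, \Cref{thm:impossibility}, applied tuple by tuple, and then pass to essential suprema under the kernel. Fix $g$ outside a $\nu_t$-null set on which the regular conditional law $\dist_{t,g}$ and the kernel $K_t(g,\cdot)$ behave as in \Cref{def:comparison-system}. If $K_t(g,\Pi)=0$, then $\underline g_t(g;K_t)=0$ by convention. The claim then follows because $\Spread_t(\ndw\mid g)\ge0$, being an essential supremum minus an essential infimum of the same function. So assume $K_t(g,\Pi)>0$ and write $\bar K_g:=K_t(g,\cdot)/K_t(g,\Pi)$.

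\textbf{Pointwise step.} Each tuple $\pi=(t,x_t^{\mathrm o},x_t^{\mathrm y},x_s^{\mathrm o})$ in the support of $\bar K_g$ is a bridge-verified exact fiber comparison. For such tuples \Cref{thm:impossibility} gives
\[
\log\ndw_t(x_t^{\mathrm y})-\log\ndw_t(x_t^{\mathrm o})\ge\log\Qterm(\pi)-\log\mathcal D^S(\pi).
\]
The left side is at most $|\log\ndw_t(x_t^{\mathrm y})-\log\ndw_t(x_t^{\mathrm o})|$, and it is nonnegative whenever the bracket is positive. Hence
\[
|\log\ndw_t(x_t^{\mathrm y})-\log\ndw_t(x_t^{\mathrm o})|\ge[\log\Qterm(\pi)-\log\mathcal D^S(\pi)]_+ .
\]
If a wedge and bridge correction are verified, the same argument goes through with $\log\CWedgeTerm$ added, using \Cref{rem:nonpaternalism-sensitivity}.

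\textbf{Passing to the fiber range.} Both current cells lie in fiber $g$. I next want to bound the pairwise gap by $\Spread_t(\ndw\mid g)$, and this is where measure-theoretic care is needed. Let $N_g$ be the set of $x$ in the fiber with $\log\ndw_t(x)$ outside $[\essinf_{\dist_{t,g}}\log\ndw_t,\esssup_{\dist_{t,g}}\log\ndw_t]$. Then $N_g$ is $\dist_{t,g}$-null. By \Cref{def:comparison-system}, the old and young current marginals of $\bar K_g$ are absolutely continuous with respect to $\dist_{t,g}$. Therefore the set of tuples with $x_t^{\mathrm o}\in N_g$ or $x_t^{\mathrm y}\in N_g$ is $\bar K_g$-null. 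Off that null set, both values of $\log\ndw_t$ lie in the essential range, so the pairwise gap is at most $\Spread_t(\ndw\mid g)$. Combining this with the pointwise step, $\Spread_t(\ndw\mid g)\ge[\log\Qterm(\pi)-\log\mathcal D^S(\pi)]_+$ for $\bar K_g$-almost every $\pi$. Taking the essential supremum over $\pi\sim\bar K_g$ gives exactly $\underline g_t(g;K_t)$, as defined in \Cref{eq:g-lower-profile}.

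\textbf{Main obstacle.} The delicate step is version dependence. \Cref{thm:impossibility} is an identity between fixed versions of $\smw_t,\contv_t,\ndw_t$, and those versions are only defined $\dist_t$-almost everywhere, by \Cref{ass:localized-cell}. I will first fix versions on a full-$\dist_t$-measure set $F$. Disintegration then makes $F$ have full $\dist_{t,g}$-measure for $\nu_t$-almost every $g$. Absolute continuity of the kernel marginals then lets the tuplewise identity hold $\bar K_g$-almost surely. Measurability of $g\mapsto\bar K_g$ and of the labels ensures the exceptional $g$-set is $\nu_t$-null. This yields the statement for $\nu_t$-almost every $g$.
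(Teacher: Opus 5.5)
Your proposal is correct and follows the paper's argument: apply \Cref{thm:impossibility} tuple by tuple, use absolute continuity of the kernel's current marginals with respect to $\dist_{t,g}$ to bound the pairwise log gap by the within-fiber essential range, and then take the essential supremum over tuples. Your handling of zero-kernel fibers, the positive part via $|\cdot|\ge[\cdot]_+$, and the choice of versions on a full-measure set spells out steps that the paper's three-line proof leaves implicit.
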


\begin{corollary}[Aggregated spread bound]\label{cor:aggregate-spread}
Under the same assumptions,
\begin{equation}\label{eq:aggregate-spread}
\Spread_t(\ndw)\ge \underline g_t(K_t).
\end{equation}
\end{corollary}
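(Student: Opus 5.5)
The statement to prove is \Cref{cor:aggregate-spread}. The plan is to take the essential supremum over fibers of the fiberwise inequality in \Cref{prop:fiberwise-bound}. Both sides of \Cref{eq:aggregate-spread} are already defined as essential suprema with respect to the same measure $\nu_t$: the global spread $\Spread_t(\ndw)$ in \Cref{eq:spread}, and the aggregated profile $\underline g_t(K_t)$ in \Cref{eq:g-lower}. So the corollary should follow from monotonicity of the essential supremum under almost-everywhere domination.

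First, I would invoke \Cref{prop:fiberwise-bound}, whose hypotheses coincide with those of the corollary. It gives a $\nu_t$-null set $N$ such that
\[
\Spread_t(\ndw\mid g)\ge \underline g_t(g;K_t)\qquad\text{for all } g\notin N.
\]
Second, I would check that both maps $g\mapsto\Spread_t(\ndw\mid g)$ and $g\mapsto\underline g_t(g;K_t)$ are $\nu_t$-measurable, possibly extended-real-valued, so that their $\nu_t$-essential suprema are meaningful. The first map inherits measurability from the regular conditional law $\{\dist_{t,g}\}$ together with the measurability of $\log\ndw_t$. The second inherits it from the measurable kernel $K_t$; I would treat this as implicit in the definitions, since the paper already uses these essential suprema.

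Third, I would apply the elementary fact that if $f\ge h$ $\nu_t$-almost everywhere, then $\esssup_{\nu_t} f\ge\esssup_{\nu_t} h$. To see this, take any $c<\esssup h$. Then $\{h>c\}$ has positive $\nu_t$-measure. Removing the null set $N$ leaves a positive-measure set on which $f>c$, so $\esssup f>c$. Letting $c\uparrow\esssup h$ gives the claim, and the same argument works when either supremum is $+\infty$. Fibers with $K_t(g,\Pi)=0$ have $\underline g_t(g;K_t)=0\le\Spread_t(\ndw\mid g)$, because a spread is a sup minus an inf and hence nonnegative. They therefore cause no difficulty.

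The only potential obstacle is measurability of the fiberwise profiles. If that is doubtful, I would replace essential suprema of possibly nonmeasurable functions by the outer essential supremum. The monotonicity argument above goes through unchanged for that notion, so the inequality holds in either reading.
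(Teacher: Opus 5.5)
Your proposal is correct and matches the paper's proof, which simply takes the $\nu_t$-essential supremum of the fiberwise inequality in \Cref{prop:fiberwise-bound} over $g$. Your monotonicity argument for essential suprema, the remark on unsupported fibers, and the measurability caveat fill in details that the paper leaves implicit.
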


Finite grids add approximation because exact equality of $\contv_t$ is typically unavailable. \Cref{prop:grid-convergence} gives the corresponding grid-refinement discipline when support and component structure are stable.

\subsection{Approximate matching and contamination under coarsened maps}

Let $\Comp_t^\ast$ be the verified current-state map and $\bar\Comp_t=\phi_t\circ\Comp_t^\ast$ a coarsening. And let $B_t:\RR_+\to\mathcal B_t$ bin the private envelope scale (the identity in exact applications, weighted $\log\contv$ bins in the numerics). Define
\[
\bar\Psi_t(x):=(\bar\Comp_t(x),B_t(\contv_t(x))),
\qquad
\bar\nu_t:=\bar\Psi_{t\#}\dist_t,
\]
and fix conditional laws $\bar\dist_{t,\bar g}$. The coarse fiber spread is
\[
\Spread_t^{\bar\Comp}(\ndw\mid\bar g):=
\esssup_{\bar\dist_{t,\bar g}}\log\ndw_t-
\essinf_{\bar\dist_{t,\bar g}}\log\ndw_t.
\]

\begin{definition}[Operative coarse comparison system]
An \emph{operative coarse comparison system} anchored at $t$ and $\bar\Psi_t$ is a measurable subprobability kernel $\bar K_t(\bar g,\dd\pi)$ into bridge-verified ordered tuples retaining the age ordering and future-leg condition of \Cref{def:comparison-triple}, with both current cells in coarse fiber $\bar g$ and normalized current marginals absolutely continuous with respect to $\bar\dist_{t,\bar g}$ when $\bar K_t(\bar g,\Pi)>0$. Unsupported coarse fibers receive zero mass. The term \emph{comparison tuple} remains reserved for exact fiber objects.
\end{definition}

All $\bar K_t$-essential suprema below normalize the relevant positive-mass section and equal $0$ when it has zero mass.

For tuples in $\bar K_t$, define
\begin{equation}\label{eq:coarse-H-term}
\mathcal H(\pi):=
\log\Qterm(\pi)-\log\mathcal D^S(\pi)+\log\CWedgeTerm(\pi),
\end{equation}
ages $a^{\mathrm y}(\pi),a^{\mathrm o}(\pi)$, and mismatch
\begin{equation}\label{eq:coarse-epsilon-term}
\varepsilon(\pi):=
|\log\contv_t(x_t^{\mathrm o})-\log\contv_t(x_t^{\mathrm y})|.
\end{equation}
For every such tuple the coarse-domain pairwise algebra gives
\begin{equation}\label{eq:coarse-domain-pairwise}
\log\frac{\ndw_t(x_t^{\mathrm y})}{\ndw_t(x_t^{\mathrm o})}
\ge \mathcal H(\pi)-\varepsilon(\pi),
\end{equation}
by the raw identity in \Cref{prop:coarse-matching} and the represented bridge $\Rterm^S=\Qterm\CWedgeTerm$. The age-pair-conditioned mismatch-adjusted profile is
\begin{equation}\label{eq:g-lower-profile-eps}
\underline g_t^\varepsilon(a^{\mathrm y},a^{\mathrm o},\bar g;\bar K_t)
:=
\esssup_{\pi\sim\bar K_t(\bar g,\cdot):\,a^{\mathrm y}(\pi)=a^{\mathrm y},\,a^{\mathrm o}(\pi)=a^{\mathrm o}}
[\mathcal H(\pi)-\varepsilon(\pi)]_+,
\end{equation}
with value $0$ when the age pair is absent.

\begin{definition}[Approximate sufficiency of a coarsened comparison map]\label{def:approx-sufficiency}
A coarsening $\bar\Comp_t$ is \emph{$\kappa$-approximately sufficient} relative to $\bar K_t$ if there is a measurable age-pair-conditioned benchmark $\bar h_t$ and $\kappa\ge0$ such that
\begin{equation}\label{eq:approx-sufficiency-def}
|\mathcal H(\pi)-\bar h_t(a^{\mathrm y}(\pi),a^{\mathrm o}(\pi),\bar g)|\le\kappa
\end{equation}
for $\bar\nu_t$ almost everywhere $\bar g$ and $\bar K_t(\bar g,\cdot)$ almost everywhere $\pi$.
\end{definition}

Define
\begin{equation}\label{eq:coarse-eps-radius}
\bar\varepsilon_t(a^{\mathrm y},a^{\mathrm o},\bar g):=
\esssup_{\pi\sim\bar K_t(\bar g,\cdot):\,a^{\mathrm y}(\pi)=a^{\mathrm y},\,a^{\mathrm o}(\pi)=a^{\mathrm o}}\varepsilon(\pi),
\end{equation}
again $0$ if the age pair is absent. Let $\widetilde\Psi_t(\pi):=(a^{\mathrm y}(\pi),a^{\mathrm o}(\pi),\bar\Psi_t(x_t^{\mathrm y}))$ and let $\widetilde\nu_t^{\bar K}$ be the finite pushforward measure induced by $\bar\nu_t(\dd\bar g)\bar K_t(\bar g,\dd\pi)$; if its total mass is $0$, the following statements are vacuous.

\begin{proposition}[Coarsening contamination bound]\label{prop:approx-sufficiency-bound}
If $\bar\Comp_t$ is $\kappa$-approximately sufficient relative to $\bar K_t$, then under the maintained global assumptions and nonpaternalism on $M$,
\begin{equation}\label{eq:coarse-spread-bound}
\Spread_t^{\bar\Comp}(\ndw\mid\bar g)
\ge
[\bar h_t(a^{\mathrm y},a^{\mathrm o},\bar g)-\bar\varepsilon_t(a^{\mathrm y},a^{\mathrm o},\bar g)-\kappa]_+
\end{equation}
for $\widetilde\nu_t^{\bar K}$ almost everywhere $(a^{\mathrm y},a^{\mathrm o},\bar g)$. Consequently,
\begin{equation}\label{eq:coarse-spread-global-bound}
\esssup_{(a^{\mathrm y},a^{\mathrm o},\bar g)\sim\widetilde\nu_t^{\bar K}}\Spread_t^{\bar\Comp}(\ndw\mid\bar g)
\ge
\esssup_{(a^{\mathrm y},a^{\mathrm o},\bar g)\sim\widetilde\nu_t^{\bar K}}
[\bar h_t(a^{\mathrm y},a^{\mathrm o},\bar g)-\bar\varepsilon_t(a^{\mathrm y},a^{\mathrm o},\bar g)-\kappa]_+.
\end{equation}
Relative to \Cref{cor:approx-impossibility}, $\kappa$ is the extra loss from coarsening the current map.
\end{proposition}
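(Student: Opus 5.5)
The plan is to reduce the statement to the pairwise inequality \eqref{eq:coarse-domain-pairwise}, which holds for every tuple in the support of $\bar K_t$. Then I convert that pointwise inequality into a statement about the coarse fiber's essential range.

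\textbf{Step 1: a pointwise lower bound per tuple.} Fix a triple $(a^{\mathrm y},a^{\mathrm o},\bar g)$ in the support of $\widetilde\nu_t^{\bar K}$. Take a tuple $\pi$ in the positive-mass section $\bar K_t(\bar g,\cdot)$ with these ages. By \Cref{prop:coarse-matching} and the represented bridge $\Rterm^S=\Qterm\CWedgeTerm$, which \Cref{prop:implementability} gives under nonpaternalism, we have
\[
\log\ndw_t(x_t^{\mathrm y})-\log\ndw_t(x_t^{\mathrm o})\ge\mathcal H(\pi)-\varepsilon(\pi).
\]
Approximate sufficiency \eqref{eq:approx-sufficiency-def} gives $\mathcal H(\pi)\ge\bar h_t-\kappa$ almost everywhere. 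The definition \eqref{eq:coarse-eps-radius} gives $\varepsilon(\pi)\le\bar\varepsilon_t$ almost everywhere. Combining these, for $\bar K_t(\bar g,\cdot)$-almost every $\pi$ with the given ages,
\[
\log\ndw_t(x_t^{\mathrm y})-\log\ndw_t(x_t^{\mathrm o})\ge\bar h_t-\bar\varepsilon_t-\kappa.
\]

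\textbf{Step 2: transferring the bound to the essential range.} Both current cells lie in the coarse fiber $\bar g$. The normalized current marginals of $\bar K_t$ are absolutely continuous with respect to $\bar\dist_{t,\bar g}$. Hence $\bar K_t$-almost every tuple has both $\log\ndw_t(x_t^{\mathrm y})\le\esssup_{\bar\dist_{t,\bar g}}\log\ndw_t$ and $\log\ndw_t(x_t^{\mathrm o})\ge\essinf_{\bar\dist_{t,\bar g}}\log\ndw_t$.

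The justification is direct. The set where $\log\ndw_t$ exceeds the essential supremum is $\bar\dist_{t,\bar g}$-null. By absolute continuity, it therefore has null marginal mass, so only a null set of tuples has its young cell there; the same holds for the old cell and the essential infimum.

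Choosing one tuple outside these null sets, the difference of extremes dominates the pairwise gap. This gives $\Spread_t^{\bar\Comp}(\ndw\mid\bar g)\ge\bar h_t-\bar\varepsilon_t-\kappa$. Since the spread is nonnegative, the positive part follows, which is \eqref{eq:coarse-spread-bound}.

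\textbf{Step 3: the global bound.} The global statement \eqref{eq:coarse-spread-global-bound} follows from the pointwise one. An almost-everywhere inequality between two measurable functions of $(a^{\mathrm y},a^{\mathrm o},\bar g)$ is preserved by taking $\widetilde\nu_t^{\bar K}$-essential suprema. The left side depends on $\bar g$ only, so it is measurable on the triple space.

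\textbf{Main obstacle: the measure-theoretic bookkeeping in Steps 1 and 2.} Statements holding for $\bar K_t$-almost every $\pi$ must become statements for $\widetilde\nu_t^{\bar K}$-almost every triple. This needs the triple's age section to carry positive conditional mass whenever the triple is in the support of the pushforward. If needed, I would disintegrate $\bar\nu_t(\dd\bar g)\bar K_t(\bar g,\dd\pi)$ along $\widetilde\Psi_t$, which is possible because the spaces are standard Borel. I would then argue on each conditional section, using the convention that empty or null sections give value $0$. In that case the claimed inequality is trivial, since the spread is nonnegative.
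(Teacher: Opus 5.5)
Your argument is correct and is essentially the paper's proof written out in more detail: the pairwise inequality \eqref{eq:coarse-domain-pairwise}, the losses $\kappa$ and $\bar\varepsilon_t$ from approximate sufficiency and the mismatch radius, the spread of the single coarse fiber containing both current cells dominating the pair gap (your absolute-continuity step makes this precise), and then positive parts and essential suprema. One slip in your last paragraph: the inequality is not trivial on age-pair sections of zero $\bar K_t(\bar g,\cdot)$-mass, because the convention sets $\bar\varepsilon_t$ to $0$ there but not $\bar h_t$, so the right side can be positive. You never need those triples: $\widetilde\nu_t^{\bar K}$ is the pushforward of $\bar\nu_t(\dd\bar g)\bar K_t(\bar g,\dd\pi)$ under $\widetilde\Psi_t$ and the age set is finite, so such triples form a $\widetilde\nu_t^{\bar K}$-null set. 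For the same reason you should argue for $\widetilde\nu_t^{\bar K}$-almost every triple rather than for every triple in its support.
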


\begin{remark}[Operational sign certification]\label{rem:sign-certification}
A comparison system is sign-informative only when \Cref{eq:coarse-spread-bound} is strictly positive on operative fibers and the finite graph has a nonempty feasible potential set. Report overlap, mismatch, contamination, edge/component counts, feasibility or cycle/parallel-label residuals, and interval width; if the interval contains $0$, refine the grid or partition, or declare non-certification.
\end{remark}

\begin{corollary}[Primitive Lipschitz control of contamination]\label{cor:primitive-kappa}
Suppose there is a metric $d_t$ on omitted current state objects and $L_t<\infty$ such that tuples $\pi,\pi'$ in the same coarse fiber and age pair satisfy
\[
|\mathcal H(\pi)-\mathcal H(\pi')|\le L_t d_t(\pi,\pi').
\]
This is primitive in omitted current-state traits only when the candidate social discount schedule and KKT wedge have separate Lipschitz control; otherwise it is a regularity assumption on composite $\mathcal H$.

Let
\[
\begin{aligned}
\Delta_t(a^{\mathrm y},a^{\mathrm o},\bar g):=
\sup\{d_t(\pi,\pi'):
&\ \pi,\pi'\text{ lie in the support of }\bar K_t(\bar g,\cdot),\\
&\ a^{\mathrm y}(\pi)=a^{\mathrm y}(\pi')=a^{\mathrm y},\
\ a^{\mathrm o}(\pi)=a^{\mathrm o}(\pi')=a^{\mathrm o}\},
\end{aligned}
\]
with value $0$ when empty. Then $\bar\Comp_t$ is $\kappa$-approximately sufficient for any
\[
\kappa\ge
\esssup_{(a^{\mathrm y},a^{\mathrm o},\bar g)\sim\widetilde\nu_t^{\bar K}}
\frac12 L_t\Delta_t(a^{\mathrm y},a^{\mathrm o},\bar g).
\]
\end{corollary}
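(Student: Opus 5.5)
The plan is to verify the definition of $\kappa$-approximate sufficiency (\Cref{def:approx-sufficiency}) directly, with $\bar h_t$ chosen as a midpoint benchmark. Fix an age pair $(a^{\mathrm y},a^{\mathrm o})$ and a coarse fiber $\bar g$ in the support of $\widetilde\nu_t^{\bar K}$. Let $S=S(a^{\mathrm y},a^{\mathrm o},\bar g)$ be the set of tuples in the support of $\bar K_t(\bar g,\cdot)$ with these ages. If $S$ is empty or has zero mass, set $\bar h_t=0$; the condition is then vacuous on that section. Otherwise set
\[
\bar h_t(a^{\mathrm y},a^{\mathrm o},\bar g):=\tfrac12\Bigl(\sup_{\pi\in S}\mathcal H(\pi)+\inf_{\pi\in S}\mathcal H(\pi)\Bigr),
\]
or the analogous essential supremum and infimum under the normalized kernel. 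This is the Chebyshev center of the range of $\mathcal H$ on the section.

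First step: bound the oscillation of $\mathcal H$ on $S$. For any $\pi,\pi'\in S$, the Lipschitz hypothesis and the definition of $\Delta_t$ as a supremum of $d_t$ over exactly such pairs give
\[
|\mathcal H(\pi)-\mathcal H(\pi')|\le L_t\Delta_t(a^{\mathrm y},a^{\mathrm o},\bar g).
\]
Hence $\sup_S\mathcal H-\inf_S\mathcal H\le L_t\Delta_t$. Both quantities are finite whenever $\Delta_t<\infty$, because one finite value of $\mathcal H$ then bounds all the others. If $\Delta_t=\infty$ on a positive-mass section, the right-hand side of the required inequality is infinite and there is nothing to prove.

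Second step: conclude the pointwise bound. Every $\pi\in S$ satisfies
\[
|\mathcal H(\pi)-\bar h_t|\le\tfrac12\bigl(\sup_S\mathcal H-\inf_S\mathcal H\bigr)\le\tfrac12 L_t\Delta_t(a^{\mathrm y},a^{\mathrm o},\bar g).
\]
The right side is at most the essential supremum over $\widetilde\nu_t^{\bar K}$ of $\tfrac12L_t\Delta_t$, which is at most $\kappa$. This holds for $\widetilde\nu_t^{\bar K}$-almost every section, hence for $\bar\nu_t$-almost every $\bar g$ and $\bar K_t(\bar g,\cdot)$-almost every $\pi$, which is \Cref{eq:approx-sufficiency-def}. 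The Lipschitz inequality gives the factor $1$; the midpoint choice halves it to $\tfrac12$.

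The main obstacle is measurability of $\bar h_t$ as a function of $(a^{\mathrm y},a^{\mathrm o},\bar g)$. Pointwise suprema over supports need not be measurable. I would first replace $\sup$ and $\inf$ by essential suprema and infima of $\mathcal H$ under the regular conditional kernel $\bar K_t(\bar g,\cdot)$ restricted to the age pair. Ages take finitely many values, and the spaces are standard Borel, as maintained in this appendix. With these choices, $\bar g\mapsto\operatorname{ess\,sup}$ is measurable by the usual $L^p$-norm limit argument ($p\to\infty$) applied to bounded truncations of $\mathcal H$. The oscillation bound still holds for the essential extrema, since they lie within the closure of the values of $\mathcal H$ on the support. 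The remaining steps are unchanged.
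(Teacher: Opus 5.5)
Your proposal is correct and matches the paper's proof: within each coarse fiber and age pair, the Lipschitz hypothesis bounds the oscillation of $\mathcal H$ by $L_t\Delta_t$, and choosing $\bar h_t$ as the midpoint of that range puts every value within $\tfrac12 L_t\Delta_t\le\kappa$. Your extra step, which replaces pointwise extrema by essential extrema under $\bar K_t(\bar g,\cdot)$ to make $\bar h_t$ measurable, is a sound refinement that the paper leaves implicit.
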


Section~\ref{sec:illustrations} reports support, mismatch, contamination, diagnostic envelopes, and maintained point-summary diagnostics in that order; the replication outputs retain the fuller partition diagnostic.



\section{Proofs}\label{app:proofs}

This appendix collects algebra used outside the main text. Maintained primitives are \Cref{ass:benchmark,ass:localized-cell,ass:local-derivatives,ass:supporting-prices,ass:represented-domain-consistency,ass:represented-family-bridge}; result-specific hypotheses are added when invoked. Conditional objects are on positive support.

\subsection{Numerical accuracy and replication diagnostics}\label{app:numerical-accuracy}

The replication archive records the environment, run records, and generated table sources. The manuscript reports the diagnostics needed for the local quantitative test: grid refinement, private-unit mismatch, current-map dispersion, boundary classification, KKT wedge moments, bridge thresholds, raw-Euler stress rows, graph feasibility, bands, cycles, components, and minimal relaxation. The raw-Euler row in \Cref{tab:unit-wedge-protocol} is a stress test; the maintained wedge is the KKT-consistent projection.

\subsection{Regularity}

The proofs use measurability of comparison maps, continuation correspondences, and state-price kernels; existence of the stated derivatives; and regular conditional probabilities for \Cref{sec:geometry}. Standard Borel structure supplies the last step.

\begin{lemma}[Standard Borel disintegration]\label{lem:standard-borel-disintegration}
Let $\Psi_t(x)=(\Comp_t(x),\contv_t(x))$ be measurable from a standard Borel space $\Pop$ to a standard Borel codomain $\CompSpace\times\RR_+$. Then there is a regular conditional law $\{\dist_{t,g}\}_g$ for $X_t\sim\dist_t$ given $\Psi_t(X_t)=g$, unique for $\nu_t:=\Psi_{t\#}\dist_t$ almost every $g$, and
\[
\int f(x)\dist_t(\dd x)=\int\left[\int f(x)\dist_{t,g}(\dd x)\right]\nu_t(\dd g)
\]
for every integrable measurable $f$.
\end{lemma}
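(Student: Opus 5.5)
The plan is to reduce the claim to the product-measure disintegration theorem for standard Borel spaces. I do not expect to need any structure specific to the welfare problem.

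\emph{Step 1: measurability and the pushforward.} First I record that $\Pop=\Age\times\State$ is standard Borel. $\Age$ is finite and $\State$ is a Borel subset of Euclidean space, and a Borel subset of a Polish space with the relative Borel $\sigma$-algebra is standard Borel; products of standard Borel spaces are standard Borel. The codomain $\CompSpace\times\RR_+$ is standard Borel by hypothesis. Since $\Psi_t$ is measurable, the pushforward $\nu_t=\Psi_{t\#}\dist_t$ is a Borel probability measure on the codomain.

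\emph{Step 2: existence of the kernel.} Next I consider the graph map $x\mapsto(x,\Psi_t(x))$ and the joint law $\gamma=(\mathrm{id},\Psi_t)_\#\dist_t$ on $\Pop\times(\CompSpace\times\RR_+)$. The second marginal of $\gamma$ is $\nu_t$. The standard disintegration theorem applies because $\Pop$ is standard Borel: for a probability measure on a product of standard Borel spaces there is a probability kernel from the second factor to the first. This is Kallenberg's Theorem 6.3, or equivalently the existence of regular conditional distributions given a random element in a Borel space. It yields a kernel $g\mapsto\dist_{t,g}$ with $\gamma(\dd x,\dd g)=\dist_{t,g}(\dd x)\nu_t(\dd g)$. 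Integrating a function of $x$ alone against both sides gives the displayed identity, first for indicators, then simple functions, then nonnegative measurable $f$ by monotone convergence, and finally integrable $f$ by splitting $f=f^+-f^-$.

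\emph{Step 3: concentration on fibers and uniqueness.} Concentration follows because $\gamma$ is supported on the graph of $\Psi_t$, which is measurable since the codomain has measurable diagonal. Hence $\dist_{t,g}(\Psi_t^{-1}(g))=1$ for $\nu_t$-a.e. $g$, so the kernel is a conditional law given $\Psi_t(X_t)=g$. For uniqueness, take two such kernels. They agree on every set of the form $B\times C$, so for each $B$ in a countable generating $\pi$-system of $\mathcal B(\Pop)$ we have $\dist_{t,g}(B)=\dist'_{t,g}(B)$ for $\nu_t$-a.e. $g$. A countable generating system exists because $\Pop$ is standard Borel, and the union of the countably many exceptional null sets is still null. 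Dynkin's $\pi$-$\lambda$ theorem then gives agreement of the measures off that null set.

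The main obstacle is only bookkeeping. One must make sure $\Pop$ is genuinely standard Borel, which relies on the Borel-subset assumption on $\State$, and use countable generation in the uniqueness step. Everything else is a citation of the classical theorem.
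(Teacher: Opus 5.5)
Your proposal is correct and follows the same route as the paper, whose proof simply cites the regular-conditional-probability theorem for standard Borel spaces and then applies the law of iterated expectations. Your added detail fills in bookkeeping the paper leaves implicit: the extension from indicators to integrable $f$, and uniqueness via a countable generating $\pi$-system. You correctly apply that uniqueness argument to the joint disintegration of $(\mathrm{id},\Psi_t)_\#\dist_t$ rather than to the displayed identity alone, which a constant kernel would also satisfy.
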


\begin{proof}
This is the standard regular conditional probability theorem for standard Borel spaces, followed by the law of iterated expectations.
\end{proof}

\subsection{Verification of current maps and bridges}

\begin{proof}[Proof of \Cref{prop:margin-sufficient-map}]
Part (i) is margin sufficiency. For (ii), merging distinct verified codes destroys inherited sufficiency; the coarsening either fails on merged support points or requires a new check. Part (iii) follows because equality of $(\Comp_t^M,W_t)$ implies equality of $\Comp_t^M$, and adding $W_t$ only removes matches.
\end{proof}

\begin{proof}[Proof of \Cref{cor:one-asset-map}]
Fix an age and transition-row class, and let $w_a(b,\varepsilon_k)=y_a(\varepsilon_k)+(1+r)b$. The continuation problem and one-step risk-free leg are the same scalar problem in $w_a$. Strict monotonicity of $\contv_a$ in $w_a$ makes equal $\contv_a$ recover equal $w_a$, hence the same next asset and loading. Different transition rows use different Euler weights, so coarsening across rows requires a separate check; distinct rows give the earnings state.
\end{proof}

\begin{proof}[Proof of \Cref{cor:liquid-illiquid-bridge}]
Apply \Cref{prop:supporting-prices-broad} to the convex two-asset problem. The tuple augments liquid state with illiquid holdings and adjustment-cost derivatives; liquid and illiquid KKT multipliers enter through loadings. Convexity gives the supporting-price inequality, slackness gives the wedge-free private case, and \Cref{ass:represented-domain-consistency,ass:represented-family-bridge} plus nonpaternalism give the social equality as in \Cref{prop:implementability}.
\end{proof}

\begin{proof}[Proof of \Cref{lem:fiber-nonempty}]
Continuity and strict monotonicity make each $b\mapsto\contv_t(a,b,u)$ one-to-one onto an interval. Any value in the overlap has one preimage at each age, producing a shared matched set $(u,v)$. Admissibility of a local comparison additionally requires the continuation-support condition imposed in \Cref{def:comparison-triple}.
\end{proof}

\begin{proof}[Proof of \Cref{prop:supporting-prices-broad}]
For a gross future leg $R$, the private derivative is
\[
-\contv_t(x_t^{\mathrm y})+\frac{R}{\Qterm(\pi)}\left(\contv_t(x_t^{\mathrm y})-\sum_j\lambda_j(\pi)\alpha_j(\pi)\right).
\]
Setting it to zero gives \Cref{eq:private-comp-broad}. Positivity of the bracket is the admissibility condition $\sum_j\lambda_j(\pi)\alpha_j(\pi)<\contv_t(x_t^{\mathrm y})$; $\Xi(\pi):=\sum_j\lambda_j(\pi)\alpha_j(\pi)$ is the reduced-form KKT loading.
\end{proof}

\begin{lemma}[Private compensation representation]\label{lem:supporting-prices}
For a represented family that removes one current unit and adds a priced future leg, the private indifference factor is $\widehat R^P(\pi)=\Qterm(\pi)\WedgeTerm(\pi)$ with $\WedgeTerm(\pi)\ge1$. Equality holds exactly when $\Xi(\pi)=0$; locally spanned slack perturbations have this property, and strict loading/complementarity within that class gives the converse.
\end{lemma}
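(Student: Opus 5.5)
The statement to prove is \Cref{lem:supporting-prices}. The plan is to reduce it to \Cref{prop:supporting-prices-broad} together with the definitions of $\WedgeTerm$ and $\Xi$. Throughout, we work with the represented family $\{\delta_R^\pi\}_{R\ge0}$. This family removes one current unit at $x_t^{\mathrm y}$ and adds the priced future leg at scale $R$, as in \Cref{def:represented-family}.

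\textbf{Step 1: the private derivative.} Under the supporting-price/KKT representation of \Cref{ass:supporting-prices}, the private derivative along $\delta_R^\pi$ is
\[
-\contv_t(x_t^{\mathrm y})+\frac{R}{\Qterm(\pi)}\bigl(\contv_t(x_t^{\mathrm y})-\Xi(\pi)\bigr),
\]
where $\Xi(\pi)=\sum_j\lambda_j(\pi)\alpha_j(\pi)\in[0,\contv_t(x_t^{\mathrm y}))$ is the sign-normalized KKT loading. This expression is affine and strictly increasing in $R$, because $\contv_t-\Xi>0$. So it has a unique root, and $\widehat R^P(\pi)$ is well defined. Solving for the root gives $\widehat R^P(\pi)=\Qterm(\pi)[1-\Xi(\pi)/\contv_t(x_t^{\mathrm y})]^{-1}$. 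By the definition in \Cref{prop:implementability}, this is $\Qterm(\pi)\WedgeTerm(\pi)$.

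\textbf{Step 2: the bound $\WedgeTerm\ge1$.} Since $0\le\Xi<\contv_t$, the bracket $1-\Xi/\contv_t$ lies in $(0,1]$. Its inverse is therefore at least one, which gives $\WedgeTerm(\pi)\ge1$.

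\textbf{Step 3: the equality case.} The map $\Xi\mapsto[1-\Xi/\contv_t]^{-1}$ is strictly increasing on $[0,\contv_t)$. Hence $\WedgeTerm(\pi)=1$ exactly when $\Xi(\pi)=0$, which is the stated equality claim.

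For locally spanned slack perturbations, the argument goes as follows. Slackness means every active multiplier is zero, $\lambda_j=0$. Spanning means the leg is purchasable at the supporting price, so $\Qterm$ equals the inverse local price by \Cref{def:market-factor}. Together these give $\Xi=\sum_j\lambda_j\alpha_j=0$.

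For the converse within this class, suppose some constraint binds with positive loading. Strict complementarity gives $\lambda_j>0$ on active constraints, and strict loading gives $\alpha_j>0$ whenever the transfer moves that constraint. Then $\Xi>0$, and so $\WedgeTerm>1$.

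\textbf{Main obstacle.} The main obstacle is Step 1, specifically justifying that the private derivative has this form. One must show that a unit of the future leg is valued at $(\contv_t-\Xi)/\Qterm$ in current private units. I would derive this from the envelope theorem applied to the household's Lagrangian in the convex class of \Cref{ass:ha-class}. The argument has three parts. Slater's condition guarantees that KKT multipliers exist. The first-order condition for the choice that finances the leg equates $\contv_t$ with the marginal continuation value per unit of leg, scaled by the price $\Qterm$, plus the loaded multipliers. Perturbing $m$ and the leg jointly then yields the displayed affine derivative.

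The remaining points are bookkeeping. One is that survival probabilities sit inside the leg weights. Another is that the sign normalization of the $\alpha_j$ makes $\Xi\ge0$.
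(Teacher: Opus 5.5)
Your proposal is correct and follows the paper's route. The paper's proof is a one-line reduction to \Cref{prop:supporting-prices-broad}: solve the affine private derivative in $R$ for its root and set $\WedgeTerm(\pi)=[1-\Xi(\pi)/\contv_t(x_t^{\mathrm y})]^{-1}$, with your Steps 2--3 just spelling out the monotonicity behind $\WedgeTerm\ge1$ and the equality case. The envelope-theorem derivation you flag as the main obstacle is not needed, because the paper takes the supporting-price/KKT form of the private derivative as a maintained hypothesis (\Cref{ass:supporting-prices} and the premise of \Cref{prop:supporting-prices-broad}), and that assumption also directly gives $\Xi=0$ for locally spanned slack perturbations.
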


\begin{proof}
This is \Cref{prop:supporting-prices-broad} with $\WedgeTerm(\pi)=[1-\Xi(\pi)/\contv_t(x_t^{\mathrm y})]^{-1}$.
\end{proof}

\begin{proof}[Proof of \Cref{prop:implementability}]
\Cref{lem:supporting-prices} gives $\widehat R^P=\Qterm\WedgeTerm$. Nonpaternalism and \Cref{ass:represented-family-bridge} identify this factor with $\Rterm^S$, so $\Rterm^S=\Qterm\WedgeTerm\ge\Qterm$. Equality requires $\Xi=0$; locally spanned slack perturbations are the primitive wedge-free case, with strict complementarity giving the converse.
\end{proof}

\begin{proof}[Proof of \Cref{cor:market-value}]
If the leg is locally purchasable at state-price $q_{t,s}$, then $\Qterm(\pi)=q_{t,s}(x_s^{\mathrm o}\mid x_t^{\mathrm y})^{-1}$. Substitute this into \Cref{eq:R-market-wedge}. The equality statement is the equality statement in \Cref{prop:implementability}.
\end{proof}

\begin{proof}[Proof of \Cref{prop:policy-margin-reduction}]
By \Cref{prop:pcu-representation}, each derivative equals its $\PVEU$ target integral minus the common financing integral. Subtracting cancels financing and gives \Cref{eq:policy-margin-reduction}. Equal target mass makes the sign the sign of target-average differences.
\end{proof}

\begin{proof}[Proof of \Cref{prop:common-support-selection}]
Decompose the signed target measure over $S$ and $S^c$. The $S$ term is identified; omitted mass can overturn signs. Bounds $\ndw_t\in[\underline\eta,\overline\eta]$ give the displayed sharp outer interval by assigning endpoints to omitted $A$ and $B$ mass.
\end{proof}

\subsection{Pairwise algebra and graph restrictions}

\begin{proof}[Proof of \Cref{prop:coarse-matching}]
Start from $\mathcal D^S=(\smw_t(x_t^{\mathrm o})/\smw_t(x_t^{\mathrm y}))\Rterm^S$ and substitute $\smw_t=\ndw_t\contv_t$ without canceling $\contv_t$. Taking logs gives \Cref{eq:app-coarse-identity}.
\end{proof}

\begin{proof}[Proof of \Cref{cor:approximate-matching}]
The mismatch assumption bounds the last term in \Cref{eq:app-coarse-identity} between $-\varepsilon$ and $\varepsilon$.
\end{proof}

\begin{proof}[Proof of \Cref{prop:coarse-ambiguity}]
Normalize $\smw_s(x_s^{\mathrm o})=\smw_t(x_t^{\mathrm o})=\contv_t(x_t^{\mathrm y})=1$, set $\smw_t(x_t^{\mathrm y})=e^\gamma$ and $\contv_t(x_t^{\mathrm o})=e^{-\delta}$, and take $\delta>\gamma>0$. Then \Cref{eq:app-coarse-identity} gives residual log ratio $\gamma-\delta<0$, while an exact fiber has private-scale ratio one and residual log ratio $\gamma>0$.
\end{proof}

\begin{proof}[Proof of \Cref{cor:chain-decomposition}]
Apply \Cref{thm:mr} block by block and multiply. Singleton future legs linked to the next older current cell cancel intermediate welfare weights and give the endpoint ratio; portfolio legs need not. Product bounds and log-error addition follow.
\end{proof}

\begin{proof}[Proof of \Cref{thm:impossibility}]
\Cref{thm:mr} gives $\ndw_t(x_t^{\mathrm y})/\ndw_t(x_t^{\mathrm o})=\Rterm^S(\pi)/\mathcal D^S(\pi)$. \Cref{prop:implementability} sets $\Rterm^S=\Qterm\WedgeTerm$ in the baseline, yielding the displayed log equality; with bridge correction replace $\WedgeTerm$ by $\CWedgeTerm$. Since $\WedgeTerm\ge1$, $\mathcal D^S<\Qterm$ implies positive residual tilt, contradicting neutrality on that fiber.
\end{proof}

\begin{proof}[Proof of \Cref{prop:exact-edge-weight}]
This is the displayed equality in the proof of \Cref{thm:impossibility}, written as a directed edge weight. With $\BridgeCorr\neq1$, the same equality uses the corrected bridge relation $\Rterm^S=\Qterm\WedgeTerm\BridgeCorr$. The weak lower bound that drops $\log\WedgeTerm\ge0$ is a baseline $\BridgeCorr\equiv1$ statement only.
\end{proof}

\begin{proof}[Proof of \Cref{prop:imposed-discount-rationalization}]
Necessity is telescoping on closed walks. Same-edge witness labels must agree because each equals $\log\widetilde\eta(x^{\mathrm y})-\log\widetilde\eta(x^{\mathrm o})$. Conversely, impose parallel-label consistency, anchor each component, and define $\log\widetilde\eta(x)$ as minus the signed $\bar g$ path sum from the anchor to $x$. Cycle consistency gives path independence; exponentiation yields positive multipliers, unique up to component constants. Countable/measurable graphs use the same construction under integrable path sums.
\end{proof}

\begin{proof}[Proof of \Cref{thm:path-independence}]
\Cref{prop:exact-edge-weight} writes each edge as the tail-head difference in $\log\ndw$. Path sums telescope; cycles sum to zero; paths with common endpoints agree. A spanning-tree recursion with one anchor recovers the component potential, and changing the anchor rescales all levels by one constant.
\end{proof}

\begin{proof}[Proof of \Cref{cor:component-normalization}]
Choose an anchor level in each component and assign every other level by summing exact edge weights along any path from the anchor. \Cref{thm:path-independence} makes the assignment path independent. A different anchor multiplies all component levels by the same constant, leaving ratios unchanged.
\end{proof}

\begin{proof}[Proof of \Cref{cor:approx-impossibility}]
Combine \Cref{cor:approximate-matching} with $\Rterm^S=\Qterm\CWedgeTerm$ to get $\log(\ndw_t(x_t^{\mathrm y})/\ndw_t(x_t^{\mathrm o}))\ge \log\Qterm-\log\mathcal D^S+\log\CWedgeTerm-\varepsilon$. The stated sufficient condition makes the right side positive; in the baseline, dropping $\log\WedgeTerm\ge0$ gives the simpler sufficient condition.
\end{proof}

\begin{proof}[Proof of \Cref{prop:approx-chain-bound}]
For each approximate edge, \Cref{ass:represented-family-bridge} and \Cref{prop:implementability} give $\widetilde g_t^\ell=\log\smw_t(x^{\ell-1})-\log\smw_t(x^\ell)$. Sums telescope, so endpoint-equivalent chains agree. Since $\smw_t=\ndw_t\contv_t$, the residual-multiplier error is the endpoint private-scale difference, bounded by the edge mismatches sum.
\end{proof}

\begin{proof}[Proof of \Cref{cor:trilemma}]
If nonpaternalism holds and $\mathcal D^S<\Qterm$ on positive measure, \Cref{thm:impossibility} forces positive residual age tilt, contradicting neutrality. Conversely, nonpaternalism plus neutrality sets the left side of \Cref{eq:impossibility-ratio} to zero, so $\mathcal D^S=\Qterm\WedgeTerm\ge\Qterm$ almost everywhere.
\end{proof}

\subsection{Aggregation, contamination, and computation}

\begin{proof}[Proof of \Cref{prop:fiberwise-bound}]
For every tuple in fiber $g$, \Cref{thm:impossibility} gives \Cref{eq:impossibility-ratio}. Absolute continuity of the current marginals with respect to $\dist_{t,g}$ bounds the right side by the within-fiber essential range of $\log\ndw_t$; taking the tuple essential supremum gives the result.
\end{proof}

\begin{proof}[Proof of \Cref{cor:aggregate-spread}]
Take the essential supremum of \Cref{eq:fiberwise-bound} over $g\sim\nu_t$.
\end{proof}

\begin{proof}[Proof of \Cref{prop:approx-sufficiency-bound}]
Apply \Cref{eq:coarse-domain-pairwise}. Approximate sufficiency loses at most $\kappa$ and mismatch at most $\bar\varepsilon_t$; the two current cells lie in one coarse fiber, whose spread bounds the left side. Positive parts and essential suprema give the displays.
\end{proof}

\begin{proof}[Proof of \Cref{cor:primitive-kappa}]
Within a fixed coarse fiber and age pair, $\mathcal H$ oscillates by at most $L_t\Delta_t$. Taking the midpoint as $\bar h_t$ puts every value within half that oscillation, yielding the displayed admissible $\kappa$.
\end{proof}

\begin{proof}[Proof of \Cref{prop:constraint-amplification}]
Let $S$ be the continuation marginal-utility sum. The Euler equation $u'(c_a)=\beta p_a(1+r)S+\lambda_a(i,k)$ and the derivative $-u'(c_a)+\beta p_aRS$ give $\widehat R_a^P=(1+r)\WedgeTerm_a^{(i,k)}$, with $\WedgeTerm_a^{(i,k)}=[1-\lambda_a(i,k)/u'(c_a)]^{-1}\ge1$ and $\Qterm_a^{(i,k)}=1+r$. \Cref{prop:coarse-matching,prop:implementability} imply the claim.
\end{proof}

\begin{proof}[Proof of \Cref{prop:grid-convergence}]
After component normalization, the finite program is a closed system of edge-band inequalities. Assumptions (i) to (iv) give outer convergence of constraints, labels, and target weights, exclude persistent spurious constraints, and keep feasible sets compact. Assumption (v) gives nonemptiness and inner convergence, ruling out vanishing-cycle artifacts. Hence anchored feasible potential sets converge in Hausdorff distance. The target functional in \Cref{eq:finite-graph-objective} is continuous on the common compact envelope, so the interval follows by the maximum and minimum theorem.
\end{proof}


\clearpage
\begingroup
\singlespacing
\small



\endgroup

\end{document}